\def\enddoc@text{%
}
\apptocmd{\@enddocumenthook}{
    \vspace{-2cm}
    \nobreak%
    \insert\copyins{\hsize.58\textwidth
\vbox to 0pt{\vskip12 pt%
      \fontsize{6}{7\p@}\normalfont\upshape
      \everypar{}%
      \noindent\fontencoding{T1}%
  \headertextsf{This work is licensed under the Creative Commons
  Attribution License. To view a copy of this license, visit
  \texttt{https://creativecommons.org/licenses/by/4.0/} or send a
  letter to Creative Commons, 171 Second St, Suite 300, San Francisco,
    CA 94105, USA, or Eisenacher Strasse~2, 10777 Berlin, Germany}\vss}
      \par
      \kern\z@}%
}
\renewcommand{\int}{\ensuremath{\mathbb{Z}}}
\newcommand{\one}{\ensuremath{1}}
\newcommand{\zero}{\ensuremath{0}}
\newcommand{\comp}[1]{\ensuremath{\overline{#1}}}
\newcommand{\symmclose}[1]{\ensuremath{\mathit{Sym}(#1)}}
\newcommand{\inc}[2]{\iota_{#1}^{#2}}
\newcommand{\Pow}[1]{\ensuremath{\mathbf{2}^{#1}}}
\newcommand{\Ytop}[2]{[{#1}]^{#2}}
\newcommand{\Ybot}[2]{[{#1}]_{#2}}
\newcommand{\interval}[2]{\ensuremath{[{#1}, {#2}]}}
\newcommand{\norm}[1]{\ensuremath{|\!|{#1}|\!|}}
\newcommand{\mins}{\min\nolimits}
\newcommand{\maxs}{\max\nolimits}
\newcommand{\suc}{\eta}
\newcommand{\superimpose}[2]{%
  {\ooalign{$#1\@firstoftwo#2$\cr\hfil$#1\@secondoftwo#2$\hfil\cr}}}
\newcommand{\oominus}{\mathrel{\mathpalette\superimpose{{\ominus}{\div}}}}
\newcommand{\dual}[1]{\ensuremath{{#1}^\mathit{op}}}
\newcommand{\monM}{\mathbb{M}}
\definecolor{dmagenta}{rgb}{0.81,0,0.81}
\definecolor{dcyan}{rgb}{0,0.6,0.6}
\definecolor{dgreen}{rgb}{0.09, 0.45, 0.27}
\definecolor{dred}{rgb}{0.55, 0.0, 0.0}
\newcommand{\dgreen}{\color{dgreen}}
\newcommand{\blue}{\color{blue}}
\newcommand{\red}{\color{dred}}
\newcommand{\mystrutab}{\raisebox{0ex}[0.4cm]{}}
\newcommand{\mystrutbl}{\raisebox{-2ex}[0.4cm]{}}
\keywords{Fixpoints, Knaster-Tarski theorem, MV-algebras, non-expansive functions, bisimilarity, stochastic games}
\begin{document}

\title{Fixpoint Theory - Upside Down}

\thanks{This work is supported
  by the MIUR project PRIN2017- ASPRA, Grant No. 201784YSZ5, and the
  DFG projects BEMEGA (project number 260261790) and SpeQt (project
  number 434050016).}

\author[P.~Baldan]{Paolo Baldan}[a]

\author[R.~Eggert]{Richard Eggert}[b]

\author[B.~K\"onig]{Barbara K\"onig}[b]

\author[T.~Padoan]{Tommaso Padoan}[a]
\address{Universit\`a di Padova, Italy}
\email{baldan@math.unipd.it, padoan@math.unipd.it}

\address{Universit\"at Duisburg-Essen, Germany}
\email{richard.eggert@uni-due.de, barbara\_koenig@uni-due.de}

\begin{abstract}
  Knaster-Tarski's theorem, characterising the greatest fixpoint of a
  monotone function over a complete lattice as the largest
  post-fixpoint, naturally leads to the so-called coinduction proof
  principle for showing that some element is below the greatest
  fixpoint (e.g., for providing bisimilarity witnesses). The dual
  principle, used for showing that an element is above the least
  fixpoint, is related to inductive invariants.
  In this paper we provide proof rules which are similar in spirit but
  for showing that an element is above the greatest fixpoint or,
  dually, below the least fixpoint. The theory is developed for
  non-expansive monotone functions on suitable lattices of the form
  $\monM^Y$, where $Y$ is a finite set and $\monM$ an MV-algebra, and
  it is based on the construction of (finitary) approximations of the
  original functions. We show that our theory applies to a wide range
  of examples, including termination probabilities, metric transition
  systems, behavioural distances for probabilistic automata and
  bisimilarity. Moreover
  it allows us to determine original algorithms for solving simple
  stochastic games.
\end{abstract}

\maketitle

\section{Introduction}
\label{se:introduction}

Fixpoints are ubiquitous in computer science as they provide
a meaning to inductive and coinductive definitions (see, e.g.,~\cite{San:IntroBisCoind,NNH:PPA}). A monotone
function $f : L\to L$ over a complete lattice
$(L,\sqsubseteq)$, by Knaster-Tarski's
theorem~\cite{t:lattice-fixed-point}, admits a least fixpoint
$\mu f$ and greatest fixpoint $\nu f$ which are characterised as the least pre-fixpoint and the greatest post-fixpoint, respectively. This immediately
gives well-known proof principles for showing that a lattice element
$l\in L$ is \emph{below} $\nu f$  or \emph{above} $\mu f$
\[
  \frac{l \sqsubseteq f(l)}{l\sqsubseteq \nu f}
  \qquad\qquad
  \frac{f(l)\sqsubseteq l}{\mu f\sqsubseteq l}
\]

On the other hand, showing that a given element $l$ is \emph{above}
$\nu f$ or \emph{below} $\mu f$ is more difficult. One can think of
using the characterisation of least and largest fixpoints via Kleene's
iteration. E.g., the largest fixpoint is the least element of the
(possibly transfinite) descending chain obtained by iterating $f$ from
$\top$. Then showing that $f^i(\top) \sqsubseteq l$ for some $i$,
one concludes that $\nu f \sqsubseteq l$. This proof principle is
related to the notion of ranking functions. However, this is a less
satisfying notion of witness since $f$ has to be applied $i$ times,
and this can be inefficient or unfeasible when $i$ is an infinite
ordinal.

The aim of this paper is to present an alternative proof rule for this
purpose for functions over lattices of the form $L = \monM^Y$
where $Y$ is a finite set and $\monM$ is an MV-chain, i.e., a totally
ordered complete lattice endowed with suitable operations of sum and
complement. This allows us to capture several examples, ranging from
ordinary relations for dealing with bisimilarity to behavioural
metrics, termination probabilities and simple stochastic
games.

Assume $f : \monM^Y \to \monM^Y$ monotone and consider the question of
proving that some fixpoint $a : Y\to \monM$ is the largest
fixpoint $\nu f$.
The idea is to show that there is no ``slack'' or ``wiggle room'' in
the fixpoint $a$ that would allow us to further increase it. This is
done by associating with every $a : Y\to \monM$ a function
$f^\#_a$ on $\Pow{Y}$ whose greatest fixpoint gives us the
elements of $Y$ where we have a potential for increasing $a$ by adding
a constant. If no such potential exists, i.e.\ $\nu f^\#_a$ is empty,
we conclude that $a$ is $\nu f$. A similar function $f_\#^a$
(specifying decrease instead of increase) exists for the case of least
fixpoints. Note that the premise is $\nu f_\#^a = \emptyset$,
i.e. the witness remains coinductive. The proof rules are:
\[
  \frac{f(a) = a \qquad \nu f^\#_a = \emptyset}{\nu
    f = a}
  \qquad
  \frac{f(a) = a \qquad \nu f_\#^a = \emptyset}{\mu
    f = a}
\]

For applying the rule we compute a greatest fixpoint
on $\Pow{Y}$, which is finite, instead of working on the potentially infinite
$\monM^Y$. The rule does not work for all monotone functions
$f : \monM^Y\to \monM^Y$, but we show that whenever $f$ is
non-expansive the rule is valid. Actually, it is not only sound,
but also reversible, i.e., if $a=\nu f$ then
$\nu f_a^{\#} = \emptyset$, providing an if-and-only-if
characterisation of whether a given fixpoint corresponds to the
  greatest fixpoint.

Quite interestingly, under 
the same assumptions on $f$, using a
restricted function $f_a^*$, the rule can be used, more generally,
when $a$ is just a \emph{pre-fixpoint} ($f(a)\sqsubseteq a$) and it
allows to conclude that $\nu f\sqsubseteq a$. A dual result holds for
\emph{post-fixpoints} in the case of least fixpoints.
\[
  \frac{f(a)\sqsubseteq a \qquad \nu f^*_a = \emptyset}{\nu
    f\sqsubseteq a} \qquad \frac{a \sqsubseteq f(a) \qquad \nu f_*^a
    = \emptyset}{a \sqsubseteq \mu f}
\]

As already mentioned, the theory above applies to many interesting
scenarios: witnesses for non-bisimilarity, algorithms for simple
stochastic games~\cite{condon92}, lower bounds for termination
probabilities and behavioural metrics in the setting of
probabilistic~\cite{bblm:on-the-fly-exact-journal} and metric
transition systems \cite{afs:linear-branching-metrics} and
probabilistic automata~\cite{bblmtv:prob-bisim-distance-automata}. In
particular we were inspired by, and generalise, the self-closed
relations of~\cite{f:game-metrics-markov-decision}, also used
in~\cite{bblmtv:prob-bisim-distance-automata}.

\paragraph*{Motivating example.}

Consider a Markov chain $(S, T, \suc)$ with a finite set of states $S$,
where $T\subseteq S$ are the terminal states and every state
$s\in S\backslash T$ is associated with a probability
distribution
$\suc(s)\in \mathcal{D}(S)$.\footnote{By $\mathcal{D}(S)$ we denote the set of
  all maps $p : S\to [0,1]$ such that $\sum_{s\in S} p(s) = 1$.}
Intuitively, $\suc(s)(s')$ denotes the probability of state $s$
choosing $s'$ as its successor.
Assume that, given a fixed state $s\in S$, we want to determine
the termination probability of $s$, i.e. the probability of 
eventually reaching any terminal state from $s$.
As a concrete example, take the Markov chain given in
Fig.~\ref{fig:markov-chain}, where $u$ is the only terminal state.

The termination probability arises as the least fixpoint of a function
$\mathcal{T}$ defined as in Fig.~\ref{fig:markov-chain}. The values
of $\mu \mathcal{T}$ are indicated in green (left value).

Now consider the function $t$ assigning to each state the termination
probability written in red (right value). It is not difficult to see
that $t$ is another fixpoint of $\mathcal{T}$, in which states $y$ and
$z$ convince each other incorrectly that they terminate with
probability~$1$, resulting in a vicious cycle that gives
``wrong'' results. We want to show that $\mu \mathcal{T} \neq t$
without knowing $\mu\mathcal{T}$.
Our idea is to compute the set of states that still has some ``wiggle
room'', i.e., those states which could reduce their termination
probability by $\delta$ if all their successors did the same. This
definition has a coinductive flavour and it can be computed as a
greatest fixpoint on the finite powerset $\Pow{S}$ of states, instead
of on the infinite lattice $\interval{0}{1}^S$.

We hence consider a function
$\mathcal{T}_\#^t : \Pow{\Ytop{S}{t}}\to \Pow{\Ytop{S}{t}}$,
dependent on $t$, defined as follows. Let $\Ytop{S}{t}$ be the support
of $t$, i.e., the set of all states $s$ such that $t(s) > 0$, where
a reduction in value is in principle
possible. Then a state $s\in \Ytop{S}{t}$ is in $\mathcal{T}_\#^t(S')$ iff
$s\not\in T$ and for all $s'$ for which $\suc(s)(s') > 0$ it holds that
$s'\in S'$, i.e.\ all successors of $s$ are in $S'$.

The greatest fixpoint of $\mathcal{T}_\#^t$ is $\{y,z\}$. The fact
that it is not empty means that there is some ``wiggle room'', i.e.,
the value of $t$ can be reduced on the elements $\{y,z\}$ and thus $t$
cannot be the least fixpoint of $f$. Moreover, the intuition that $t$
can be improved on $\{y,z\}$ can be made precise, leading to the
possibility of performing the improvement and search for the least
fixpoint from there.

\paragraph*{Contributions.} In the paper we formalise the theory
outlined above, showing that the proof rules work for non-expansive
monotone functions $f$ on lattices of the form $\monM^Y$, where $Y$ is
a finite set and $\monM$ a (potentially infinite) MV-algebra
(Section~\ref{se:nonexpansive-approximation}
and Section~\ref{se:proof-rules}). Additionally, given a decomposition of
$f$ we show how to obtain the corresponding approximation
compositionally (Section~\ref{se:de-composing}). Then, in order to show
that our approach covers a wide range of examples and allows us to
derive useful and original algorithms, we discuss various
applications: termination probability, behavioural distances for
metric transition systems and probabilistic automata, bisimilarity
(Section~\ref{se:applications}) and simple stochastic games
(Section~\ref{se:ssgs}).

Further proofs and material can be found in the
appendix.

\begin{figure}[t]
  \centering
  \begin{subfigure}{0.55\textwidth}
    \centering
	  \normalsize
	  \begin{eqnarray*}
	    && \mathcal{T} : [0,1]^S \to [0,1]^S \\
	    && \mathcal{T}(t)(s) = \left\{
	    \begin{array}{ll}
	      1 & \mbox{if $s\in T$} \\
	      \sum\limits_{s'\in S} \suc(s)(s') \cdot t(s') & \mbox{otherwise}
	    \end{array}
	    \right.
	  \end{eqnarray*}
  \end{subfigure}%
  \begin{subfigure}{0.45\textwidth}
    \centering
	  \begin{tikzpicture}[->, >=stealth, nodes={minimum
	      size=1.8em}, node distance=0.8cm]
	    \node [circle,draw](x) [label=below:{\dgreen $\frac{1}{2}$}/{\red $1$}] {$x$};
	    \node [circle,draw,accepting](t) [right=of x]
	    [label=below:{\dgreen $1$}/{\red $1$}] {$u$};
	    \node [circle,draw](y) [left=of x] [label=below:{\dgreen
	      $0$}/{\red $1$}] {$y$};
	    \node [circle,draw](z) [left=of y][label=below:{\dgreen $0$}/{\red $1$}] {$z$};
	    \path [thick] (x) edge [loop above] node {$\frac{1}{3}$} (x);
	    \draw [->,thick] (x) to node [above]{$\frac{1}{3}$} (t);
	    \draw [->,thick] (x) to node [above]{$\frac{1}{3}$} (y);
	    \draw [->,thick,bend left] (y) to node [below]{$1$} (z);
	    \draw [->,thick,bend left] (z) to node [above]{$1$} (y);
	  \end{tikzpicture}
  \end{subfigure}
  \caption{Function $\mathcal{T}$ (left) and a Markov
    chain with two fixpoints of $\mathcal{T}$ (right)}
  \label{fig:markov-chain}
\end{figure}

\section{Lattices and MV-algebras}
\label{se:setup}

In this section, we review some basic notions used in the paper,
concerning complete lattices and MV-algebras~\cite{Mun:MV}.

A preordered or partially ordered set $(P, \sqsubseteq)$
is often denoted simply as $P$, omitting the order
relation.
Given $x, y \in P$, with $x \sqsubseteq y$, we denote by
$\interval{x}{y}$ the interval
$\{ z \in P \mid x \sqsubseteq z \sqsubseteq y\}$.
The \emph{join} and the \emph{meet} of a
subset $X \subseteq P$ (if they exist) are denoted $\bigsqcup X$
and $\bigsqcap X$, respectively.
 
A \emph{complete lattice} is a partially ordered set
$(L, \sqsubseteq)$ such that each subset $X \subseteq L$ admits a
join $\bigsqcup X$ and a meet $\bigsqcap X$. A complete lattice
$(L, \sqsubseteq)$ always has a least element
$\bot = \bigsqcup \emptyset$ and a greatest element
$\top = \bigsqcap \emptyset$.

A function $f : L \to L$ is \emph{monotone} if for all
$l, l' \in L$, if $l \sqsubseteq l'$ then
$f(l) \sqsubseteq f(l')$. By Knaster-Tarski's
theorem~\cite[Theorem~1]{t:lattice-fixed-point}, any monotone
function on a complete lattice has a least and a greatest fixpoint,
denoted respectively $\mu f$ and $\nu f$, characterised as the meet
of all pre-fixpoints, respectively the join of all post-fixpoints:
$\mu f = \bigsqcap \{ l \mid f(l) \sqsubseteq l \}$
and 
$\nu f = \bigsqcup \{ l \mid l \sqsubseteq f(l) \}$.

Let $(C, \sqsubseteq)$, $(A, \leq)$ be complete lattices. A
\emph{Galois connection} is a pair of monotone functions
$\langle \alpha, \gamma\rangle$ such that $\alpha : C \to A$,
$\gamma : A \to C$ and for all $a \in A$ and $c \in C$:
\begin{center}
  $\alpha(c) \leq a$ \quad iff \quad $c \sqsubseteq \gamma (a)$.
\end{center}
Equivalently, for all $a \in A$ and $c \in C$, (i)~$c \sqsubseteq \gamma (\alpha(c))$ and
(ii)~$\alpha(\gamma(a)) \leq a$.
In this case we will write
$\langle \alpha, \gamma\rangle : C \to A$.
For a Galois connection
$\langle \alpha, \gamma \rangle: C \to A$, the function
$\alpha$ is called the left (or lower) adjoint and $\gamma$ the right
(or upper) adjoint.

Galois connections are at the heart of abstract
interpretation~\cite{cc:ai-unified-lattice-model,CC:TLA}. In
particular, when $\langle \alpha, \gamma \rangle$ is a Galois
connection, given $f^C : C \to C$ and $f^A : A \to A$, monotone
functions, if $f^C \circ \gamma \sqsubseteq \gamma \circ f^A$, then
$\nu f^C \sqsubseteq \gamma(\nu f^A)$. If the equality
$f^C \circ \gamma = \gamma \circ f^A$ holds, a condition sometimes referred to
as \emph{$\gamma$-completeness}, then greatest fixpoints are preserved along
the connection, i.e., $\nu f^C = \gamma(\nu f^A)$.

Given a set $Y$ and a complete lattice $L$, the set
of functions $L^Y = \{ f \mid f : Y \to L \}$, endowed with
pointwise order, i.e., for $a, b \in L^Y$, $a \sqsubseteq b$ if
$a(y) \sqsubseteq b(y)$ for all $y\in Y$, is a complete lattice.

In the paper we will mostly work with lattices of the form $\monM^Y$
where $\monM$ is a special kind of lattice with a rich algebraic
structure, i.e. an MV-algebra~\cite{Mun:MV}.

\begin{defi}[MV-algebra]
  \label{de:mv}
  An \emph{MV-algebra} is a tuple
  $\monM = (M, \oplus, \zero, \comp{(\cdot)})$ where
  $(M, \oplus, \zero)$ is a commutative monoid and
  $\comp{(\cdot)} : M \to M$ maps each element to its
  \emph{complement}, such that for all $x, y \in M$
  \begin{enumerate}
  \item \label{de:mv:1}
    $\comp{\comp{x}} = x$

  \item \label{de:mv:2}
    $x \oplus \comp{\zero} = \comp{\zero}$
    
  \item \label{de:mv:3}
    $\comp{(\comp{x} \oplus y)} \oplus y = \comp{(\comp{y} \oplus x)} \oplus x$.
  \end{enumerate}
  We denote $\one = \comp{\zero}$, multiplication
  $x \otimes y = \comp{\comp{x} \oplus \comp{y}}$ and subtraction
  $x \ominus y = x \otimes \comp{y}$.
\end{defi}

Note that by using the derived operations, axioms~(2) and~(3) above
can be written as
\begin{enumerate}
  \setcounter{enumi}{1}
\item $x \oplus \one = \one$
\item $(y \ominus x) \oplus x = (x \ominus y) \oplus y$
\end{enumerate}

MV-algebras are endowed with a natural order.

\begin{defi}[natural order]
  \label{de:order}
  Let $\monM = (M, \oplus, \zero, \comp{(\cdot)})$ be an MV-algebra. The
  \emph{natural order} on $\monM$ is defined, for $x, y \in M$, by
  $x \sqsubseteq y$ if $x \oplus z= y$ for some $z \in M$. When $\sqsubseteq$ is
  total $\monM$ is called an \emph{MV-chain}.
\end{defi}

The natural order gives an MV-algebra a lattice structure where
$\bot = \zero$, $\top =\one$, $x \sqcup y = (x \ominus y) \oplus y$
and
$x \sqcap y = \comp{\comp{x} \sqcup \comp{y}} = x \otimes (\comp{x}
\oplus y)$. We call the MV-algebra \emph{complete}, if it is a
complete lattice. This is not true in general, e.g.,
$([0,1] \cap \mathbb{Q}, \leq)$.

\begin{exa}
  \label{ex:mv-algebra}
  A prototypical example of an MV-algebra is
  $([0,1],\oplus,0,\comp{(\cdot)})$ where $x\oplus y = \min\{x+y,1\}$
  and $\comp{x} = 1-x$ for $x,y\in [0,1]$. This means that
  $x\otimes y = \max\{x+y-1,0\}$ and $x\ominus y = \max\{0,x-y\}$
  (truncated subtraction). The operators $\oplus$ and $\otimes$ are
  also known as strong disjunction and conjunction in {\L}ukasiewicz
  logic~\cite{m:lukasiewicz-mv}. The natural order is $\le$ (less
  or equal) on the reals.

  Another example is $(\{0,\dots,k\},\oplus,0,\comp{(\cdot)})$ where
  $n\oplus m = \min\{n+m,k\}$ and $\comp{n} = k-n$ for
  $n,m\in \{0,\dots,k\}$. We are in particular interested in the
    case $k=1$. Both MV-algebras are complete and MV-chains.

  Boolean algebras (with disjunction and complement) also form
  MV-algebras that are complete, but in general not MV-chains.
\end{exa}

MV-algebras are the algebraic semantics of {\L}ukasiewicz logic. They
can be shown to correspond to intervals of the kind $\interval{0}{u}$
in suitable groups, i.e., abelian lattice-ordered groups with a strong
unit~$u$~\cite{Mun:MV}.

We next review some properties of MV-algebras. They are taken from or
easy consequences of properties in~\cite{Mun:MV} and will be used
throughout the paper.

\begin{lemmarep}[properties of MV-algebras]
  \label{le:mvprop}
  Let $\monM = (M, \oplus, \zero, \comp{(\cdot)})$ be an MV-algebra. For
  all $x, y, z \in M$

  \begin{enumerate}[align=left]
    
  \item \label{le:mvprop:1}
    $x \oplus \comp{x} = 1$
    
  \item \label{le:mvprop:2}
    $x \sqsubseteq y$ \quad iff \quad $\comp{x} \oplus y = \one$ \quad iff \quad $x \otimes \comp{y}=\zero$ \quad iff \quad $y = x \oplus (y \ominus x)$

  \item \label{le:mvprop:3}
    $x \sqsubseteq y$ iff $\comp{y} \sqsubseteq \comp{x}$

  \item\label{le:mvprop:4}
    $\oplus$, $\otimes$ are monotone in both arguments, $\ominus$ monotone in the first and antitone in the second argument.

  \item \label{le:mvprop:5}
    if $x \sqsubset y$ then $0 \sqsubset y \ominus x$;

  \item\label{le:mvprop:6}
    $(x \oplus y) \ominus y \sqsubseteq x$
    
  \item\label{le:mvprop:7}
    $z \sqsubseteq x \oplus y$ if and only if $z \ominus x \sqsubseteq
    y$.
    
  \item \label{le:mvprop:8}
    if $x \sqsubset y$ and $z \sqsubseteq \comp{y}$ then $x \oplus z \sqsubset y \oplus z$;

  \item \label{le:mvprop:9}
    $y \sqsubseteq \comp{x}$ if and only if $(x \oplus y) \ominus y = x$;
   
  \item \label{le:mvprop:10}
     $x \ominus (x \ominus y) \sqsubseteq y$ and if $y \sqsubseteq x$ then $x \ominus (x \ominus y) = y$.

   \item \label{le:mvprop:11} Whenever $\monM$ is an MV-chain,
     $x \sqsubset y$ and $0 \sqsubset z$ imply
     $(x\oplus z) \ominus y \sqsubset z$
  \end{enumerate}
\end{lemmarep}

\begin{proof}
  The proof of properties (\ref{le:mvprop:1}), (\ref{le:mvprop:2}),
  (\ref{le:mvprop:3}), (\ref{le:mvprop:4}) can be found directly
  in~\cite{Mun:MV}. For the rest:

  \begin{enumerate}[align=left]
    \setcounter{enumi}{4}
  
  \item Immediate consequence of (\ref{le:mvprop:2}). In fact, given
    $x \sqsubset y$, if we had $y \ominus x = 0$ then by
    (\ref{le:mvprop:2}),
    $y = x \oplus (y \ominus x) = x \oplus 0 = x$, contradicting the hypothesis.
    
  \item Observe that
    $(x \oplus y) \ominus y = \comp{\comp{(x \oplus y)} \oplus y} =
    \comp{(\comp{x} \ominus y) \oplus y} = \comp{(y \ominus \comp{x})
      \oplus \comp{x}} \sqsubseteq \comp{\comp{x}} = x$, where the
    last inequality is motivated by the fact that
    $\comp{x} \sqsubseteq (y \ominus \comp{x}) \oplus \comp{x}$ and
    point (\ref{le:mvprop:3}).

  \item The direction from left to right is an immediate consequence of
    (\ref{le:mvprop:6}). In fact, if $z \sqsubseteq x \oplus y$ then
    $z \ominus x \sqsubseteq (x \oplus y) \ominus x \sqsubseteq y$.

    The other direction goes as follows: if $z\ominus x\sqsubseteq y$,
    then -- by monotonicity (\ref{le:mvprop:4}) --
    $(z\ominus x)\oplus x\sqsubseteq y\oplus x = x\oplus y$. The left
    hand side can be rewritten to $(x\ominus z)\oplus z\sqsupseteq z$.
    
  \item Assume that $x \sqsubset y$ and $z \sqsubseteq \comp{y}$. We
    know, by property (\ref{le:mvprop:4}) that
    $x \oplus z \sqsubseteq y \oplus z$. Assume by contradiction that
    $x \oplus z = y \oplus z$. Then we have
    \begin{align*}
      \comp{x}
      & \sqsubseteq & \mbox{[by properties~(\ref{le:mvprop:3})
                      and~(\ref{le:mvprop:6})]}\\
      & \sqsubseteq \comp{(x \oplus z) \ominus z}
                    & \quad \mbox{[since $x \oplus z =  y \oplus z$]}\\
      & \sqsubseteq \comp{(y \oplus z) \ominus z}
                    & \quad \mbox{[definition of $\ominus$]}\\
      & = (\comp{y} \ominus z) \oplus z 
                    & \quad \mbox{[since $z \sqsubseteq \comp{y}$ and property~(\ref{le:mvprop:2})]}\\
      & = \comp{y}
    \end{align*}
    And with point (\ref{le:mvprop:3}) this is a contradiction.

  \item Assume $y \sqsubseteq \comp{x}$. We know
    $(x \oplus y) \ominus y \sqsubseteq x$. If it were
    $(x \oplus y) \ominus y \sqsubset x$, then
    $((x \oplus y) \ominus y) \oplus y \sqsubset x \oplus y$, with
    (\ref{le:mvprop:8}). Since the left-hand side is equal to
    $(y\ominus (x \oplus y)) \oplus (x \oplus y) \sqsupseteq x \oplus
    y$, this is a contradiction.
    
    For the other direction assume that $(x\oplus y)\ominus y =
    x$. Hence we have
    $x = (x\oplus y)\ominus y = \comp{\comp{(x\oplus y)}\oplus y}$. By
    complementing on both sides we obtain
    $\comp{x} = \comp{(x\oplus y)}\oplus y$ which implies that
    $y\sqsubseteq \comp{x}$.
      
  \item Observe that, by (\ref{le:mvprop:7}), we have
    $\comp{y} \sqsubseteq \comp{x} \oplus (\comp{y} \ominus \comp{x})
    = \comp{x} \oplus (x \ominus y) = \comp{x\ominus (x\ominus
      y)}$. Therefore, by (\ref{le:mvprop:3}),
    $x \ominus (x \ominus y) \sqsubseteq y$, as desired.
      
    For the second part, assume $y \sqsubseteq x$ and thus, by
    (\ref{le:mvprop:3}), $\comp{x} \sqsubseteq \comp{y}$. Using
    (\ref{le:mvprop:2}), we obtain
    $\comp{y} = \comp{x} \oplus (\comp{y} \ominus \comp{x}) = \comp{x}
    \oplus \comp{y \oplus \comp{x}} = \comp{x} \oplus (x \ominus
    y)$. Hence
    $y = \comp{\comp{x} \oplus (x \ominus y)} = x \ominus (x \ominus
    y)$.

  \item We first observe that, given $u,v \in \monM$, $u \sqsubseteq v \oplus (u\ominus
    v)$. This is a direct consequence of axiom~(3) of MV-algebras and
    the definition of natural order.

    Second, in an MV-chain if $u, v \sqsupset 0$, then
    $u\ominus v \sqsubset u$. In fact, if $u \sqsubseteq v$ and thus
    $u\ominus v = 0 \sqsubset u$. If instead, $v\sqsubset u$ we have
    $0 \sqsubset v$ and
    $u\ominus v \sqsubseteq 1\ominus v = \comp{v}$, hence by
    (\ref{le:mvprop:8}) it holds that
    $0 \oplus (u\ominus v) \sqsubset v \oplus (u\ominus v)$. Recalling
    that $v\sqsubset u$ and thus by
    (\ref{le:mvprop:2}),
    $(u\ominus v)\oplus v = u$, we conclude $u\ominus v \sqsubset u$.

    Now
    \begin{align*}
      & (x\oplus z) \ominus y \\
      & \sqsubseteq (x \oplus (y\ominus x) \oplus (z \ominus
        (y\ominus x))) \ominus y & \mbox{[by first obs.\ above]} \\
      & = (y \oplus (z \ominus (y\ominus x))) \ominus y & \mbox{[since
        $x\sqsubseteq y$, by (\ref{le:mvprop:2})]} \\
      & \sqsubseteq z \ominus (y\ominus x) &
        \mbox{[by (\ref{le:mvprop:6})]} \\
      & \sqsubset z & \mbox{[by second obs.\ above, since
          $z\sqsupset 0$} \\
      && \mbox{and $y\ominus x\sqsupset 0$ by
         (\ref{le:mvprop:5})]} \tag*{\qedhere}
    \end{align*}
    \qedhere
  \end{enumerate}
\end{proof}

Note that we adhere to the following convention: whenever brackets are
missing, we always assume that we associate from left to right.  So
$a\oplus b \ominus c$ should be read as $(a\oplus b)\ominus c$ and not
as $a\oplus (b\ominus c)$, which is in general different.

\section{Non-expansive functions and their approximations}
\label{se:nonexpansive-approximation}

As mentioned in the introduction, our interest is for fixpoints of
monotone functions $f: \monM^Y\to \monM^Y$, where $\monM$ is an
MV-chain and $Y$ is a finite set. We will see that for non-expansive
functions we can over-approximate the sets of points in which a given
$a \in \monM^Y$ can be increased in a way that is preserved by the
application of $f$. This will be the core of the proof rules outlined
earlier.%

\subsection{Non-expansive functions on MV-algebras.}

For defining non-expansiveness it is convenient to introduce a norm, which can be seen as an adaptation of the standard $l_\infty$ norm.

\begin{defi}[norm]
  \label{de:norm}
  Let $\monM$ be an MV-chain and let $Y$ be a finite set. Given
  $a \in \monM^Y$ we define its \emph{norm} as
  $\norm{a} = \max \{ a(y) \mid y \in Y\}$.
\end{defi}

Given a finite set $Y$ we extend $\oplus$ and $\otimes$ to $\monM^Y$
pointwise. E.g. if $a, b \in \monM^Y$, we write $a \oplus b$ for the
function defined by $(a \oplus b)(y) = a(y) \oplus b(y)$ for all
$y \in Y$.
Given $Y' \subseteq Y$ and $\delta \in \monM$, we write $\delta_{Y'}$
for the function defined by $\delta_{Y'}(y) = \delta$ if $y \in Y'$
and $\delta_{Y'}(y) = \zero$, otherwise.
Whenever this does not generate confusion,
we write $\delta$ instead of $\delta_Y$.

As shown in the lemma below, $\norm{\cdot}$ has the standard properties of a norm.
Moreover, it is clearly monotone, i.e., if $a \sqsubseteq b$
then $\norm{a} \sqsubseteq \norm{b}$.

\begin{lemmarep}[properties of the norm]
  \label{le:norm}
  Let $\monM$ be an MV-chain and let $Y$ be a finite set. Then $\norm{\cdot} : \monM^Y \to \monM$ satisfies, for all $a, b \in \monM^Y$, $\delta \in \monM$
  \begin{enumerate}
  \item \label{le:norm:1}
    $\norm{a \oplus b} \sqsubseteq \norm{a} \oplus \norm{b}$,

  \item \label{le:norm:2}
    $\norm{\delta \otimes a} = \delta \otimes \norm{a}$ and

  \item \label{le:norm:3}
    $\norm{a} = \zero$ implies that $a$ is the constant $\zero$.
  \end{enumerate}
\end{lemmarep}

\begin{proof}
  Concerning (1), let $\norm{a \oplus b}$ be realised on some element
  $y \in Y$, i.e., $\norm{a \oplus b} = a(y) \oplus b(y)$. Since
  $a(y) \sqsubseteq \norm{a}$ and $b(y) \sqsubseteq \norm{b}$, by
  monotonicity of $\oplus$ we deduce that
  $\norm{a \oplus b} \sqsubseteq \norm{a} \oplus \norm{b}$.

  Concerning (2), note that
  \begin{align*}
    \norm{\delta \otimes a}
    & = \max \{ \comp{\comp{\delta} \oplus
      \comp{a(y)}} \mid y \in Y\}\\
    & = \comp{\min \{ \comp{\delta} \oplus \comp{a(y)} \mid y \in Y \}}\\
    & = \comp{\comp{\delta} \oplus \min \{\comp{a(y)} \mid y \in Y \}}\\
    & = \comp{\comp{\delta} \oplus \comp{\max \{ a(y) \mid y \in Y \}}}\\
    & = \comp{\comp{\delta} \oplus \comp{\norm{a}}}\\
    & = \delta \otimes \norm{a}\
  \end{align*}

  Finally, point (3) is straightforward, since $\zero$ is the bottom of $\monM$.
\end{proof}

We next introduce non-expansiveness. Despite the fact that we will
eventually be interested in endo-functions $f : \monM^Y \to \monM^Y$, in
order to allow for a compositional reasoning we work with functions
where domain and codomain can be different.

\begin{defi}[non-expansiveness]
  \label{de:non-expansiveness}
  Let $f: \monM^Y\to \monM^Z$ be a function, where $\monM$ is an MV-chain and $Y, Z$ are
  finite sets. We say that it is \emph{non-expansive} if for all
  $a, b \in \monM^Y$ it holds that
  $\norm{f(b) \ominus f(a)} \sqsubseteq \norm{b \ominus a}$.
\end{defi}

Note that $(a,b)\mapsto\norm{a\ominus b}$ is the supremum lifting of a
directed version of Chang's distance~\cite{Mun:MV}.
It is easy to see that all non-expansive functions on MV-chains are
monotone (see Lemma~\ref{le:non-exp-monotonic} in the
appendix). Moreover, when $\monM = \{0,1\}$, i.e., $\monM$ is the
two-point boolean algebra, the two notions coincide.

\begin{toappendix}
\begin{lemma}[non-expansiveness implies monotonicity]
  \label{le:non-exp-monotonic}
  Let $\monM$ is an MV-chain and let $Y, Z$ be finite sets.
  Every non-expansive function $f: \monM^Y\to \monM^Z$ is monotone.
\end{lemma}

\begin{proof}
  Let $a, b \in \monM^Y$ be such that $a \sqsubseteq b$. Therefore, by
  Lemma~\ref{le:mvprop}(\ref{le:mvprop:2}), $a(y) \ominus b(y) = 0$
  for all $y \in Y$, hence $a \ominus b = 0$. Thus
  $\norm{f(a) \ominus f(b)} \sqsubseteq \norm{a \ominus b} = 0$. In
  turn this implies that for all $z \in Z$,
  $f(a)(z) \ominus f(b)(z) = 0$. Hence
  Lemma~\ref{le:mvprop}(\ref{le:mvprop:2}), allows us to conclude
  $f(a)(z) \sqsubseteq f(b)(z)$ for all $z \in Z$, i.e.,
  $f(a) \sqsubseteq f(b)$, as desired.
\end{proof}

The next lemma provides a useful equivalent characterisation of
non-expansiveness.

\begin{lemma}[characterisation of non-expansiveness]
  \label{le:non-expansiveness}
  Let $f: \monM^Y\to \monM^Z$ be a monotone function, where $\monM$ is
  an MV-chain and $Y, Z$ are finite sets. Then $f$ is non-expansive
  iff for all $a \in \monM^Y$, $\theta \in \monM$ and $z \in Z$ it
  holds
  $f(a \oplus \theta)(z) \ominus f(a)(z) \sqsubseteq
  \theta$.
\end{lemma}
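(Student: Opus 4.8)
The plan is to prove both implications. For the "only if" direction, assume $f$ is non-expansive and fix $a \in \monM^Y$, $\theta \in \monM$ and $z \in Z$. The key observation is that $b := a \oplus \theta$ (where $\theta$ abbreviates $\theta_Y$, the constant function) satisfies $b \ominus a = \theta$ pointwise, hence $\norm{b \ominus a} \sqsubseteq \theta$ (in fact equality, but $\sqsubseteq$ suffices). Applying non-expansiveness gives $\norm{f(a \oplus \theta) \ominus f(a)} \sqsubseteq \theta$, and since the norm is a maximum over $Z$, the value at the particular $z$ is below the norm, so $f(a \oplus \theta)(z) \ominus f(a)(z) \sqsubseteq \theta$, as required. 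I need to be slightly careful about the case where $\theta \ominus$ is applied to something already at the top: but since $(a\oplus\theta)(y)\ominus a(y) \sqsubseteq \theta$ holds by \autoref{le:mvprop}(\ref{le:mvprop:6}) (take $x = \theta$, $y = a(y)$ there, using commutativity of $\oplus$), this step is clean.

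For the "if" direction, assume the pointwise inequality $f(a \oplus \theta)(z) \ominus f(a)(z) \sqsubseteq \theta$ for all $a, \theta, z$, and let $a, b \in \monM^Y$ be arbitrary; I must show $\norm{f(b) \ominus f(a)} \sqsubseteq \norm{b \ominus a}$. Set $\theta := \norm{b \ominus a} = \max_{y}(b(y) \ominus a(y))$. Then for every $y \in Y$ we have $b(y) \ominus a(y) \sqsubseteq \theta$, which by \autoref{le:mvprop}(\ref{le:mvprop:7}) is equivalent to $b(y) \sqsubseteq a(y) \oplus \theta = (a \oplus \theta)(y)$; hence $b \sqsubseteq a \oplus \theta$ in the pointwise order. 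Since $f$ is monotone (non-expansiveness was not assumed here, but monotonicity of $f$ is a hypothesis of the lemma) — wait, here only monotonicity is assumed, which is exactly what we have — we get $f(b) \sqsubseteq f(a \oplus \theta)$, and therefore, using monotonicity of $\ominus$ in its first argument (\autoref{le:mvprop}(\ref{le:mvprop:4})), $f(b)(z) \ominus f(a)(z) \sqsubseteq f(a \oplus \theta)(z) \ominus f(a)(z) \sqsubseteq \theta$ for every $z \in Z$. Taking the maximum over $z$ yields $\norm{f(b) \ominus f(a)} \sqsubseteq \theta = \norm{b \ominus a}$.

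The main subtlety — and the only place one has to think rather than compute — is the step $b \sqsubseteq a \oplus \theta$ in the second direction: one needs the equivalence "$z \ominus x \sqsubseteq y \iff z \sqsubseteq x \oplus y$" from \autoref{le:mvprop}(\ref{le:mvprop:7}) applied pointwise, and one must make sure the single global constant $\theta = \norm{b \ominus a}$ genuinely dominates every local difference $b(y) \ominus a(y)$, which is immediate from the definition of the norm as a max over the finite set $Y$. Everything else is bookkeeping with the norm-as-maximum and the monotonicity properties already catalogued in \autoref{le:mvprop}. No transfinite or lattice-completeness machinery is needed; finiteness of $Y$ and $Z$ is used only to guarantee the maxima defining the norm exist and are attained.
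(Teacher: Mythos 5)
Your proof is correct and follows essentially the same route as the paper's: in the forward direction you bound $f(a\oplus\theta)(z)\ominus f(a)(z)$ by the norm, apply non-expansiveness, and use Lemma~\ref{le:mvprop}(\ref{le:mvprop:6}) to collapse $\norm{(a\oplus\theta)\ominus a}$ to $\theta$; in the converse direction you set $\theta = \norm{b\ominus a}$, deduce $b\sqsubseteq a\oplus\theta$, and use monotonicity of $f$ and of $\ominus$ in its first argument before taking the max over $z$. The paper's proof does exactly the same thing, so there is nothing to reconcile.
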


\begin{proof}
  Let $f$ be non-expansive and let $a \in \monM^Y$ and
  $\theta \in \monM$. We have that for all $z \in Z$
    \begin{align*}
      & f(a \oplus \theta)(z) \ominus f(a)(z) \sqsubseteq
      &\\
      & \quad \sqsubseteq \norm{f(a \oplus \theta) \ominus f(a)}
      & \mbox{[by definition of norm]}\\
      & \quad \sqsubseteq \norm{(a \oplus \theta) \ominus a}
      & \mbox{[by hypothesis]}\\
      & \quad \sqsubseteq \norm{\lambda y. \theta}
      &  \mbox{[by Lemma~\ref{le:mvprop}(\ref{le:mvprop:6}) and monotonicity of norm]}\\      
      & \quad = \theta
      &  \mbox{[by definition of norm]}
    \end{align*}

    \medskip

    Conversely, assume that for all $a \in \monM^Y$,
    $\theta \in \monM$ and $z \in Z$ it holds
    $f(a \oplus \theta)(z) \ominus f(a)(z) \sqsubseteq
    \theta$. For $a, b \in \monM^Y$, first observe that for all
    $y \in Y$ it holds
    $b(y) \ominus a(y) \sqsubseteq \norm{b \ominus a}$, hence, if we
    let $\theta = \norm{b \ominus a}$, we have
    $b \sqsubseteq a \oplus \theta$ and thus, by monotonicity,
    $f(b) \ominus f(a) \sqsubseteq f(a \oplus \theta) \ominus
    f(a)$. Thus

    \begin{align*}
      & \norm{f(b) \ominus f(a)} \sqsubseteq
      &\\
      & \quad \sqsubseteq \norm{f(a + \theta) \ominus f(a)} =\\
      & \quad \mbox{[by the observation above and monotonicity of norm]}\\
      & \quad = \max \{ f(a + \theta)(z) \ominus f(a)(z) | z \in Z \}
      & \mbox{[by definition of norm]}\\
      & \quad \sqsubseteq \theta
      & \mbox{[by hypothesis]}\\
      & \quad = \norm{b \ominus a}
      &  \mbox{[by the choice of $\theta$]} \tag*{\qedhere}
    \end{align*}
\end{proof}
\end{toappendix}

\begin{toappendix}

  \begin{lemma}[composing non-expansive functions]
    \label{le:comp-non-expansive}
    Let $\monM$ be an MV-chain and let $Y, W, Z$ be finite sets. If
    $g : \mathbb{M}^Y \to \mathbb{M}^W$ and
    $h : \mathbb{M}^W \to \mathbb{M}^Z$ are non-expansive then
    $h \circ g : \mathbb{M}^Y \to \mathbb{M}^Z$ is non-expansive.
  \end{lemma}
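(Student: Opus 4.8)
The statement is the closure of non-expansiveness under composition, and the natural strategy is to unfold the definition (Definition~\ref{de:non-expansiveness}) for $h \circ g$ and then apply the non-expansiveness of $h$ and of $g$ in turn, chaining the resulting inequalities by transitivity of the natural order $\sqsubseteq$.

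\textbf{Key steps.} First I would fix arbitrary $a, b \in \monM^Y$; the goal is to establish $\norm{(h \circ g)(b) \ominus (h \circ g)(a)} \sqsubseteq \norm{b \ominus a}$. Expanding the composition, the left-hand side is $\norm{h(g(b)) \ominus h(g(a))}$. Since $g(a), g(b) \in \monM^W$ and $h : \monM^W \to \monM^Z$ is non-expansive, applying its defining inequality with $g(a)$ and $g(b)$ in place of $a$ and $b$ gives $\norm{h(g(b)) \ominus h(g(a))} \sqsubseteq \norm{g(b) \ominus g(a)}$. Then, since $g : \monM^Y \to \monM^W$ is non-expansive, applying its defining inequality with $a$ and $b$ gives $\norm{g(b) \ominus g(a)} \sqsubseteq \norm{b \ominus a}$. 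Concatenating the two inequalities via transitivity of $\sqsubseteq$ on $\monM$ yields the claim. As $a, b$ were arbitrary, $h \circ g$ is non-expansive. (Monotonicity of $h \circ g$, if one wishes to record it explicitly, then follows from Lemma~\ref{le:non-exp-monotone}, but it is not needed here since non-expansiveness is the only thing asserted.)

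\textbf{Main obstacle.} There is essentially no obstacle: the proof is a two-line chaining of the definition applied twice, and the only ``facts'' used are transitivity of $\sqsubseteq$ and that $g$ maps into $\monM^W$ so that $h$ may legitimately be applied to its outputs. The one point that merely needs care is bookkeeping of the domains and codomains ($Y \to W \to Z$) so that each invocation of Definition~\ref{de:non-expansiveness} is to a function of the right type; once the types line up, the inequalities compose immediately.
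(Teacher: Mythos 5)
Your proof is correct and matches the paper's own argument exactly: fix $a,b$, apply non-expansiveness of $h$ to $g(a),g(b)$, then non-expansiveness of $g$ to $a,b$, and chain by transitivity.
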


\begin{proof}
  Straightforward. We have for any $a, b\in \mathbb{M}^Y$ that
  \begin{align*}
    & \norm{ h(g(b)) \ominus h(g(a))} \sqsubseteq\\
    & \quad \sqsubseteq \norm{ g(b) \ominus g(a)}
    & \mbox{[by non-expansiveness of $h$]}\\
    & \quad \sqsubseteq \norm{ b \ominus a} & \mbox{[by
      non-expansiveness of $g$]} \tag*{\qedhere}
  \end{align*}
\end{proof}
\end{toappendix}

\subsection{Approximating the propagation of increases.}

Let $f : \monM^Y \to \monM^Z$ be a monotone function and take
$a, b \in \monM^Y$ with $a \sqsubseteq b$. We are interested in
the difference $b(y) \ominus a(y)$ for some $y\in Y$ and on how the
application of $f$ ``propagates'' this difference.
The reason is that, understanding that no increase can be
propagated will be crucial to establish when a fixpoint of a
non-expansive function $f$ is actually the largest one, and, more
generally, when a (pre-)fixpoint of $f$ is above the largest fixpoint.

In order to formalise the above intuition, we rely on tools from
abstract interpretation. In particular, the following pair of
functions, which, under a suitable condition, form a Galois
connection, will play a major role. For this purpose we fix
$a\in\monM^Y$, $\delta\in\monM$. The left adjoint
$\alpha_{a,\delta}$ takes as input a set $Y'\subseteq Y$ and, for $y\in Y'$,
it increases the values $a(y)$ by $\delta$, while the right adjoint
$\gamma_{a,\delta}$ takes as input a function $b\in \monM^Y$,
$b \in \interval{a}{a \oplus \delta}$ and checks for which parameters
$y\in Y$ the value $b(y)$ exceeds $a(y)$ by $\delta$.

We also define $\Ybot{Y}{a}$, the subset of elements in $Y$ where
$a(y)$ is not $1$ and thus there is a potential to increase, and
$\delta_a$, which gives us the least of such increases (i.e., the largest increase that can be used on all elements in $\Ybot{Y}{a}$ without ``overflowing'').

\begin{defi}[functions to sets, and vice versa]
  \label{de:galois}
  Let $\monM$ be an MV-algebra and let $Y$ be a finite set. Define the
  set $\Ybot{Y}{a} = \{ y\in Y \mid a(y) \neq 1\}$ (support of
  $\comp{a}$) and
  $\delta_a = \min\{ \comp{a(y)} \mid y \in \Ybot{Y}{a}\}$ with
  $\min\emptyset = 1$.
  
  For $0\sqsubset\delta \in \monM$ we consider the functions
  $\alpha_{a,\delta} : \Pow{\Ybot{Y}{a}} \to \interval{a}{a \oplus
    \delta}$ and
  $\gamma_{a,\delta} : \interval{a}{a \oplus \delta} \to
  \Pow{\Ybot{Y}{a}}$, defined, for $Y' \in \Pow{\Ybot{Y}{a}}$ and
  $b \in \interval{a}{a \oplus \delta}$, by
  \[ \alpha_{a,\delta}(Y') = a \oplus \delta_{Y'} \qquad
    \gamma_{a,\delta}(b) = \{ y\in \Ybot{Y}{a} \mid b(y) \ominus a(y)
    \sqsupseteq \delta \}. \]
\end{defi}

\begin{toappendix}
  \begin{lemma}[well-definedness]
    The functions $\alpha_{a,\delta}$, $\gamma_{a,\delta}$ from
    Def.~\ref{de:galois} are well-defined and monotone.
\end{lemma}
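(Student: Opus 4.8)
The plan is to verify three things for the pair $\alpha_{a,\delta},\gamma_{a,\delta}$: that each function lands in the claimed codomain, that each is monotone, and (implicitly, to make the codomains meaningful) that the stated domains are genuine sub-posets. For well-definedness of $\alpha_{a,\delta}$, I would take $Y'\subseteq\Ybot{Y}{a}$ and check that $a\oplus\delta_{Y'}\in\interval{a}{a\oplus\delta}$: the lower bound $a\sqsubseteq a\oplus\delta_{Y'}$ is immediate from the definition of the natural order, and the upper bound $a\oplus\delta_{Y'}\sqsubseteq a\oplus\delta$ follows pointwise, since $\delta_{Y'}(y)$ is either $\delta$ or $\zero$, both $\sqsubseteq\delta$, together with monotonicity of $\oplus$ (\autoref{le:mvprop}(\ref{le:mvprop:4})). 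For $\gamma_{a,\delta}$, the output is by construction a subset of $\Ybot{Y}{a}$, so there is nothing to check for the codomain beyond that.

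Next, monotonicity. For $\alpha_{a,\delta}$: if $Y'\subseteq Y''\subseteq\Ybot{Y}{a}$ then $\delta_{Y'}\sqsubseteq\delta_{Y''}$ pointwise (on $Y'$ both are $\delta$, on $Y''\setminus Y'$ the left side is $\zero$, elsewhere both are $\zero$), hence $a\oplus\delta_{Y'}\sqsubseteq a\oplus\delta_{Y''}$ by monotonicity of $\oplus$. For $\gamma_{a,\delta}$: if $b\sqsubseteq b'$ in $\interval{a}{a\oplus\delta}$ and $y\in\gamma_{a,\delta}(b)$, then $\delta\sqsubseteq b(y)\ominus a(y)\sqsubseteq b'(y)\ominus a(y)$ using that $\ominus$ is monotone in its first argument (\autoref{le:mvprop}(\ref{le:mvprop:4})), so $y\in\gamma_{a,\delta}(b')$.

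I would also record the small point that $\interval{a}{a\oplus\delta}$, ordered pointwise, is itself a complete lattice (a closed interval of the complete lattice $\monM^Y$) and $\Pow{\Ybot{Y}{a}}$ is a complete lattice, so that the phrase ``well-defined'' makes sense for later forming a Galois connection. None of these steps is hard; the only mild subtlety — and the closest thing to an obstacle — is keeping straight that $\alpha_{a,\delta}$ only increases coordinates in $\Ybot{Y}{a}$, so no ``overflow'' past $\one$ is created prematurely; but since the codomain is only required to be $\interval{a}{a\oplus\delta}$ and not something tighter, even coordinates with $a(y)\oplus\delta=\one$ cause no problem. Everything reduces to pointwise reasoning plus the monotonicity facts already collected in \autoref{le:mvprop}.
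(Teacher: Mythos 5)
Your proof is correct and takes essentially the same route as the paper's: verify $\alpha_{a,\delta}(Y')\in\interval{a}{a\oplus\delta}$ by a pointwise argument, note that $\gamma_{a,\delta}(b)\subseteq\Ybot{Y}{a}$ by construction, and check monotonicity using the monotonicity of $\oplus$ and of $\ominus$ in its first argument. You actually spell out the monotonicity step more explicitly than the paper, which simply calls it clear; the only mild difference is that the paper also notes that the clause $y\in\Ybot{Y}{a}$ in the definition of $\gamma_{a,\delta}$ is never restrictive (since $a(y)=\one$ forces $b(y)\ominus a(y)=\zero\not\sqsupseteq\delta$), a small observation you do not make but which is not needed for well-definedness as stated.
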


\begin{proof}
  The involved functions $\alpha_{a,\delta}$ and $\gamma_{a,\delta}$
  are well-defined. In fact, for $Y' \subseteq \Ybot{Y}{a}$, clearly
  $\alpha_{a,\delta} = a \oplus \delta_{Y'} \in \interval{a}{a \oplus
    \delta}$. Moreover, for $b \in \interval{a}{a \oplus \delta}$ we
  have $\gamma_{a,\delta}(b) \subseteq \Ybot{Y}{a}$. In fact, if
  $y \not\in \Ybot{Y}{a}$ then $a(y) = \one$, hence $b(y) = \one$ and
  thus $b(y) \ominus a(y) = \zero \not\sqsupseteq \delta$, and thus
  $y \not\in \gamma_{a,\delta}(b)$. Moreover, they are clearly
  monotone.
\end{proof}
\end{toappendix}

When $\delta$ is sufficiently small, the pair
$\langle \alpha_{a,\delta}, \gamma_{a,\delta} \rangle$ is a Galois
connection.

\begin{lemmarep}[Galois connection]
  \label{le:galois} 
  Let $\monM$ be an MV-algebra
  and $Y$ be a finite set.
  For $\zero \neq \delta \sqsubseteq \delta_a$, the pair
  $\langle \alpha_{a,\delta}, \gamma_{a,\delta} \rangle : \Pow{\Ybot{Y}{a}}
  \to \interval{a}{a \oplus \delta}$ is a Galois connection.
  \begin{center}
    \begin{tikzpicture}[->]
      \node (y) {$\Pow{\Ybot{Y}{a}}$}; \node [right=of y] (b)
      {$\interval{a}{a \oplus \delta}$}; \draw [->,thick,bend left] (y) to
      node [above]{$\alpha_{a,\delta}$} (b); \draw [->,thick,bend left]
      (b) to node [below]{$\gamma_{a,\delta}$} (y);
    \end{tikzpicture}
  \end{center}
\end{lemmarep}

\begin{proof}
  For all $Y' \in \Pow{\Ybot{Y}{a}}$  it holds
\begin{center}
  $\gamma_{a,\delta}(\alpha_{a,\delta} (Y'))  = \gamma_{a,\delta}(a \oplus
  \delta_{Y'}) = Y'$.
  \end{center}
  In fact, for all $y \in Y'$, $(a\oplus\delta_{Y'})(y)=a(y) \oplus \delta$. Moreover, and
  by the choice of $\delta$ and definition of $\Ybot{Y}{a}$, we have
  $\delta \sqsubseteq \delta_a \sqsubseteq \comp{a(y)}$, by
  Lemma~\ref{le:mvprop}(\ref{le:mvprop:9}), we have
  $(a\oplus\delta_{Y'})(y) \ominus a(y) = \delta$ hence
  $y \in \gamma_{a,\delta}(\alpha_{a,\delta} (Y'))$. Conversely, if
  $y \not\in Y'$, then $(a\oplus\delta_{Y'})(y)=a(y)$, and thus
  $(a\oplus\delta_{Y'})(y) \ominus a(y) = \zero \not\sqsupseteq \delta$.

  Moreover, for all
  $b \in \interval{a}{a \oplus \delta}$ we have
  \begin{center}
    $\alpha_{a,\delta} (\gamma_{a,\delta}(b)) = a \oplus \delta_{\gamma_{a,\delta}(b)} \sqsubseteq b$
  \end{center}
  In fact, for all $y \in Y$, if $y \in \gamma_{a,\delta}(b)$, i.e., 
  $\delta \sqsubseteq b(y) \ominus a(y)$ then
  $(a \oplus \delta_{\gamma_{a,\delta}(b)})(y) = a(y) \oplus \delta \sqsubseteq a(y) \oplus (b(y) \ominus
  a(y)) = b(y)$, by
  Lemma~\ref{le:mvprop}(\ref{le:mvprop:2}). If instead, $y \not\in \gamma_{a,\delta}(b)$, then
  $(a \oplus \delta_{\gamma_{a,\delta}(b)}(b))(y) = a(y) \sqsubseteq b(y)$.
\end{proof}

Observe that differently from what normally happens in abstract
interpretation, the component $\alpha$ of the Galois connection, i.e.,
the left adjoint, transforms abstract values (sets) into concrete ones
(functions) and thus it plays the role of a concretisation function.

\begin{exa}
  \label{ex:running-1}
  We illustrate the definitions with small examples whose sole
  purpose is to get a better intuition. (See
  Fig.~\ref{fig:galois-functions} for a visual representation.)
  Consider the MV-chain $\monM = [0,1]$, a set
  $Y = \{y_1,y_2,y_3,y_4\}$ and a function $a\colon Y\to [0,1]$
  with $a(y_1) = 0.2$, $a(y_2) = 0.4$, $a(y_3) = 0.9$, $a(y_4) =
  1$. In this case $\Ybot{Y}{a} = \{y_1,y_2,y_3\}$ and $\delta_a = 0.1$.

  Choose $\delta = 0.1$ and $Y' = \{y_1,y_3\}$. Then
  $\alpha_{a,\delta}(Y')$ is a function that maps $y_1 \mapsto 0.3$,
  $y_2 \mapsto 0.4$, $y_3 \mapsto 1$, $y_4 \mapsto 1$.

  We keep $\delta = 0.1$ and consider a function $b\colon Y\to [0,1]$ with $b(y_1) = 0.3$, $b(y_2) = 0.45$, $b(y_3) = b(y_4) =
  1$. Then $\gamma_{a,\delta}(b) = \{y_1,y_3\}$. 
  
  \begin{figure}
    \centering
    ${\dgreen a}\ =\ \raisebox{-30pt}{\scalebox{0.65}{\input{galois-a.tex}}}$ 

    $\alpha_{a,\delta}\colon$
    $Y'=\{y_1,y_3\}\quad \mapsto\quad 
    \raisebox{-30pt}{\scalebox{0.65}{\input{galois-a-plus-theta.tex}}}$
    
    $\gamma_{a,\delta}\colon$
    ${\blue b} = \raisebox{-30pt}{\scalebox{0.65}{\input{galois-b.tex}}} \quad
    \mapsto\quad  Y'=\{y_1,y_3\}$
    
    \caption{Visual representation of $\alpha_{a,\delta}$ and
      $\gamma_{a,\delta}$}
    \label{fig:galois-functions}
  \end{figure}
\end{exa}

Whenever $f$ is non-expansive, it is easy to see that it restricts to
a function
$f : \interval{a}{a \oplus \delta} \to \interval{f(a)}{f(a) \oplus \delta}$
for all $\delta \in \monM$.

\begin{toappendix}
\begin{lemma}[restricting non-expansive functions to intervals]
  \label{le:restriction}
  Let $\monM$ be an MV-chain, let $Y, Z$ be finite sets
  $f: \monM^Y\to \monM^Z$ be a non-expansive function. Then $f$
  restricts to a function
  $f_{a,\delta} : \interval{a}{a \oplus \delta} \to
  \interval{f(a)}{f(a)\oplus \delta}$, defined by
  $f_{a,\delta}(b) = f(b)$.
\end{lemma}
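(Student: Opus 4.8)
The plan is to show two containments: first that $f(b) \in \interval{f(a)}{f(a) \oplus \delta}$ for every $b \in \interval{a}{a \oplus \delta}$, which is exactly what is needed for the restriction $f_{a,\delta}$ to be well-defined with the claimed codomain.

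First I would establish the lower bound $f(a) \sqsubseteq f(b)$. Since $a \sqsubseteq b$ (as $b \in \interval{a}{a \oplus \delta}$), and non-expansive functions on MV-chains are monotone by \autoref{le:non-exp-monotone}, we get $f(a) \sqsubseteq f(b)$ immediately.

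Next I would establish the upper bound $f(b) \sqsubseteq f(a) \oplus \delta$. Here I would invoke the characterisation of non-expansiveness from \autoref{le:non-expansiveness}: since $b \sqsubseteq a \oplus \delta$, monotonicity gives $f(b) \sqsubseteq f(a \oplus \delta)$, so it suffices to show $f(a \oplus \delta)(z) \sqsubseteq f(a)(z) \oplus \delta$ for all $z \in Z$. Here $\delta$ denotes the constant function $\delta_Y$, i.e.\ the pointwise-$\delta$ element, so $a \oplus \delta$ is $a$ shifted by the constant $\delta$; applying \autoref{le:non-expansiveness} with $\theta = \delta$ yields $f(a \oplus \delta)(z) \ominus f(a)(z) \sqsubseteq \delta$, which by \autoref{le:mvprop}(\ref{le:mvprop:7}) is equivalent to $f(a \oplus \delta)(z) \sqsubseteq f(a)(z) \oplus \delta$. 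Combining the two bounds gives $f(b) \in \interval{f(a)}{f(a) \oplus \delta}$, so $f_{a,\delta}(b) := f(b)$ is the desired well-defined restriction; monotonicity of $f_{a,\delta}$ is inherited from that of $f$.

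I do not expect any real obstacle here — the statement is essentially a direct corollary of \autoref{le:non-expansiveness} together with monotonicity. The only point requiring a little care is the interpretation of $a \oplus \delta$: one must read $\delta$ as the constant function $\delta_Y$ and confirm that \autoref{le:non-expansiveness} is stated precisely for this situation (constant shift $\theta$), which it is. A subtlety worth a remark is whether the lemma needs $\delta \sqsubseteq \delta_a$ or any smallness hypothesis on $\delta$; it does not — unlike the Galois connection of \autoref{le:galois}, the restriction works for arbitrary $\delta \in \monM$, since the argument only uses monotonicity and the $\ominus$-characterisation of non-expansiveness, neither of which constrains $\delta$.
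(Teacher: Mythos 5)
Your proof is correct and matches the paper's argument exactly: monotonicity gives the lower bound $f(a)\sqsubseteq f(b)$, and the upper bound $f(b)\sqsubseteq f(a\oplus\delta)\sqsubseteq f(a)\oplus\delta$ follows from monotonicity together with \autoref{le:non-expansiveness}. Your closing remark that no smallness hypothesis on $\delta$ is needed is also accurate.
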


\begin{proof}
  Given $b \in \interval{a}{a \oplus \delta}$, by monotonicity of $f$
  we have that $f(a) \sqsubseteq f(b)$. Moreover,
  $f(b) \sqsubseteq f(a \oplus \delta) \sqsubseteq f(a) \oplus
  \delta$, where the last passage is motivated by
  Lemma~\ref{le:non-expansiveness}.
\end{proof}

In the following we will simply write $f$ instead of $f_{a,\delta}$.

\end{toappendix}

As mentioned before, a crucial result shows that for all non-expansive
functions, under the assumption that $Y,Z$ are finite and the order on
$\monM$ is total, we can suitably approximate the propagation of
increases. In order to state this result, a useful tool is a notion
of approximation of a function.

\begin{defi}[$(\delta,a)$-approximation]
  \label{de:approximation}
  Let $\monM$ be an MV-chain, let $Y$, $Z$ be finite sets and let
  $f: \monM^Y \to \monM^Z$ be a non-expansive function. For
  $a \in \monM^Y$ and 
  any $\delta\in \monM$ 
   we define
  $f_{a,\delta}^{\#} : \Pow{\Ybot{Y}{a}} \to \Pow{\Ybot{Z}{f(a)}}$ as
  $f_{a,\delta}^{\#} = \gamma_{f(a), \delta} \circ f \circ
  \alpha_{a,\delta}$.
\end{defi}

Given $Y' \subseteq \Ybot{Y}{a}$, its image
$f_{a,\delta}^{\#}(Y') \subseteq \Ybot{Z}{f(a)}$ is the set of points
$z \in \Ybot{Z}{f(a)}$ such that
$\delta \sqsubseteq f(a \oplus \delta_{Y'})(z) \ominus f(a)(z)$, i.e.,
the points to which $f$ propagates an increase of the function $a$
with value $\delta$ on the subset $Y'$.

\begin{exa}
  \label{ex:running-2}
  We continue with Example~\ref{ex:running-1} and consider the
  function $f\colon [0,1]^Y\to [0,1]^Y$ with $f(b) = b \ominus 0.3$
  for every $b \in [0,1]^Y$,
  which can easily be seen to be non-expansive. We again consider
  $a\colon Y\to [0,1]$ and $\delta=0.1$ as in
  Example~\ref{ex:running-1}, and $Y' = \{y_1,y_2,y_3\}$.
  The maps $a$, $\alpha_{a,\delta}(Y')$,
  $f(a)$ and $f(\alpha_{a,\delta}(Y'))$ are given in the table below
  and we obtain
  $f_{a,\delta}^\#(Y') =
  \gamma_{f(a),\delta}(f(\alpha_{a,\delta}(Y'))) = \{y_2,y_3\}$, that
  is only the increase at $y_2$ and $y_3$ can be propagated, while the
  value of $y_1$ is too low and $y_4$ is not even contained in
  $\Ybot{Y}{a}$ (the domain of $f_{a,\delta}^\#$), since its value is
  already $1.0$ and there is no slack left. That is, we obtain those
  elements of $Y$ for which the last two lines in the table below
  differ by $0.1$.

  \begin{center}
    \begin{tabular}{l|c|c|c|c|}
      & $y_1$ & $y_2$ & $y_3$ & $y_4$ \\ \hline
      $a$ & $0.2$ & $0.4$ & $0.9$ & $1.0$ \\
      $\alpha_{a,\delta}(Y')$ & $0.3$ & $0.5$ & $1.0$ & $1.0$ \\
      $f(a)$ & $0.0$ & $0.1$ & $0.6$ & $0.7$ \\
      $f(\alpha_{a,\delta}(Y'))$ & $0.0$ & $0.2$ & $0.7$ & $0.7$ \\ \hline
    \end{tabular}
  \end{center}

  In general we have $f_{a,\delta}^\#(Y') = Y'\cap \{y_2,y_3\}$ if
  $\delta\le \delta_a = 0.1$, $f_{a,\delta}^\#(Y') = Y'\cap \{y_2\}$
  if $0.1 < \delta\le 0.6$ and $f_{a,\delta}^\#(Y') = \emptyset$ if
  $0.6 < \delta$.
\end{exa}

We now show that $f_{a,\delta}^{\#}$ is antitone in the
parameter $\delta$, a non-trivial result.

\begin{lemmarep}[antitonicity]
  \label{le:approximation-monotonic}
  Let $\monM$ be an MV-chain, let $Y$, $Z$ be finite sets, let
  $f: \monM^Y \to \monM^Z$ be a non-expansive function and let
  $a \in \monM^Y$. For $\theta, \delta \in \monM$, if 
  $\theta \sqsubseteq \delta$
   then
  $f_{a,\delta}^{\#} \subseteq f_{a,\theta}^{\#}$.
\end{lemmarep}

\begin{proof}
  Let $Y' \subseteq \Ybot{Y}{a}$ and let us prove that
  $f_{a,\delta}^{\#}(Y') \subseteq f_{a,\theta}^{\#}(Y')$. Take
  $z \in f_{a,\delta}^{\#}(Y')$. This means that
  $\delta \sqsubseteq f(a \oplus \delta_{Y'})(z) \ominus f(a)(z)$.
  
  We have
  \begin{align*}
    & \delta \sqsubseteq f(a \oplus \delta_{Y'})(z) \ominus f(a)(z) \\
    & \quad \mbox{[by hypothesis]}\\
    & = f(a \oplus \theta_{Y'} \oplus (\delta \ominus \theta)_{Y'})(z) \ominus f(a)(z) \\
    & = f(a \oplus \theta_{Y'} \oplus (\delta \ominus \theta)_{Y'})(z) \ominus
    f(a \oplus \theta_{Y'})(z) \oplus
    f(a \oplus \theta_{Y'})(z) 
    \ominus f(a)(z) \\
    & \sqsubseteq \norm{f(a \oplus \theta_{Y'} \oplus (\delta \ominus \theta)_{Y'}) \ominus f(a \oplus \theta_{Y'})}
    \oplus
    f(a \oplus \theta_{Y'})(z) \ominus f(a)(z) \\
    & \quad \mbox{[by definition of norm and monotonicity of $\oplus$]}\\
    & \sqsubseteq \norm{a \oplus \theta_{Y'} \oplus (\delta \ominus \theta)_{Y'} \ominus (a \oplus \theta_{Y'})}
    \oplus
    f(a \oplus \theta_{Y'})(z) \ominus f(a)(z) \\
    & \quad \mbox{[by non-expansiveness of $f$ and monotonicity of $\oplus$]}\\
    & \sqsubseteq \norm{(\delta \ominus \theta)_{Y'}} \oplus f(a \oplus \theta_{Y'})(z) \ominus f(a)(z)\\
    & \sqsubseteq (\delta \ominus \theta) \oplus f(a \oplus \theta_{Y'})(z) \ominus f(a)(z)\\
    & \quad \mbox{[by definition of norm]}
  \end{align*}
  If we subtract $\delta \ominus \theta$ on both sides, we get
  $\delta \ominus (\delta \ominus \theta) \sqsubseteq f(a \oplus
  \theta_{Y'})(z) \ominus f(a)(z)$, and, as above, since, by
  Lemma~\ref{le:mvprop}(\ref{le:mvprop:10}),
  $\delta \ominus (\delta \ominus \theta) = \theta$ we conclude
  \begin{center}
    $\theta \sqsubseteq f(a \oplus \theta_{Y'})(z) \ominus f(a)(z)$
  \end{center}
  which means $z \in f_{a,\theta}^{\#}(Y')$.
\end{proof}

Since $f_{a,\delta}^{\#}$ increases when $\delta$ decreases and there
are only finitely many such functions, there must be a value
$\inc{a}{f}$ such that all functions $f_{a,\delta}^{\#}$ for
$0\sqsubset \delta\sqsubseteq\inc{a}{f}$ are equal. The resulting function will be the approximation of interest. 

We next show how $\inc{a}{f}$ can be determined. We start by observing
that for each $z \in \Ybot{Z}{f(a)}$ and $Y' \subseteq \Ybot{Y}{a}$ there is a largest increase
$\theta$ such that $z \in f_{a,\theta}^{\#}(Y')$.

\begin{lemmarep}[largest increase for a point]
  \label{le:approx-max-theta}
  Let $\monM$ be a complete MV-chain, let $Y$, $Z$ be finite sets, let
  $f: \monM^Y \to \monM^Z$ be a non-expansive function and fix
  $a \in \monM^Y$. For all $z \in \Ybot{Z}{f(a)}$ and $Y' \subseteq \Ybot{Y}{a}$ the set
  $\{ \theta \in \monM \mid z \in f_{a,\theta}^\#(Y') \}$
  has a maximum, that we denote by $\inc{a}{f}(Y',z)$.
\end{lemmarep}

\begin{proof}
  Let
  $V = \{ \theta  \in \monM \mid z \in f_{a,\theta}^\#(Y')
  \}$. Expanding the definition we have that
  \begin{center}
    $V= \{ \theta  \in \monM \mid \theta \sqsubseteq f(a
    \oplus \theta_{Y'})(z) \ominus f(a)(z)\}$.
  \end{center}
  If we let $\eta = \sup V$, for all $\theta \in V$, since
  $\theta_{Y'} \sqsubseteq \eta_{Y'}$, clearly, by monotonicity
  \begin{center}
     $\theta \sqsubseteq f(a\oplus \eta_{Y'})(z) \ominus f(a)(z)$
   \end{center}
   and therefore, by definition of supremum,
   $\eta \sqsubseteq f(a\oplus \eta_{Y'})(z) \ominus f(a)(z)$, i.e.,
   $\eta \in V$ is a maximum, as desired.
\end{proof}

We can then provide an explicit definition of $\inc{a}{f}$ and of the
approximation of a function.

\begin{lemmarep}[$a$-approximation for a function]
  \label{le:approximation-2}
  Let $\monM$ be a complete MV-chain, let $Y, Z$ be finite sets and let
  $f: \monM^Y\to \monM^Z$ be a non-expansive function. Let
  \begin{center}
    $\inc{a}{f} = \min \{ \inc{a}{f}(Y',z) \mid Y' \subseteq
    \Ybot{Y}{a}\ \land\ z \in \Ybot{Z}{f(a)}\ \land\ \inc{a}{f}(Y',z)
    \neq 0\} \cup \{ \delta_a\}$.
  \end{center}
  Then for all $0 \neq \delta \sqsubseteq \inc{a}{f}$ it holds that
  $f_{a,\delta}^\# = f_{a,\inc{a}{f}}^\#$.

  The function $f_{a,\inc{a}{f}}^\#$ is called the
  \emph{$a$-approximation} of $f$ and it is denoted by $f_a^\#$.
\end{lemmarep}

\begin{proof}
  Since $\delta \sqsubseteq \inc{a}{f}$, by
  Lemma~\ref{le:approximation-monotonic} we have
  $f_{a,\delta}^\#\supseteq f_{a,\inc{a}{f}}^\#$. For the other
  inclusion let $Y' \subseteq \Ybot{Y}{a}$. We have
  \[ f_{a,\delta}^\#(Y') = \{z\in \Ybot{Z}{f(a)} \mid f(a\oplus
    \delta_{Y'})(z)\ominus f(a)(z) \sqsupseteq \delta\} \] by
  definition. Assume that there exists $z\in f_{a,\delta}^\#(Y')$
  where
  $f(a\oplus (\inc{a}{f})_{Y'})(z)\ominus f(a)(z) \not\sqsupseteq
  \inc{a}{f}$. But this is a contradiction, since $\inc{a}{f}$ is the
  minimum of all such non-zero values.
\end{proof}

In the following, we show that indeed, for all non-expansive functions, the
$a$-approximation properly approximates the propagation of increases.
Given an MV-chain $\monM$ and a finite set $Y$, we first observe that
each function $b \in \monM^Y$ can be expressed as a suitable sum of
functions of the shape $\delta_{Y'}$.

\begin{lemmarep}[standard form]
  \label{le:standard-form}
  Let $\monM$ be an MV-chain and let $Y$ be a finite set. Then for any
  $b \in \monM^Y$ there are $Y_1, \ldots, Y_n \subseteq Y$ with
  $Y_{i+1} \subseteq Y_{i}$ for $i \in \{ 1, \ldots, n-1\}$ and
  $\delta^i \in \monM$, $0 \neq \delta^i \sqsubseteq \comp{\bigoplus_{j=1}^{i-1} \delta^j}$ for
  $i \in \{ 1, \ldots, n\}$ such that
  \begin{center}
    $b = \bigoplus_{i=1}^n \delta^i_{Y_i}$
    \quad and \quad $\norm{b} = \bigoplus_{i=1}^n \delta^i$.
  \end{center}
  where we assume that an empty sum evaluates to $\zero$.
\end{lemmarep}

\begin{proof}
  Given $b \in \monM^Y$, consider $V = \{ b(y) \mid y \in Y \}$. If
  $V$ is empty, then $Y$ is empty and thus $b = 1_Y$, i.e., we can
  take $n=1$, $\delta^1 = 1$ and $Y_1 = Y$. Otherwise, if
  $Y\neq \emptyset$, then $V$ is a finite non-empty set. Let
  $V = \{ v_1, \ldots, v_n \}$, with $v_i \sqsubseteq v_{i+1}$ for
  $i \in \{ 1, \ldots, n-1\}$. For $i \in \{ 1, \ldots, n\}$ define
  $Y_i = \{ y \in Y \mid v_i \sqsubseteq b(y) \}$. Clearly,
  $Y_1 \supseteq Y_2 \supseteq \ldots \supseteq Y_n$. Moreover let
  $\delta^1 = v_1$ and $\delta^{i+1} = v_{i+1} \ominus v_{i}$ for
  $i \in \{ 1, \ldots, n-1\}$.

  Observe that for each $i$, we have
  $v_i = \bigoplus_{j=1}^i \delta^i$, as it can easily shown by
  induction. Hence
  $\delta^{i+1} = v_{i+1} \ominus v_i = v_{i+1} \ominus
  \bigoplus_{j=1}^i \delta^i \sqsubseteq 1 \ominus \bigoplus_{j=1}^i
  \delta^i = \comp{\bigoplus_{j=1}^i \delta^i}$.

  We now show that $b =\bigoplus_{i=1}^n \delta^i_{Y_i}$ by induction on
  $n$.
  \begin{itemize}
  \item 
    If $n=1$ then $V=\{v_1\}$ and thus $b$ is a constant function
    $b(y) = v_1$ for all $y \in Y$. Hence $Y_1 = Y$ and thus
    $b = \delta^1_Y = \delta^1_{Y_1}$, as desired.

  \item 
    If $n>1$, let $b' \in \monM^Y$ defined by $b'(y) = b(y)$ for
    $y \in Y \backslash Y_n$ and $b'(y) = v_{n-1}$ for $y \in Y_n$. Note
    that $\{ b'(y) \mid y \in Y \} = \{ v_1, \ldots, v_{n-1} \}$. Hence,
    by inductive hypothesis, $b' = \bigoplus_{i=1}^{n-1}
    \delta^i_{Y_i}$. Moreover, $b'(y) = b \oplus \delta^n_{Y_n}$, and thus we
    conclude.
  \end{itemize}

  Finally observe that the statement requires $\delta^i \neq 0$ for
  all $i$. We can enjoy this property by just omitting the first
  summand when $v_1 = 0$.
\end{proof}

The above characterisation allows us to show a technical property of the functions in the interval $\interval{a}{a \oplus \delta}$ of interest.

\begin{lemmarep}
  \label{le:approx-fb}
  Let $\monM$ be an MV-chain, let $Y$, $Z$ be finite sets and let
  $f: \monM^Y \to \monM^Z$ be a non-expansive function. Let
  $a \in \monM^Y$.
  For $b \in \interval{a}{a \oplus \delta}$, let 
  $b \ominus a = \bigoplus_{i=1}^n \delta^i_{Y_i}$ be a standard form
  for $b \ominus a$.
  If $\gamma_{f(a),\delta}(f(b)) \neq \emptyset$
  then $Y_n = \gamma_{a,\delta}(b)$ and
  $\gamma_{f(a),\delta}(f(b)) \subseteq
  f_{a,\delta^n}^\#(Y_n)$.
\end{lemmarep}

\begin{proof}
  By hypothesis $\gamma_{f(a),\delta}(f(b)) \neq \emptyset$. Let
  $z \in \gamma_{f(a),\delta}(f(b))$. This means that
  $\delta \sqsubseteq f(b)(z) \ominus f(a)(z)$.
  First observe that
    \begin{align*}
      & \delta \sqsubseteq f(b)(z) \ominus f(a)(z)
      & \mbox{[by hypothesis]}\\
      & \quad \sqsubseteq \norm{f(b) \ominus f(a)}
      & \mbox{[by definition of norm]}\\
      & \quad \sqsubseteq \norm{b \ominus a}
      & \mbox{[by non-expansiveness of $f$]}\\
      & \quad \sqsubseteq \delta
      & \mbox{[since $b \in \interval{a}{a\oplus \delta}$]}
    \end{align*} Hence
    \begin{center}
      $\norm{f(b) \ominus f(a)} = \delta = \norm{b \ominus a} =  \bigoplus_{i=1}^n \delta^i$.
    \end{center}

    Also observe that, since $\delta^n \neq 0$, we have
    $(b \ominus a) (z) = \delta$ iff $z\in Y_n$. In fact, if
    $z \in Y_n$ then $z \in Y_i$ for all $i \in \{1, \ldots, n\}$ and
    thus
    $(b \ominus a)(z) = \bigoplus_{i=1}^n \delta^i_{Y_i}(z) =
    \bigoplus_{i=1}^n \delta^i =\delta$. Conversely, if
    $z \not\in Y_n$, then
    $(b \ominus a) (z) \sqsubseteq \bigoplus_{i=1}^{n-1} \delta^i
    \sqsubset \delta$. In fact, $0 \sqsubset \delta^n$ and 
    $\bigoplus_{i=1}^{n-1} \delta^i \sqsubseteq \comp{\delta^n}$. Thus by Lemma~\ref{le:mvprop}(\ref{le:mvprop:8}),
    $ \bigoplus_{i=1}^{n-1} \delta^i \sqsubset \delta^n \oplus \bigoplus_{i=1}^{n-1} \delta^i 
    = \bigoplus_{i=1}^{n} \delta^i = \delta$.
    Hence $Y_n = \gamma_{a,\delta}(b)$.

    \smallskip

    Let us now show that
    $\gamma_{f(a),\delta}(f(b)) \subseteq
    f_{a,\delta^n}^\#(Y_n)$. Given $z \in \gamma_{f(a),\delta}(f(b))$,
    we show that $z \in f_{a,\delta^n}^\#(Y_n)$. Observe that
    \begin{align*}
      & \delta \sqsubseteq f(b)(z) \ominus f(a)(z) =\\
      & \quad \mbox{[by hypothesis]}\\     
      & = f(a \oplus(b \ominus a))(z) \ominus f(a)(z) =\\      
      & \quad \mbox{[by Lemma~\ref{le:mvprop}(\ref{le:mvprop:2}), since $a \sqsubseteq b$]}\\
      & = f(a \oplus \bigoplus_{i=1}^n \delta^i_{Y_i})(z) \ominus f(a)(z) =\\
      & \quad \mbox{[by construction]}\\
      & = f(a \oplus \bigoplus_{i=1}^n \delta^i_{Y_i}))(z) \ominus f(a \oplus \delta^n_{Y_n})(z) \oplus f(a \oplus \delta^n_{Y_n})(z) \ominus f(a)(z)\\
      & \quad \mbox{[by Lemma~\ref{le:mvprop}(\ref{le:mvprop:2}), since $f(a \oplus \delta^n_{Y_n})(z) \sqsubseteq f(a \oplus \bigoplus_{i=1}^n \delta^i_{Y_i})(z)$]}\\
      & \sqsubseteq \norm{f(a \oplus \bigoplus_{i=1}^n \delta^i_{Y_i}) \ominus f(a \oplus \delta^n_{Y_n})} \oplus f(a \oplus \delta^n_{Y_n})(z) \ominus f(a)(z)\\
      & \quad \mbox{[by definition of norm and monotonicity of $\oplus$]}\\
      & \sqsubseteq \norm{a \oplus \bigoplus_{i=1}^n \delta^i_{Y_i} \ominus (a \oplus \delta^n_{Y_n})} \oplus f(a \oplus \delta^n_{Y_n})(z) \ominus f(a)(z)\\
      & \quad \mbox{[by non-expansiveness of $f$ and monotonicity of $\oplus$]}\\
      & = \norm{a \oplus \delta^n_{Y_n} \oplus \bigoplus_{i=1}^{n-1} \delta^i_{Y_i}\ominus (a \oplus \delta^n_{Y_n})} \oplus f(a \oplus \delta^n_{Y_n})(z) \ominus f(a)(z)\\
      & \quad \mbox{[by algebraic manipulation]}\\
      & \sqsubseteq \norm{\bigoplus_{i=1}^{n-1} \delta^i_{Y_i} } \oplus f(a \oplus \delta^n_{Y_n})(z) \ominus f(a)(z)\\
      & \quad \mbox{[by Lemma~\ref{le:mvprop}(\ref{le:mvprop:6}) and monotonicity of norm]}\\
      & \sqsubseteq \bigoplus_{i=1}^{n-1} \delta^i \oplus f(a \oplus \delta^n_{Y_n})(z) \ominus f(a)(z)\\
      & \quad \mbox{[by Lemma~\ref{le:norm}(\ref{le:norm:1}) and the fact that $\norm{\delta^i_{Y_i}} = \delta^i$]}\\
      & =  (\delta \ominus \delta^n) \oplus f(a \oplus \delta^n_{Y_n})(z) \ominus f(a)(z)\\
      & \quad \mbox{[by construction, since $\delta^n = \comp{\bigoplus_{i=1}^{n-1} \delta^i}$]}\\
    \end{align*}
     If we subtract
    $\delta \ominus \delta^n$ on both sides, we get
    $\delta \ominus (\delta \ominus \delta^n) \sqsubseteq f(a \oplus
    \delta^n_{Y_n})(z) \ominus f(a)(z)$, i.e., since, by
    Lemma~\ref{le:mvprop}(\ref{le:mvprop:10}),
    $\delta \ominus (\delta \ominus \delta^n) = \delta^n$ we conclude
    \begin{center}
      $\delta^n \sqsubseteq f(a \oplus \delta^n_{Y_n})(z) \ominus f(a)(z)$.
    \end{center}
    Hence
    $z \in \gamma_{f(a),\delta^n}(f(\alpha_{a,\delta^n}(Y_n)) =
    f_{a,\delta^n}^\#(Y_n)$, which is the desired result.
\end{proof}

We can finally prove the main result about legitimacy of the approximation.

\begin{theorem}[approximation of non-expansive functions]
  \label{th:approximation}
  Let $\monM$ be a complete MV-chain, let $Y, Z$ be finite sets and let
  $f: \monM^Y\to \monM^Z$ be a non-expansive function.
  Then for all $\zero \sqsubset\delta \in \monM$:
  \begin{enumerate}[a.]

  \item
    \label{th:approximation:1}
    $\gamma_{f(a),\delta} \circ f \subseteq f^{\#}_a \circ \gamma_{a,\delta}$

   \item
    \label{th:approximation:2}
    for $\delta
      \sqsubseteq \delta_a$: $\delta
      \sqsubseteq \inc{a}{f}$ iff
    $\gamma_{f(a),\delta} \circ f = f^{\#}_a \circ
      \gamma_{a,\delta}$
    \end{enumerate}
    \[
      \xymatrix@R=6mm{
        [a,a\oplus\delta] \ar[d]_{f}
        \ar[r]^{\gamma_{a,\delta}} \ar@{}[dr] | {\sqsubseteq} & \Pow{\Ybot{Y}{a}} \ar[d]^{f^{\#}_a} \\
        [f(a),f(a)\oplus \delta] \ar[r]_(.6){\gamma_{f(a),\delta}} &
        \Pow{\Ybot{Z}{f(a)}}
      }
    \]
  \end{theorem}

\begin{proof}

  \begin{enumerate}[a.]

  \item Let $b \in \interval{a}{a \oplus \delta}$. First note that
    whenever $\gamma_{f(a),\delta}(f(b)) = \emptyset$, %
    the desired
    inclusion obviously holds.

    If instead $\gamma_{f(a),\delta}(f(b)) \neq \emptyset$, let
    $b \ominus a = \bigoplus_{i=1}^n \delta^i_{Y_i}$ be a standard
    form with $\delta^n \neq 0$. First observe that, by
    Lemma~\ref{le:approx-fb}, we have $Y_n = \gamma_{a,\delta^n}(b)$ and
    \begin{equation}
      \label{eq:th:approximation:1}
      \gamma_{f(a),\delta}(f(b)) \subseteq f_{a,\delta^n}^\#(Y_n).
    \end{equation}

    For all $z \in f_{a,\delta^n}^\#(Y_n)$, by definition of
    $\inc{a}{f}(Y_n,z)$ we have that
    $0 \sqsubset \delta_n \sqsubseteq \inc{a}{f}(Y_n,z)$, therefore
    $\inc{a}{f} \sqsubseteq \inc{a}{f}(Y_n,z)$. Moreover,
    $z \in f_{a,\inc{a}{f}(Y_n,z)}^\#(Y_n) \subseteq
    f_{a,\inc{a}{f}}^\#(Y_n) = f_a^{\#}(Y_n)$, where the last
    inequality is motivated by Lemma~\ref{le:approximation-monotonic}
    since $\inc{a}{f} \sqsubseteq \inc{a}{f}(Y_n,z)$.
    Therefore,
    $f_{a,\delta^n}^\#(Y_n) \subseteq f_a^{\#}(\gamma_{a,\delta} (b))$,
    which combined with (\ref{eq:th:approximation:1}) gives the
    desired result.

  \item For (\ref{th:approximation:2}), we first show the direction
    from left to right. Assume that $\delta \sqsubseteq
    \inc{a}{f}$. By (a) clearly,
    $\gamma_{f(a),\delta} \circ f (b) \subseteq f^{\#}_a \circ
    \gamma_{a,\delta} (b)$. For the converse inclusion, note that:
    \begin{align*}
      & f_a^{\#} (\gamma_{a,\delta}(b)) & \mbox{[by definition of $f_a^{\#}$]}\\
      & \quad = f_{a,\inc{a}{f}}^\# (\gamma_{a,\delta}(b)) & \mbox{[by Lemma~\ref{le:approximation-monotonic}, since $\delta\sqsubseteq \inc{a}{f}$]}\\
      & \quad \subseteq f_{a,\delta}^{\#} (\gamma_{a,\delta}(b)) & \mbox{[by definition of $f_{a,\delta}^{\#}$]}\\
      & \quad = \gamma_{f(a),\delta}(f(\alpha_{a,\delta}(\gamma_{a,\delta}(b)))) & \mbox{[since $\alpha_{a,\delta} \circ \gamma_{a,\delta}(b) \sqsubseteq b$]}\\
      & \quad \subseteq \gamma_{f(a),\delta}(f(b))
    \end{align*}
    as desired.

    For the other direction, assume
    $\gamma_{f(a),\delta} \circ f(b) = f_a^\# \circ
    \gamma_{a,\delta}(b)$ holds for all $b \in [a,a\oplus \delta]$.
    Now, for every $Y' \subseteq \Ybot{Y}{a}$ we have
    $f_{a,\delta}^\#(Y') = \gamma_{f(a),\delta} \circ f \circ
    \alpha_{a,\delta} (Y') = f_a^\# \circ \gamma_{a,\delta} \circ
    \alpha_{a,\delta} (Y')$. We also have
    $\gamma_{a,\delta} \circ \alpha_{a,\delta} (Y') = Y'$ (see proof
    of Lemma~\ref{le:galois}), thus
    $f_{a,\delta}^\# (Y') = f_a^\#(Y')$. For any $\delta$ with
    $\inc{a}{f}\sqsubset \delta \sqsubseteq \delta_a$ there exists
    $Y'\subseteq \Ybot{Y}{a}$ and $z\in \Ybot{Z}{f(a)}$ with
    $z \in f_a^\# (Y')$ but $z \notin f_{a,\delta}^\#(Y')$, by
    definition of $\inc{a}{f}$. Therefore
    $\delta \sqsubseteq \inc{a}{f}$ has to hold.
    \qedhere
  \end{enumerate}
\end{proof}

Note that if $Y=Z$ and $a$ is a fixpoint of $f$, i.e., $a = f(a)$,
then condition~(\ref{th:approximation:1}) above corresponds exactly to
soundness in the sense of abstract
interpretation~\cite{cc:ai-unified-lattice-model}. Moreover, when
$\delta \sqsubseteq \delta_a$ and thus
$\langle \alpha_{a,\delta}, \gamma_{a,\delta} \rangle$ is a Galois
connection, $f_{a,\delta}^\# = \gamma_{a,\delta} \circ f \circ \alpha_{a,\delta}$
is the best correct approximation of $f$. In particular, when $\delta \sqsubseteq \inc{a}{f}$, such a best correct approximation is $f_a^\#$, the $a$-approximation of $f$, i.e., it becomes independent from $\delta$,
and condition~(\ref{th:approximation:2}) corresponds to
($\gamma$-)completeness~\cite{GRS:MAIC} (see also Section~\ref{se:setup}).

\section{Proof rules}
\label{se:proof-rules}

In this section we formalise the proof technique outlined in the
introduction for showing that a fixpoint is the largest and, more
generally, for checking over-approximations of greatest fixpoints of
non-expansive functions.

\subsection{Proof rules for fixpoints}

Consider a monotone function $f : \monM^Y \to \monM^Y$  for some finite set $Y$. We first focus on the problem of establishing whether some 
given fixpoint $a$ of $f$ coincides with $\nu f$ (without explicitly
knowing $\nu f$), and, in case it does not, finding an ``improvement'', i.e., a post-fixpoint of $f$, larger than $a$.
To this aim we need a technical lemma.

\begin{lemmarep}
  \label{le:technical-soundness}
  Let $\monM$ be a complete
  MV-chain, $Y$ a finite set and $f : \monM^Y\to \monM^Y$ be a
  non-expansive function.
  Let $a \in \monM^Y$ be a pre-fixpoint of $f$ (i.e.,
  $f(a)\sqsubseteq a$), let
  $f_a^{\#}: \Pow{\Ybot{Y}{a}} \to \Pow{\Ybot{Y}{f(a)}}$ be the
  $a$-approximation of $f$.
  Assume $\nu f \not\sqsubseteq a$ and let
  $Y' = \{ y \in \Ybot{Y}{a} \mid \nu f (y) \ominus a(y) = \norm{\nu f \ominus
    a}\}$. Then for all $y \in Y'$ it holds $a(y) = f(a)(y)$ and $Y' \subseteq f_a^{\#}(Y')$.
\end{lemmarep}

\begin{proof}
  Let $\delta =  \norm {\nu f \ominus a}$. Assume $\nu f\not\sqsubseteq a$, i.e., there exists $y\in Y$ such
  that $\nu f(y)\not\sqsubseteq a(y)$. Since the order is total, this
  means that $a(y)\sqsubset \nu f(y)$. Hence, by
  Lemma~\ref{le:mvprop}(\ref{le:mvprop:5}),
  $\nu f(y) \ominus a(y) \sqsupset \zero$.
  Then $\delta = \norm {\nu f \ominus a} \sqsupset \zero$. Moreover,
  for all $y \in Y'$,
  $\comp{a(y)} = 1 \ominus a(y) \sqsupseteq \nu f(y) \ominus a(y) =
  \delta$.

  \smallskip

  First, observe that
  \begin{equation}
    \label{eq:soundness:1}
    \nu f \sqsubseteq a \oplus \delta,
  \end{equation}
  since for all $y\in Y$ $\nu f(y)\ominus a(y) \sqsubseteq \delta$ by
  definition of $\delta$ and then (\ref{eq:soundness:1}) follows from
  Lemma~\ref{le:mvprop}(\ref{le:mvprop:7}).

  \smallskip

  Concerning the first part, let $y \in Y'$. Since $a$ is a
  pre-fixpoint, $f(a)(y) \sqsubseteq a(y)$. Assume by contradiction
  that $f(a)(y) \sqsubset a(y)$. Then we have
  \begin{align*}
    & f(a\oplus\delta)(y) =\\
    & \quad \mbox{[by Lemma~\ref{le:mvprop}(\ref{le:mvprop:2}), since $f$ is monotone and thus $f(a) \sqsubseteq f(a\oplus \delta)$]}\\
    & = f(a)(y) \oplus (f(a\oplus\delta)(y) \ominus f(a)(y))\\
    & \quad \mbox{[since $f$ is non-expansive, by Lemma~\ref{le:non-expansiveness}, hence $f(a\oplus\delta)(y)\ominus f(a)(y) \sqsubseteq \delta$]}\\
     & \sqsubseteq f(a)(y) \oplus \delta\\
    & \quad \mbox{[by $f(a)(y) \sqsubset a(y)$, $\delta \sqsubseteq \comp{a(y)}$ and Lemma~\ref{le:mvprop}(\ref{le:mvprop:6})]}\\
    & \sqsubset a(y) \oplus \delta\\
    & \quad \mbox{[by Lemma~\ref{le:mvprop}(\ref{le:mvprop:2}) since $a(y) \sqsubseteq \nu f(y)$ and $\delta = \nu f(y) \ominus a(y)$]}\\
    & = \nu f(y)\\
    &=  f(\nu f)(y)\\
    & \quad \mbox{[since $\nu f \sqsubseteq a \oplus \delta$
      (\ref{eq:soundness:1}) and $f$ monotone]}\\
    & \sqsubseteq f(a\oplus\delta)(y)
  \end{align*}
  i.e., a contradiction. Hence it must be $a(y) = f(a)(y)$.

  \smallskip

  For the second part, in order to show $Y' \subseteq f_a^{\#}(Y')$,
  we let $b = \nu f\sqcup a$. By using (\ref{eq:soundness:1}) we
  immediately have that $b \in \interval{a}{a \oplus \delta}$.

  \smallskip

  We next prove that
  \begin{center}
    $Y' = \gamma_{a,\delta}(b)$.
  \end{center}
  We show separately the two inclusions. If $y \in Y'$ then
  $a(y) \sqsubset \nu f(y)$ and thus
  $b(y) = a(y) \sqcup \nu f(y) = \nu f (y)$ and thus
  $b(y) \ominus a(y) = \nu f(y) \ominus a(y) =
  \delta$. Hence $y \in \gamma_{a,\delta}(b) $. Conversely, if
  $y \in \gamma_{a,\delta}(b)$, then $a(y) \sqsubset \nu f(y)$. In
  fact, if it were $a(y) \sqsupseteq \nu f(y)$, then, by definition of
  $b$ we would have $b(y) = a(y)$ and
  $b(y) \ominus a(y) = \zero \not\sqsupseteq \delta$. Therefore,
  $b(y) =\nu f(y)$ and thus
  $\nu f(y) \ominus a(y) = b(y) \ominus a(y) \sqsupseteq \delta$,
  whence $y \in Y'$.

  We can now conclude. In fact, since $f$ is non-expansive, by
  Theorem~\ref{th:approximation}(\ref{th:approximation:1}), we have
  \[ \gamma_{f(a),\delta}(f(b)) \subseteq f^\#_a(Y'). \]

  Moreover
  $Y' \subseteq \gamma_{f(a),\delta}(f(b))$. In fact, let $y \in Y'$, i.e.,
  $y \in \Ybot{Y}{a}$ and $\delta \sqsubseteq b(y) \ominus
  a(y)$. 
  Since $a(y) = f(a)(y)$, we have that $y \in \Ybot{Y}{f(a)}$. In order to conclude that $y \in \gamma_{f(a),\delta}(f(b))$ it is left to show that 
  $\delta \sqsubseteq f(b)(y)\ominus f(a)(y)$. We have
  \begin{align*}
    f(b)(y) \ominus f(a)(y)
    & =  f(b)(y) \ominus a(y) & \mbox{[since $y \in Y'$]}\\
    & =  f(\nu f \sqcup a)(y) \ominus a(y) & \mbox{[definition of $b$]}\\
    & \sqsupseteq (f(\nu f)(y) \sqcup f(a)(y)) \ominus a(y) & \mbox{[properties of $\sqcup$]}\\
    & = (\nu f(y) \sqcup a(y)) \ominus a(y) & \mbox{[since $\nu f$ fixpoint and $y \in Y'$]}\\
    & = b(y) \ominus a(y) & \mbox{[definition of $b$]}\\
    & \sqsupseteq \delta & \mbox{[since $y \in Y'$]}
  \end{align*}

  Combining the two inclusions, we have $Y' \subseteq f_a^\#(Y')$, as desired.
\end{proof}

Observe that, when $a$ is a fixpoint, clearly
$\Ybot{Y}{a} = \Ybot{Y}{f(a)}$, and thus the $a$-approximation of $f$
(Lemma~\ref{le:approximation-2}) is an endo-function
$f_a^\# : \Ybot{Y}{a} \to \Ybot{Y}{a}$ and $Y'$ is its
post-fixpoint. Then, we have the following result, which relies on the
fact that, due to Theorem~\ref{th:approximation} and properties of Galois connections, $\gamma_{a,\delta}$ maps
the greatest fixpoint of $f$ to to the greatest fixpoint of
 $f_a^\#$.

\begin{theorem}[soundness and completeness for fixpoints]
  \label{th:fixpoint-sound-compl}
  Let $\monM$ be a complete
  MV-chain, $Y$ a finite set and $f : \monM^Y\to \monM^Y$ be a
  non-expansive function.
  Let $a \in \monM^Y$ be a fixpoint of $f$. Then 
  $\nu f_a^{\#} = \emptyset$ if and only if $a = \nu f$.
\end{theorem}

\begin{proof}
  Let $a$ be a fixpoint of $f$ and assume that $a = \nu f$. For
  $\delta = \inc{a}{f} \sqsubseteq \delta_a$, according to
  Lemma~\ref{le:galois}, we have a Galois connection:
  \begin{center}
    \begin{tikzpicture}[->]
      \node (y) {$\Pow{\Ybot{Y}{a}}$};
      \node [right=of y] (b) 
      {$\interval{a}{a+\delta}$}; 
      \draw [->,thick,bend left] (y) to node [above]{$\alpha_{a,\delta}$} (b); 
      \draw [->,thick,bend left] (b) to node [below]{$\gamma_{a,\delta}$}
      (y);
      \draw [->,thick] (y) edge [loop left] node {$f^\#_a$} (y);	
      \draw [->,thick] (b) edge [loop right] node {$f_{a,\delta}$} ();
    \end{tikzpicture}
  \end{center}
  Since $a$ is a fixpoint, then $\Ybot{Y}{f(a)} = \Ybot{Y}{a}$ and, by
  Theorem~\ref{th:approximation}(\ref{th:approximation:2}),
  $\gamma_{a,{\delta}} \circ f = \gamma_{f(a),{\delta}} \circ f =
  f_a^{\#}\circ \gamma_{a,{\delta}}$.

  Therefore by~\cite[Proposition~14]{CC:TLA},
  $\nu f^\#_a = \gamma_{a,{\delta}}(\nu f)$. Recall that
  $\gamma_{a,{\delta}}(\nu f) = \{ y \in Y \mid {\delta} \sqsubseteq
  \nu f(y) \ominus a(y) \}$. Since $a = \nu f$ and
  ${\delta} \sqsupset \zero$, we know that
  $\gamma_{a,{\delta}}(\nu f) = \emptyset$ and we conclude
  $\nu f^\#_a = \emptyset$, as desired.

  \medskip

  Conversely, in order to prove that if $\nu f^\#_a = \emptyset$ then
  $a= \nu f$, we prove the contrapositive. Assume that
  $a \neq \nu f$. Since $a$ is a fixpoint and $\nu f $ is the largest,
  this means that $a \sqsubset \nu f$ and thus
  $\norm{\nu f \ominus a} \neq 0$. Consider
  $Y' = \{ y \in \Ybot{Y}{a} \mid \nu f (y) \ominus a(y) = \norm{\nu f \ominus
    a} \} \neq \emptyset$. By Lemma~\ref{le:technical-soundness},
  $Y'$ is a post-fixpoint of $f_a^{\#}$, i.e., $Y' \subseteq f_a^{\#}(Y')$,
  and thus $\nu f^\#_a \supseteq Y'$ which implies
  $\nu f^\#_a \neq \emptyset$, as desired.
\end{proof}

Whenever $a$ is a fixpoint, but not yet the largest fixpoint of $f$, from the result above $\nu f^\#_a\neq \emptyset$. Intuitively, $\nu f^\#_a$ is the set of points where $a$ can still be ``improved''. More precisely,
we can show that $a$ can be increased on the points in $\nu f^\#_a$ producing a post-fixpoint of $f$.
In order to determine how much $a$ can be increased we proceed similarly to what we have done for defining $\inc{a}{f}$ (Lemma~\ref{le:approximation-2}), but restricting the attention to $\nu f_a^{\#}$ instead of considering the full $\Ybot{Y}{a}$.
While $\inc{a}{f}$ could always be used, by restricting to $\nu f_a^{\#}$ we are able to find a better, that is larger, value which is still correct.

\begin{defi}[largest increase for a subset]
  \label{de:greatest-increase-set}
  Let $\monM$ be a complete MV-chain and let $f: \monM^Y\to \monM^Y$
  be a non-expansive function, where $Y$ is a finite set and let
  $a \in \monM^Y$. For $Y' \subseteq Y$, we define
  $\delta_{a}(Y') = \min\{ \comp{a(y)} \mid y \in Y'\}$ and
  $\inc{a}{f}(Y') = \min \{ \inc{a}{f}(Y',y)  \mid y \in Y' \}$.
\end{defi}

\begin{exa}
  \label{ex:running-3}
  We intuitively explain the computation of the values in the
  definition above. Let $g\colon [0,1]^Y \to [0,1]^Y$ with
  $g(b) = b \oplus 0.1$, where the set $Y$ and the function
  $a \in [0,1]^Y$ are as in Example~\ref{ex:running-1}.

  Let $Y' = \{y_1,y_2\}$. Then $\delta_a(Y') = 0.6$ and
  $\inc{a}{g}(Y') = 0.5$, i.e., since $g$ adds $0.1$, we can
  propagate an increase of at most $0.5$.
\end{exa}

We next prove that when $a \in \monM^Y$ is a fixpoint of $f$ and
$Y' = \nu f_a^{\#}$, the value $\inc{a}{f}(Y')$ is the largest
increase $\delta$ below $\delta_{a}(Y')$ such that
$a\oplus \delta_{Y'}$ is a post-fixpoint of $f$.

\begin{proposition}[from a fixpoint to larger post-fixpoint]
  \label{pr:increases}
  Let $\monM$ be a complete MV-chain, $f : \monM^Y\to \monM^Y$ a
  non-expansive function, $a \in \monM$ a fixpoint of $f$, and let
  $Y' = \nu f_a^{\#}$ be the greatest fixpoint of the corresponding
  $a$-approximation. Then $\inc{a}{f} \sqsubseteq \inc{a}{f}(Y') \sqsubseteq
  \delta_{a}(Y')$. Moreover, for all $\theta \sqsubseteq \inc{a}{f}(Y')$
  the function $a \oplus \theta_{Y'}$ is a post-fixpoint of $f$, while
  for $\inc{a}{f}(Y') \sqsubset \theta \sqsubseteq \delta_{a}(Y')$ it
  is not.
\end{proposition}

\begin{proof}
  We first show that $\inc{a}{f} \sqsubseteq \inc{a}{f}(Y')$.
  By Lemma~\ref{le:approximation-2} and since $a = f(a)$, we have
  that $\inc{a}{f} = \min \{ \inc{a}{f}(Y'',y) \mid Y'' \subseteq
  \Ybot{Y}{a}\ \land\ y \in \Ybot{Y}{a}\ \land\ \inc{a}{f}(Y'',y)
  \neq 0\} \cup \{ \delta_a\}$.
  Moreover, we have $Y' = \nu f_a^{\#} \subseteq \Ybot{Y}{a}$
  and $\inc{a}{f}(Y',y) \neq 0$, for every $y \in Y'$,
  since $\inc{a}{f}(Y',y) = \max \{ \delta \in \monM \mid y \in
  f_{a,\delta}^{\#}(Y')\}$ and
  $y \in Y' = \nu f_a^{\#} = f_{a}^{\#}(\nu f_a^{\#}) =
  f_{a,\inc{a}{f}}^{\#}(Y')$,
  hence $\inc{a}{f}(Y',y) \sqsupseteq \inc{a}{f} \sqsupset 0$.
  Therefore, the minimum in $\inc{a}{f}(Y')$ is computed on
  a subset of the values on which the one in $\inc{a}{f}$ is,
  and so the former must be larger or equal to the latter.

  Next, we prove that $\inc{a}{f}(Y') \sqsubseteq
  \delta_{a}(Y')$. Observe that for all $y \in Y'$ and
  $\delta \in \monM$, if $y \in f_{a,\delta}^{\#}(Y')$,
  by definition of $f_{a,\delta}^{\#}$, it holds that
  $\delta \sqsubseteq f(a \oplus \delta_{Y'})(y) \ominus f(a)(y) =
  f(a \oplus \delta_{Y'})(y) \ominus a(y) \sqsubseteq 1 \ominus a(y)
  = \comp{a(y)}$, where the second equality is motivated by the fact
  that $a$ is a fixpoint. Therefore for all $y \in Y'$ we have
  $\max \{ \delta \in \monM \mid y \in f_{a,\delta}^{\#}(Y') \}
  \sqsubseteq \comp{a(y)}$ and thus 
  $\inc{a}{f}(Y') = \min_{y \in Y'} \max \{ \delta \in \monM \mid y
  \in f_{a,\delta}^{\#}(Y')\}  \sqsubseteq \min_{y  \in Y'}  \comp{a(y)}
  = \delta_{a}(Y')$, as desired.

  \medskip

  Given $\theta \sqsubseteq \inc{a}{f}(Y')$, let us prove that
  $a \oplus \theta_{Y'}$ is a post-fixpoint of $f$, i.e., 
  $a \oplus \theta_{Y'} \sqsubseteq f(a \oplus \theta_{Y'})$. 

  If $y \in Y'$, since $\theta \sqsubseteq \inc{a}{f}(Y')$, by
  definition of $\inc{a}{f}(Y')$, we have
  $\theta \sqsubseteq \max \{ \delta \in \monM \mid y \in
  f_{a,\delta}^{\#}(Y')\}$ and thus, by antitonicity of
  $f_{a,\delta}^{\#}$ with respect to $\delta$, we have
  $y \in f_{a,\theta}^{\#}(Y')$. This means that
  $\theta \sqsubseteq f(a \oplus \theta_{Y'})(y) \ominus f(a)(y) = f(a
  \oplus \theta_{Y'})(y) \ominus a(y)$, where the last passage uses
  the fact that $a$ is a fixpoint. Adding $a(y)$ on both sides and
  using Lemma~\ref{le:mvprop}(\ref{le:mvprop:2}), we obtain
  $a(y) \oplus \theta \sqsubseteq (f(a \oplus \theta_{Y'})(y) \ominus
  a(y)) \oplus a(y) = f(a \oplus \theta_{Y'})(y)$. Since $y \in Y'$,
  $(a \oplus \theta_{Y'})(y) = a(y) \oplus \theta$ and thus
  $(a \oplus \theta_{Y'})(y) \sqsubseteq f(a \oplus \theta_{Y'})(y)$, as desired.

  If instead, $y \not\in Y'$, clearly
  $(a \oplus \theta_{Y'})(y) = a(y) = f(a)(y) \sqsubseteq f(a \oplus
  \theta_{Y'})(y)$, where we again use the fact that $a$ is a fixpoint
  and monotonicity of $f$.

  \medskip

  Lastly, we have to show that if
  $\inc{a}{f}(Y') \sqsubset \theta \sqsubseteq \delta_{a}(Y')$, then
  $a \oplus \theta_{Y'}$ is a not a post-fixpoint of $f$.
  By definition of $\inc{a}{f}(Y')$, from the fact that
  $\inc{a}{f}(Y') \sqsubset \theta$, we deduce that
  $\max \{ \delta \in \monM \mid y \in f_{a,\delta}^{\#}(Y')\}
  \sqsubset \theta$ for some $y \in Y'$ and thus
  $y \not\in f_{a,\theta}^\#(Y')$.

  By definition of $f_{a,\theta}^\#$ and totality of $\sqsubseteq$,
  the above means
  $\theta \sqsupset f(a \oplus \theta_{Y'}) (y) \ominus f(a)(y) = f(a
  \oplus \theta_{Y'}) (y) \ominus a(y)$, since $a$ is a fixpoint of
  $f$. Since $\theta \sqsubseteq \delta_{a}(Y')$, we can add $a(y)$ on
  both sides and, by Lemma~\ref{le:mvprop}(\ref{le:mvprop:8}), we
  obtain $a(y) \oplus \theta \sqsupset f(a \oplus \theta_{Y'})
  (y)$. Since $y \in Y'$, the left-hand side is
  $(a \oplus \theta_{Y'})(y)$. Hence we conclude that indeed
  $a \oplus \theta_{Y'}$ is not a post fixpoint.
\end{proof}

Using these results one can perform an alternative fixpoint iteration
where we iterate to the largest fixpoint from below: start with a
post-fixpoint $a_0 \sqsubseteq f(a_0)$ (which is clearly below
$\nu f$) and obtain, by (possibly transfinite)
iteration, an
ascending chain that in the order converges to $a$,\footnote{Note that throughout the paper the term ``convergence'' used on complete MV-chains will always implicitly refer to convergence in the natural order.} the least fixpoint above
$a_0$. Now, letting $Y' = \nu f_a^{\#}$, check whether
$Y' = \emptyset$. If so, by Theorem~\ref{th:fixpoint-sound-compl}
we know we have reached $\nu f = a$.
If not, $\alpha_{a,\inc{a}{f}(Y')}(Y') = a\oplus (\inc{a}{f}(Y'))_{Y'}$ is
again a post-fixpoint (cf. Proposition~\ref{pr:increases}) and we
continue this procedure until
-- for some ordinal --
we reach the
largest fixpoint $\nu f$, for which we have
$\nu f_{\nu f}^{\#} = \emptyset$.

In order to make the above procedure as efficient as possible, one would like to consider, whenever a fixpoint $a$ is reached, the largest possible increase $\iota$ which is valid, i.e.\ such that $a\oplus \iota$ is again a post-fixpoint of $f$.
Thus the question naturally arises asking whether $\inc{a}{f}(Y')$ is such largest valid increase.
From Proposition~\ref{pr:increases}, it immediately follows that $\inc{a}{f}(Y')$ is the largest valid increase below $\delta_{a}(Y')$,
but it can be seen that there can be larger valid increases above $\delta_{a}(Y')$
(an explicit example is provided later in Example~\ref{ex:markov-no-greatest}, for the dual case of least fixpoints).
However, while the set of valid increases below $\delta_{a}(Y')$ is downward-closed, as proved in Proposition~\ref{pr:increases},
this is not the case for those above $\delta_{a}(Y')$.
Hence, we believe that the most efficient approach would be to search for $\inc{a}{f}(Y')$, or some satisfying approximation, via a binary search bounded by $\delta_{a}(Y')$.

\subsection{Proof rules for pre-fixpoints}

Interestingly, the soundness result in
Theorem~\ref{th:fixpoint-sound-compl} can be generalised to the case
in which $a$ is a pre-fixpoint instead of a fixpoint.
In this case, the $a$-approximation for a function
$f : \monM^Y \to \monM^Y$ is a function $f_a^{\#} : \Ybot{Y}{a} \to \Ybot{Y}{f(a)}$
where domain and codomain are different, hence it would not be
meaningful to look for fixpoints. However, as explained below, it can be restricted to an
endo-function. 

\begin{theorem}[soundness for pre-fixpoints]
  \label{th:soundness}
  Let $\monM$ be a complete
  MV-chain, $Y$ a finite set and $f : \monM^Y\to \monM^Y$ be a
  non-expansive function.
  Given a pre-fixpoint $a \in \monM^Y$ of $f$, let
  $\Ybot{Y}{a=f(a)} = \{ y \in \Ybot{Y}{a} \mid a(y) = f(a)(y) \}$.
  Let us 
  define $f^*_a : \Ybot{Y}{a=f(a)} \to \Ybot{Y}{a=f(a)}$ as
  $f^*_a(Y') = f_a^{\#}(Y') \cap \Ybot{Y}{a=f(a)}$, where
  $f_a^{\#}: \Pow{\Ybot{Y}{a}} \to \Pow{\Ybot{Y}{f(a)}}$ is the $a$-approximation of $f$.
  If $\nu f_a^* = \emptyset$
  then $\nu f \sqsubseteq a$.
\end{theorem}

\begin{proof}
  We prove the contrapositive, i.e., we show that
  $\nu f\not\sqsubseteq a$ allows us to derive that
  $\nu f^*_a \neq \emptyset$.

  Assume $\nu f\not\sqsubseteq a$, i.e., there exists $y\in Y$ such
  that $\nu f(y)\not\sqsubseteq a(y)$. Since the order is total, this
  means that $a(y)\sqsubset \nu f(y)$. Hence, by
  Lemma~\ref{le:mvprop}(\ref{le:mvprop:5}),
  $\nu f(y) \ominus a(y) \sqsupset \zero$.
  Then $\delta = \norm {\nu f \ominus a} \sqsupset \zero$.

  Consider
  $Y' = \{ y \in Y_a \mid \nu f (y) \ominus a(y) = \norm{\nu f \ominus
    a} \} \neq \emptyset$. By Lemma~\ref{le:technical-soundness},
  $Y'$ is a post-fixpoint of $f_a^{\#}$, i.e., $Y' \subseteq f_a^{\#}(Y')$,
  and thus $Y' \subseteq \nu f^\#_a$. Moreover, for all $y \in Y'$,
  $a(y) =f(a)(y)$, i.e., $Y' \subseteq \Ybot{Y}{a=f(a)}$. Therefore we
  conclude $Y' \subseteq f_a^\#(Y') \cap \Ybot{Y}{a=f(a)} = f_a^*(Y')$,
  i.e., $Y'$ is a post-fixpoint also for $f_a^*$, and thus
  $\nu f_a^* \supseteq Y' \neq \emptyset$, as desired.
\end{proof}

The reason why we can limit our attention to the set of points where
$a(y) = f(a)(y)$ is as follows.  Observe that, since $a$ is a
pre-fixpoint and $\ominus$ is antitone in the second argument, $\nu f \ominus a \sqsubseteq \nu f \ominus f(a)$.
Thus
$\norm{\nu f \ominus a} \sqsubseteq \norm{\nu f \ominus f(a)} = \norm{
  f(\nu f) \ominus f(a)} \sqsubseteq \norm{\nu f \ominus a}$, where
the last passage is motivated by non-expansiveness of $f$. Therefore
$\norm{\nu f \ominus a} = \norm{\nu f \ominus f(a)}$.  From this we
can deduce that, if $\nu f$ is strictly larger than $a$ on some
points, surely some of these points are in $\Ybot{Y}{a=f(a)}$. In
particular, all points $y_0$ such that
$\nu f(y_0) \ominus a(y_0) = \norm{\nu f \ominus a}$ are necessarily
in $\Ybot{Y}{a=f(a)}$. Otherwise, we would have
$f(a)(y_0) \sqsubset a(y_0)$ and thus
$\norm{\nu f \ominus a} = \nu f(y_0) \ominus a(y_0) \sqsubset \nu
f(y_0) \ominus f(a)(y_0) \sqsubseteq \norm{\nu f \ominus f(a)}$
(cf. Lemma~\ref{le:technical-soundness}).

\begin{rem}
Completeness does not generalise to pre-fixpoints, i.e., it is not true that
if $a$ is a pre-fixpoint of $f$ and $\nu f\sqsubseteq a$, then
$\nu f^*_a = \emptyset$. A pre-fixpoint might contain slack even
though it is above the greatest fixpoint. A counterexample is
in Example~\ref{ex:bisim-2}.
\end{rem}

\subsection{The dual view for least fixpoints}
\label{se:dual-view-app}

The theory developed so far can be easily dualised to check
under-approximations of least fixpoints. Given a complete MV-algebra
$\monM = (M, \oplus, \zero, \comp{(\cdot)})$ and a non-expansive
function $f : \monM^Y \to \monM^Y$, in order to show that
a post-fixpoint $a \in \monM^Y$ is such that $a \sqsubseteq \mu f$ we
can in fact simply work in the dual MV-algebra,
$\dual{\monM} = (M, \sqsupseteq, \otimes, \comp{(\cdot)}, \one,
\zero)$.

Since $\oplus$ could be the ``standard'' operation on $\monM$, it is
convenient to formulate the conditions using $\oplus$ and $\ominus$
and the original order.
The notation for the dual case is
obtained from that of the original case, referred to as the \emph{primal case} throughout the paper, exchanging
subscripts and superscripts.

The pair of functions
$\langle \alpha^{a,\theta}, \gamma^{a,\theta} \rangle$ is as
follows. Let $a : Y \to \monM$ and $\zero \sqsubset\theta\in
\monM$. The set $\Ytop{Y}{a} = \{ y \in Y \mid a(y) \neq \zero \}$ and
$\delta^a = \min \{ a(y) \mid y \in \Ytop{Y}{a} \}$

The target of the approximation is $\interval{a}{a\otimes\theta}$ in the reverse order, hence $\interval{a\otimes\theta}{a}$ in the original order. Recall that $a\otimes\theta = \comp{\comp{a} \oplus \comp{\theta}} = a \ominus \comp{\theta}$. Hence we obtain
\begin{center}
  \begin{tikzpicture}[->]
    \node (y) {$\Pow{\Ytop{Y}{a}}$};
    \node [right=of y] (b) 
    {$\interval{a \ominus \comp{\theta}}{a}$}; 
    \draw [->,thick,bend left] (y) to node [above]{$\alpha^{a,\theta}$} (b); 
    \draw [->,thick,bend left] (b) to node [below]{$\gamma^{a,\theta}$} (y);
  \end{tikzpicture}
\end{center}
  
For $Y' \in \Pow{\Ytop{Y}{a}}$ we define 
\begin{center}
  $\alpha^{a,\theta}(Y') = a \otimes \theta_{Y'} = a \ominus \comp{\theta}_{Y'}$
\end{center}
Instead $\gamma^{a,\theta}(b) = \{y \in Y \mid \theta \sqsupseteq b(y) \oominus a(y) \}$ where $\oominus$ is the subtraction in the dual MV-algebra. Observe that $x \oominus y = \comp{\comp{x} \otimes y} = \comp{x \oplus \comp{y}} = \comp{y \ominus x}$. Hence $\theta \sqsupseteq b(y) \oominus a(y)$ iff $a(y) \ominus b(y) \sqsupseteq \comp{\theta}$.  Thus for $b \in \interval{a\ominus\comp{\theta}}{a}$ we have
\begin{center}
  $\gamma^{a,\theta}(b) = \{y \in Y \mid \theta \sqsupseteq b(y) \oominus a(y) \} = \{y \in Y \mid a(y) \ominus b(y) \sqsupseteq \comp{\theta}\}$.
\end{center}

Let $f: \monM^Y\to \monM^Z$ be a monotone
function. The norm becomes $\norm{a} = \min \{ a(y) \mid y \in Y\}$. Non-expansiveness (Def.~\ref{de:non-expansiveness}) in the dual MV-algebra becomes: for all $a, b \in \monM^Y$,
$\norm{f(b) \oominus f(a)} \sqsupseteq \norm{b \oominus a}$, which in turn
is
\begin{center}
  $\min \{ \comp{f(a) \ominus f(b)} \mid y \in Y \} \sqsupseteq \min
  \{ \comp{a(y) \ominus b(y)} \mid y \in Y \}$
\end{center}
i.e., $\norm{f(a) \ominus f(b)} \sqsubseteq \norm{a \ominus b}$, which coincides with non-expansiveness in the original MV-algebra.

Observe that, instead of taking a generic $\theta \sqsubset 1$ and
then working with $\bar{\theta}$, we can directly take
$0 \sqsubset \theta$ and replace everywhere $\bar{\theta}$ with
$\theta$.

While the approximation of a function in the primal case are denoted
$f_a^\#$, the approximations in the dual case will be denoted by
$f^a_\#$. %

We can also dualise Proposition~\ref{pr:increases} and obtain that, whenever
$a$ is a fixpoint and $Y' = \nu f^a_{\#} \neq \emptyset$, then
$a \ominus \theta_{Y'}$ is a pre-fixpoint, where
$\theta = \iota_a^f(Y')$ is suitably defined, dualising
Def.~\ref{de:greatest-increase-set}.

\section{(De)Composing functions and approximations}
\label{se:de-composing}

Given a non-expansive function $f$ and a (pre/post-)fixpoint $a$, it
is often non-trivial to determine the corresponding
approximations. However, non-expansive functions enjoy good closure
properties (closure under composition, and closure under disjoint
union) and we will see that the same holds for the corresponding
approximations. Furthermore, it turns out that the functions needed in
the applications can be obtained from just a few templates. This gives
us a toolbox for assembling approximations with relative ease.

We start by introducing some basic functions, which will be used as
the building blocks for the functions needed in the applications.
Note that below we consider distributions on MV-chains of which the
probability distributions introduced earlier are a special case.

\begin{defi}[basic functions]
  \label{de:basic-functions}
  Let $\monM$ be an MV-chain and let $Y$, $Z$ be finite sets.
  \begin{enumerate}
    
  \item \emph{Constant:} For a fixed $k\in\monM^Z$, we define
    $c_k : \monM^Y\to \monM^Z$ by
    \[ c_k(a) = k \]

  \item \emph{Reindexing:} For $u : Z\to Y$, we define
    $u^* : \monM^Y\to\monM^Z$ by
    \[ u^*(a) = a \circ u. \]

  \item \emph{Min/Max:} For $\mathcal{R} \subseteq Y\times Z$,
    we define
    $\mins_\mathcal{R}, \maxs_\mathcal{R} : \monM^Y\to\monM^Z$ by
    \[ \mins_\mathcal{R}(a)(z) = \mins_{y\mathcal{R}z} a(y) \qquad 
      \maxs_\mathcal{R}(a)(z) = \maxs_{y\mathcal{R}z} a(y) \] 
    
  \item \emph{Average:} Call a function $p : Y \to \monM$ a
    distribution when for all $y \in Y$, it holds
    $\comp{p(y)} = \bigoplus_{y' \in Y \backslash \{y\}} p(y')$ and
    let $\mathcal{D}(Y)$ be the set of distributions. Assume that
    $\monM$ is endowed with an additional operation $\odot$ such that
    $(\monM, \odot, 1)$ is a commutative monoid, for $x, y \in \monM$,
    $x \odot y \sqsubseteq x$, and $x \odot y = 0$ iff $x=0$ or $y=0$,
    and $\odot$ weakly distributes over $\oplus$, i.e., for all
    $x, y, z \in \monM$ with $y \sqsubseteq \comp{z}$,
    $x \odot (y \oplus z) = x \odot y \oplus x \odot z$. For
    a finite set $D \subseteq \mathcal{D}(Y)$, we define
    $\mathrm{av}_D : \monM^Y\to\monM^D$ by
    \[ \mathrm{av}_D(a)(p) = \bigoplus_{y\in Y} p(y)\odot a(y) \]
  \end{enumerate}
\end{defi}

A particularly interesting subcase of (3) is when we take as relation
the \emph{belongs to} relation $\in\ \subseteq Y\times\Pow{Y}$. In
this way we obtain functions for selecting the minimum and the
maximum, respectively, of an input function over a set
$Y' \subseteq Y$, that is, the functions
$\mins_\in, \maxs_\in : \monM^Y \to \monM^{\Pow{Y}}$, defined as
\[ \mins_\in(a)(Y') = \min\limits_{y\in Y'} a(y) \qquad \qquad
  \maxs_\in(a)(Y') = \max\limits_{y\in Y'} a(y) \]

The usual probability
distributions arise as a special case of $\mathcal{D}(Y)$ in (4) with $\monM = [0,1]$ where $\odot$ is the standard
multiplication.

Also note that in the definition of $\mathit{av}_D$, the operation
$\odot$ is necessarily monotone. In fact, if $y \sqsubseteq y'$ then,
by Lemma~\ref{le:mvprop}(\ref{le:mvprop:2}), we have
$y' = y \oplus (y' \ominus y)$. Therefore
$x \odot y \sqsubseteq x \odot y \oplus x \odot (y' \ominus y) = x
\odot (y \oplus (y' \ominus y)) = x \odot y'$, where the second
passage holds by weak distributivity.

\begin{toappendix}

The basic functions can be shown to be non-expansive.

\begin{proposition}
  \label{pr:fct-nonexpansive}
  The basic functions from Def.~\ref{de:basic-functions} are all
  non-expansive.
\end{proposition}

\begin{proof}
  \mbox{}
  
  \begin{itemize}
  \item \emph{Constant functions:} immediate.
    
  \item \emph{Reindexing:} Let $u : Z \to Y$. For all
    $a,b \in \monM^Y$, we have
    \begin{align*}
      & \norm{u^*(b) \ominus u^*(a)} & \\
      & = \max_{z \in Z} (b(u(z)) \ominus a(u(z))) & \\
      & \sqsubseteq \max_{y \in Y} (b(y) \ominus a(y)) & \mbox{[since $u(Z) \subseteq Y$]}\\
      & = \norm{b \ominus a} & \mbox{[by def.\ of norm]}
    \end{align*}
    
  \item \emph{Minimum:} Let $\mathcal{R} \subseteq Y \times Z$ be a
    relation. For all $a,b \in \monM^Y$, we have
    \[ \norm{\mins_\mathcal{R}(b) \ominus \mins_\mathcal{R}(a)} =
      \max_{z \in Z} (\min_{y\mathcal{R}z} b(y) \ominus
      \min_{y\mathcal{R}z} a(y)) \] Observe that
    \[ \max_{z \in Z} (\min_{y\mathcal{R}z} b(y) \ominus
      \min_{y\mathcal{R}z} a(y)) = \max_{z \in Z'}
      (\min_{y\mathcal{R}z} b(y) \ominus \min_{y\mathcal{R}z} a(y)) \]
    where $Z' = \{z\in Z \mid \exists\, y\in Y.\, y\mathcal{R}z\}$,
    since on every other $z \in Z \backslash Z'$ the difference
    would be $\zero$. Now, for every $z \in Z'$, take $y_z \in Y$ such
    that $y_z\mathcal{R}z$ and
    $a(y_z) = \min\limits_{y\mathcal{R}z} a(y)$. Such a $y_z$ is
    guaranteed to exist whenever $Y$ is finite. Then, we have
    \begin{align*}
      & \max_{z \in Z'} (\min_{y\mathcal{R}z} b(y) \ominus \min_{y\mathcal{R}z} a(y)) & \\
      & \sqsubseteq \max_{z \in Z'} (b(y_z) \ominus a(y_z)) & \mbox{[$\ominus$ monotone in first arg.]}\\
      & \sqsubseteq \max_{z \in Z'} \norm{b \ominus a} & \mbox{[by def.\ of norm]}\\
      & = \norm{b \ominus a} & \mbox{[$\norm{b \ominus a}$ is independent from $z$]}
    \end{align*}
    
  \item \emph{Maximum:} Let $\mathcal{R} \subseteq Y \times Z$ be a
    relation. For all $a,b \in \monM^Y$ we have
    \begin{align*}
      & \norm{\maxs_\mathcal{R}(b) \ominus \maxs_\mathcal{R}(a)}\\
      & = \max_{z \in Z} (\max_{y\mathcal{R}z} b(y) \ominus \max_{y\mathcal{R}z} a(y))\\
      & \sqsubseteq \max_{z \in Z} (\max_{y\mathcal{R}z} ((b(y) \ominus a(y)) \oplus a(y)) \ominus \max_{y\mathcal{R}z} a(y))\\
      & \qquad\mbox{[since $(b(y) \ominus a(y)) \oplus a(y) = a(y) \sqcup b(y)$ and $\ominus$ monotone in first arg.]}\\[1mm]
      & \sqsubseteq \max_{z \in Z} ((\max_{y\mathcal{R}z} (b(y) \ominus a(y)) \oplus \max_{y\mathcal{R}z} a(y)) \ominus \max_{y\mathcal{R}z} a(y))\\
      & \qquad\qquad\mbox{[by def.\ of $\max$ and monotonicity of $\oplus$]}\\[1mm]
      & \sqsubseteq \max_{z \in Z} \max_{y\mathcal{R}z} (b(y) \ominus a(y)) \qquad\qquad \mbox{[by Lemma~\ref{le:mvprop}(\ref{le:mvprop:6})]}\\
      & \sqsubseteq \max_{z \in Z} \max_{y\mathcal{R}z} \norm{b \ominus a} \qquad\qquad \mbox{[by def.\ of norm]}\\
      & = \norm{b \ominus a} \qquad\qquad \mbox{[since $\norm{b \ominus a}$ is independent]}
    \end{align*}
    
  \item \emph{Average:}
    We first note that, when $p : Y \to \monM$, with $Y$ finite, is a
    distribution, then an inductive argument based on weak
    distributivity, allows one to show that for all $x \in \monM$,
    $Y' \subseteq Y$,
    $x \odot \bigoplus_{y\in Y'} p(y) = \bigoplus_{y \in Y'} x \odot
    p(y)$.

    For all
    $a,b \in \monM^Y$ we have
    \begin{align*}
      & \norm{\mathrm{av}_D(b) \ominus \mathrm{av}_D(a)}\\
      & = \max_{p \in D} (\bigoplus_{y\in Y} p(y) \odot b(y) \ominus \bigoplus_{y\in Y} p(y) \odot a(y))\\
      & \sqsubseteq \max_{p \in D} (\bigoplus_{y\in Y} p(y) \odot ((b(y) \ominus a(y)) \oplus a(y)) \ominus \bigoplus_{y\in Y} p(y) \odot a(y))\\
      & \qquad\mbox{[by monotonicity of $\odot,\oplus,\ominus$ and $(b(y) \ominus a(y)) \oplus a(y) = a(y) \sqcup b(y)$]}\\[1mm]
      & = \max_{p \in D} (\bigoplus_{y\in Y} (p(y) \odot (b(y) \ominus a(y))) \oplus (p(y) \odot a(y)) \ominus \bigoplus_{y\in Y} p(y) \odot a(y))\\
      & \qquad\qquad\mbox{[since $b(y) \ominus a(y) \sqsubseteq 1
        \ominus a(y) = \comp{a(y)}$, and $\odot$ weakly distributes over $\oplus$]}\\
      & = \max_{p \in D} ((\bigoplus_{y\in Y} p(y) \odot (b(y) \ominus a(y)) \oplus \bigoplus_{y\in Y} p(y) \odot a(y)) \ominus \bigoplus_{y\in Y} p(y) \odot a(y))\\
      & \sqsubseteq \max_{p \in D} \bigoplus_{y\in Y} p(y) \odot (b(y) \ominus a(y)) \qquad\qquad \mbox{[by Lemma~\ref{le:mvprop}(\ref{le:mvprop:6})]}\\
      & \sqsubseteq \max_{p \in D} \bigoplus_{y\in Y} p(y) \odot \norm{b \ominus a} \qquad\qquad \mbox{[by def.\ of norm and monotonicity of $\odot$]}\\
      & = \max_{p \in D} \norm{b \ominus a} \odot \bigoplus_{y\in Y}
      p(y) \qquad \mbox{[since $p$ is a distribution and $\odot$ weakly distributes over $\oplus$]}\\
      & = \max_{p \in D} (\norm{b \ominus a} \odot \one) \qquad\qquad \mbox{[since $p$ is a distribution over $Y$]}\\
      & = \norm{b \ominus a} \qquad\qquad \mbox{[since \norm{b \ominus
          a} is independent from $p$]} \tag*{\qedhere}
    \end{align*}
    \qedhere
  \end{itemize}
\end{proof}

The next result determines the approximations associated with the
basic functions.

\begin{proposition}[approximations of basic functions]
  \label{pr:associated-approximations}
  Let $\monM$ be an MV-chain, $Y, Z$ be finite sets and let $a \in
  \monM^Y$.
  \begin{itemize}
  \item \emph{Constant:} for $k : \monM^Z$, the approximations
    $(c_k)^\#_a : \Pow{\Ybot{Y}{a}} \to \Pow{\Ybot{Z}{c_k(a)}}$,
    $(c_k)_\#^a : \Pow{\Ytop{Y}{a}} \to \Pow{\Ytop{Z}{c_k(a)}}$
    are
    \[ (c_k)^\#_a(Y') = \emptyset = (c_k)_\#^a(Y') \]
  \item \emph{Reindexing:} for $u : Z \to Y$, the approximations
    $(u^*)^\#_a : \Pow{\Ybot{Y}{a}} \to \Pow{\Ybot{Z}{u^*(a)}}$,
    $(u^*)_\#^a : \Pow{\Ytop{Y}{a}} \to \Pow{\Ytop{Z}{u^*(a)}}$
    are
    \[ (u^*)^\#_a(Y') = (u^*)_\#^a(Y') = u^{-1}(Y') = \{z \in
      \Ybot{Z}{u^*(a)} \mid u(z) \in Y'\} \]
  \item \emph{Min:} for $\mathcal{R} \subseteq Y \times Z$, the
    approximations
    $(\mins_\mathcal{R})_a^\#  : \Pow{\Ybot{Y}{a}} \to
    \Pow{\Ybot{Z}{\min_\mathcal{R}(a)}}$,
    $(\mins_\mathcal{R})^a_\#  : \Pow{\Ytop{Y}{a}} \to
    \Pow{\Ytop{Z}{\min_\mathcal{R}(a)}}$ are given below, where
    $\mathcal{R}^{-1}(z) = \{y\in Y \mid y\mathcal{R}z\}$:
    \[ (\mins_\mathcal{R})_a^\#(Y') = \{z \in
      \Ybot{Z}{\mins_\mathcal{R}(a)} \mid
      \arg\min_{y\in\mathcal{R}^{-1}(z)} a(y) \subseteq Y'\} \]
    \[ (\mins_\mathcal{R})^a_\#(Y') = \{z \in
      \Ytop{Z}{\mins_\mathcal{R}(a)} \mid
      \arg\min_{y\in\mathcal{R}^{-1}(z)} a(y)\cap Y' \neq
      \emptyset\} \]
  \item \emph{Max:} for $\mathcal{R} \subseteq Y \times Z$, the approximations
    $(\maxs_\mathcal{R})_a^\#  : \Pow{\Ybot{Y}{a}} \to \Pow{\Ybot{Z}{\max_\mathcal{R}(a)}}$, $(\maxs_\mathcal{R})^a_\#  : \Pow{\Ytop{Y}{a}} \to \Pow{\Ytop{Z}{\max_\mathcal{R}(a)}}$ are
    \[ (\maxs_\mathcal{R})_a^\#(Y') = \{z \in
      \Ybot{Z}{\maxs_\mathcal{R}(a)} \mid
      \arg\max_{y\in\mathcal{R}^{-1}(z)} a(y)\cap Y' \neq
      \emptyset\} \]
    \[ (\maxs_\mathcal{R})^a_\#(Y') = \{z \in
      \Ytop{Z}{\maxs_\mathcal{R}(a)} \mid
      \arg\max_{y\in\mathcal{R}^{-1}(z)} a(y) \subseteq Y'\} \]
  \item \emph{Average:} for a finite $D\subseteq \mathcal{D}(Y)$, the
    approximations
    $(\mathrm{av}_D)_a^\#  : \Pow{\Ybot{Y}{a}} \to
    \Pow{\Ybot{D}{\mathrm{av}_D(a)}}$,
    $(\mathrm{av}_D)^a_\# \colon \Pow{\Ytop{Y}{a}} \to
    \Pow{\Ytop{D}{\mathrm{av}_D(a)}}$ are
    \[ (\mathrm{av}_D)_a^\#(Y') = \{p \in
      \Ybot{D}{\mathrm{av}_D(a)} \mid \mathit{supp}(p) \subseteq
      Y'\} \]
    \[ (\mathrm{av}_D)^a_\#(Y') = \{p \in \Ytop{D}{\mathrm{av}_D(a)}
      \mid \mathit{supp}(p) \subseteq Y'\}, \] where
    $\mathit{supp}(p) = \{y\in Y\mid p(y) > 0\}$ for
    $p\in \mathcal{D}(Y)$.
  \end{itemize}
\end{proposition}

\begin{proof}
  We only consider the primal cases, the dual ones are analogous.
  
  Let $a \in \monM^Y$.
  \begin{itemize}
  \item \emph{Constant:} for all
    $0\sqsubset \theta \sqsubseteq \delta_a$ and $Y'\subseteq \Ybot{Y}{a}$ we
    have
    \begin{align*}
      (c_k)_{a,\theta}^\#(Y')&= \gamma_{c_k(a),\theta}\circ c_k \circ
      \alpha_{a,\theta}(Y') \\
      &= \{ z\in \Ybot{Z}{c_k(a)} \mid \theta\sqsubseteq c_k(a\oplus \theta_{Y'})(z)\ominus c_k(a)(z)\} \\
      &=\{ z\in \Ybot{Z}{c_k(a)} \mid \theta\sqsubseteq k\ominus k\} = \{ z\in Z \mid \theta\sqsubseteq 0\} = \emptyset %
    \end{align*}
    Hence all values $\inc{a}{f}(Y',z)$ are equal to $0$ and we have
    $\inc{a}{f} = \delta_a$. Replacing $\theta$ by
    $\inc{a}{f}$ we obtain $(c_k)_a^\#(Y') = \emptyset$.
  \item \emph{Reindexing:} for all
    $0 \sqsubset \theta \sqsubseteq \delta_a$ and
    $Y' \subseteq \Ybot{Y}{a}$ we have
    \begin{align*}
      (u^*)_{a,\theta}^\#(Y') & = \gamma_{u^*(a),\theta} \circ u^*
      \circ \alpha_{a,\theta}(Y') \\
      & = \{z \in \Ybot{Z}{u^*(a)} \mid
        \theta \sqsubseteq (a \oplus \theta_{Y'})(u(z)) \ominus
        a(u(z))\}. 
    \end{align*}

    We show that this corresponds to
    $u^{-1}(Y') = \{z \in Z \mid u(z) \in Y'\}$. It is easy to see
    that for all $z \in u^{-1}(Y')$, we have
    \[ (a \oplus \theta_{Y'})(u(z)) \ominus a(u(z)) = \theta = a(u(z))
      \ominus (a \ominus \theta_{Y'})(u(z)) \] since $u(z) \in Y'$ and
    $\theta \sqsubseteq \delta_a$. Since
    $u(z)\in Y' \subseteq \Ybot{Y}{a}$, we have
    $u^*(a)(z) = a(u(z)) \neq 1$ and hence $z\in \Ybot{Z}{u^*(a)}$. On
    the other hand, for all $z \not\in u^{-1}(Y')$, we have
    \[ (a \oplus \theta_{Y'})(u(z)) = a(u(z)) = (a \ominus
      \theta_{Y'})(u(z)) \] since $u(z) \notin Y'$, and so
    \[ (a \oplus \theta_{Y'})(u(z)) \ominus a(u(z)) = a(u(z)) \ominus
      (a \ominus \theta_{Y'})(u(z)) = 0 \sqsubset \theta. \] Therefore
    $(u^*)_{a,\theta}^\#(Y') = u^{-1}(Y')$.

    We observe that for $Y'\subseteq \Ybot{Y}{a}$,
    $z\in \Ybot{Z}{u^*(a)}$ either
    $u^*(a\oplus \theta_{Y'})(z)\ominus u^*(a)(z) \sqsubset \theta$
    for all $0 \sqsubset \theta\sqsubseteq \delta_a$ -- and in this
    case $\inc{a}{u^*}(Y',z) = 0$ -- or
    $u^*(a\oplus \theta_{Y'})(z)\ominus u^*(a)(z) = \theta$ for all
    $0 \sqsubset\theta\sqsubseteq \delta_a$ -- and in this case
    $\inc{a}{u^*}(Y',z) = \delta_a$. By taking the minimum over all
    non-zero values, we get $\inc{a}{u^*} = \delta_a$.

    And finally we observe that
    $(u^*)_a^\#(Y') = (u^*)_{a,\inc{a}{u^*}}^\#(Y') = u^{-1}(Y')$.
  \item \emph{Minimum:} let $0\sqsubset \theta\sqsubseteq
    \delta_a$. For all $Y' \subseteq \Ybot{Y}{a}$ we have
    \begin{align*}
      (\mins_\mathcal{R})_{a,\theta}^\#(Y') & =
      \gamma_{\mins_\mathcal{R}(a),\theta} \circ \mins_\mathcal{R}
      \circ \alpha_{a,\theta}(Y')\\
      & = \{z \in \Ybot{Z}{\mins_\mathcal{R}(a)} \mid \theta
      \sqsubseteq \min_{y\mathcal{R}z} (a \oplus \theta_{Y'})(y)
      \ominus \min_{y\mathcal{R}z} a(y)\}
    \end{align*}
    We compute the value $V = \min_{y\mathcal{R}z} (a
    \oplus \theta_{Y'})(y) \ominus \min_{y\mathcal{R}z} a(y)$ and
    consider the following cases:
    \begin{itemize}
    \item Assume that there exists $\hat{y}\in
      \arg\min_{y\in\mathcal{R}^{-1}(z)} a(y)$ where $\hat{y}\not\in
      Y'$.
      
      Then $(a\oplus \theta_{Y'})(\hat{y}) = a(\hat{y}) \sqsubseteq
      a(y) \sqsubseteq (a\oplus
      \theta_{Y'})(y)$ for all
      $y\in\mathcal{R}^{-1}(z)$, which implies that
      $\min_{y\mathcal{R}z} (a \oplus \theta_{Y'})(y) =
      a(\hat{y})$. We also have $\min_{y\mathcal{R}z} a(y) =
      a(\hat{y})$ and hence $V = 0$.
    \item Assume that
      $\arg\min_{y\in\mathcal{R}^{-1}(z)} a(y)\subseteq Y'$ and
      $\theta\sqsubseteq a(y) \ominus a(\hat{y})$ in all cases where
      $\hat{y}\in \arg\min_{y\in\mathcal{R}^{-1}(z)} a(y)$,
      $y\not\in Y'$ and $y\mathcal{R}z$.

      Since $\arg\min_{y\in\mathcal{R}^{-1}(z)} a(y)\subseteq Y'$
      we observe that
      \begin{eqnarray*}
        \min_{y\mathcal{R}z} (a \oplus \theta_{Y'})(y) & = &
        \min\{\min_{y\in \arg\min_{y\in\mathcal{R}^{-1}(z)} a(y)}
        (a(y)\oplus \theta), \min_{y\mathcal{R}z, y\not\in Y'} a(y)\}
      \end{eqnarray*}
      We can omit the values of all $y$ with $y\mathcal{R}z$,
      $y\not\in \arg\min_{y\in\mathcal{R}^{-1}(z)} a(y)$, $y\in Y'$,
      since we will never attain the minimum there.

      Now let $\hat{y}\in \arg\min_{y\in\mathcal{R}^{-1}(z)} a(y)$ and
      $y$ with $y\mathcal{R}z$ and $y\not\in Y'$. Then
      $\theta \sqsubseteq a(y)\ominus a(\hat{y})$ by assumption, which
      implies $a(\hat{y})\oplus \theta \sqsubseteq a(y)$, since
      $a(\hat{y})\sqsubseteq a(y)$ and
      Lemma~\ref{le:mvprop}(\ref{le:mvprop:2}) holds.

      From this we can deduce
      $\min_{y\mathcal{R}z} (a \oplus \theta_{Y'})(y) = a(\hat{y})
      \oplus \theta$. We also have
      $\min_{y\mathcal{R}z} a(y) = a(\hat{y})$ and hence -- since
      $a(\hat{y})\sqsubseteq \comp{\theta}$ (due to
      $\theta\sqsubseteq \delta_a \sqsubseteq \comp{a(\hat{y})}$) and
      Lemma~\ref{le:mvprop}(\ref{le:mvprop:9}) holds --
      $V = (a(\hat{y}) \oplus \theta)\ominus a(\hat{y}) = \theta$.
    \item In the remaining case
      $\arg\min_{y\in\mathcal{R}^{-1}(z)} a(y)\subseteq Y'$ and there
      exist $\hat{y}\in \arg\min_{y\in\mathcal{R}^{-1}(z)} a(y)$,
      $y\not\in Y'$, $y\mathcal{R}z$ such that
      $a(y) \ominus a(\hat{y}) \sqsubset \theta$.

      This implies
      $a(y) \sqsubseteq (a(y)\ominus a(\hat{y})) \oplus a(\hat{y})
      \sqsubset \theta \oplus a(\hat{y})$ since again
      $a(\hat{y})\sqsubseteq \comp{\theta}$ and
      Lemma~\ref{le:mvprop}(\ref{le:mvprop:8}) holds. Hence
      $\min_{y\mathcal{R}z} (a \oplus \theta_{Y'})(y) \sqsubseteq
      a(y)$, which means that
      $V \sqsubseteq a(y)\ominus a(\hat{y}) \sqsubset \theta$.
    \end{itemize}
    Summarizing, for $\theta\sqsubseteq \delta_a$ we observe that
    $V = \theta$ if and only if
    $\arg\min_{y\in\mathcal{R}^{-1}(z)} a(y)\subseteq Y'$ and
    $\theta \sqsubseteq a(y) \ominus a(\hat{y})$ whenever
    $\hat{y}\in \arg\min_{y\in\mathcal{R}^{-1}(z)} a(y)$,
    $y\not\in Y'$ and $y\mathcal{R}z$.

    Hence if $\arg\min_{y\in\mathcal{R}^{-1}(z)} a(y)\subseteq Y'$ we
    have
    \[ \inc{a}{\mins_{\mathcal{R}}}(Y',z) = \min\{a(y) \ominus
      a(\hat{y}) \mid \hat{y}\in \arg\min_{y\in\mathcal{R}^{-1}(z)} a(y),
      y\not\in Y', y\mathcal{R}z\} \cup \{\delta_a\} \]
    otherwise $\inc{a}{\mins_{\mathcal{R}}}(Y',z) = 0$.
    
    The values above are minimal whenever
    $Y' = \arg\min_{y\in\mathcal{R}^{-1}(z)} a(y)$ and thus we have:
    \[ \inc{a}{\mins_{\mathcal{R}}} =
      \min_{z\in\Ybot{Z}{\mins_{\mathcal{R}}(a)}} \{ a(y) \ominus
          a(\hat{y}) \mid y\mathcal{R}z, \hat{y}\in
          \arg\min_{y\in\mathcal{R}^{-1}(z)} a(y), y\not\in
          \arg\min_{y\in\mathcal{R}^{-1}(z)} a(y) \} \cup
          \{\delta_a\}. \] Finally we deduce that
    \[ (\mins_\mathcal{R})_a^\#(Y') =
      (\mins_\mathcal{R})_{a,\inc{a}{\mins_{\mathcal{R}}}}^\#(Y') =
      \{z \in \Ybot{Z}{\mins_\mathcal{R}(a)} \mid
      \arg\min_{y\in\mathcal{R}^{-1}(z)} a(y) \subseteq Y'\}. \]

  \item \emph{Maximum:} let $0\sqsubset \theta\sqsubseteq
    \delta_a$. For all $Y' \subseteq \Ybot{Y}{a}$ we have
    \begin{align*}
      (\maxs_\mathcal{R})_{a,\theta}^\#(Y') & =
      \gamma_{\maxs_\mathcal{R}(a),\theta} \circ \maxs_\mathcal{R}
      \circ \alpha_{a,\theta}(Y')\\
      & = \{z \in \Ybot{Z}{\maxs_\mathcal{R}(a)} \mid \theta
      \sqsubseteq \max_{y\mathcal{R}z} (a \oplus \theta_{Y'})(y)
      \ominus \max_{y\mathcal{R}z} a(y)\}
    \end{align*}
    We observe that 
    \begin{eqnarray*}
      \max_{y\mathcal{R}z} (a \oplus \theta_{Y'})(y) & = &
      \max\{\max_{y\in \arg\max_{y\in\mathcal{R}^{-1}(z)} a(y)}
      (a\oplus \theta_{Y'})(y), \max_{\substack{y\mathcal{R}z, y\in Y'
       \\ y\not\in \arg\max_{y\in\mathcal{R}^{-1}(z)} a(y)}}
      (a(y) \oplus \theta)\}
    \end{eqnarray*}
    We can omit the values of all $y$ with $y\mathcal{R}z$,
    $y\not\in \arg\max_{y\in\mathcal{R}^{-1}(z)} a(y)$, $y\not\in Y'$,
    since we will never attain the maximum there.
    
    We now compute the value $V = \max_{y\mathcal{R}z} (a
    \oplus \theta_{Y'})(y) \ominus \max_{y\mathcal{R}z} a(y)$ and
    consider the following cases:
    \begin{itemize}
    \item Assume that there exists
      $\hat{y}\in \arg\max_{y\in\mathcal{R}^{-1}(z)} a(y)$ where
      $\hat{y}\in Y'$.
      
      Then
      $(a\oplus \theta_{Y'})(\hat{y}) = a(\hat{y}) \oplus\theta
      \sqsupseteq (a\oplus \theta_{Y'})(y) \sqsupseteq a(y)$ for all
      $y\in\mathcal{R}^{-1}(z)$, which implies that
      $\max_{y\mathcal{R}z} (a \oplus \theta_{Y'})(y) = a(\hat{y})
      \oplus \theta$. We also have
      $\max_{y\mathcal{R}z} a(y) = a(\hat{y})$ and hence -- since
      $a(\hat{y})\sqsubseteq \comp{\theta}$ and
      Lemma~\ref{le:mvprop}(\ref{le:mvprop:9}) holds --
      $V = (a(\hat{y}) \oplus \theta)\ominus a(\hat{y}) = \theta$.
    \item Assume that
      $\arg\max_{y\in\mathcal{R}^{-1}(z)} a(y)\cap Y' = \emptyset$.
      Now let $\hat{y}\in \arg\max_{y\in\mathcal{R}^{-1}(z)} a(y)$ and
      $y\not\in \arg\max_{y\in\mathcal{R}^{-1}(z)} a(y)$ with
      $y\mathcal{R}z$ and $y\in Y'$. Then
      \begin{eqnarray*}
        \max_{y\in \arg\max_{y\in\mathcal{R}^{-1}(z)} a(y)} (a\oplus
        \theta_{Y'})(y) & = & a(\hat{y}) \\
        \max_{\substack{y\mathcal{R}z, y\in Y'
            \\ y\not\in \arg\max_{y\in\mathcal{R}^{-1}(z)} a(y)}}
        (a(y) \oplus \theta) & = & a(y') \oplus \theta
      \end{eqnarray*}
      for some value $y'$ with $y'\mathcal{R}z$, $y'\in Y'$,
      $y'\not\in \arg\max_{y\in\mathcal{R}^{-1}(z)} a(y)$, that is
      $a(y')\sqsubset a(\hat{y})$.

      So then either
      $\max_{y\mathcal{R}z} (a \oplus \theta_{Y'})(y) = a(\hat{y})$
      and $V = a(\hat{y})\ominus a(\hat{y}) = 0$. Or
      $\max_{y\mathcal{R}z} (a \oplus \theta_{Y'})(y) = a(y') \oplus
      \theta$ and by
      Lemma~\ref{le:mvprop}(\ref{le:mvprop:11})
      $V = (a(y')\oplus \theta) \ominus a(\hat{y}) \sqsubset \theta$.
    \end{itemize}
    Summarizing, for $\theta\sqsubseteq \delta_a$ we observe that
    $V = \theta$ if and only if
    $\arg\max_{y\in\mathcal{R}^{-1}(z)} a(y)\cap Y'\neq\emptyset$,
    where the latter condition is independent of $\theta$.

    Hence, as in the case of reindexing, we have
    $\inc{a}{\maxs_\mathcal{R}} = \delta_a$. Finally we have
    \[ (\maxs_\mathcal{R})_a^\#(Y') =
      (\maxs_\mathcal{R})_{a,\inc{a}{\maxs_{\mathcal{R}}}}^\#(Y') =
      \{z \in \Ybot{Z}{\maxs_\mathcal{R}(a)} \mid
      \arg\max_{y\in\mathcal{R}^{-1}(z)} a(y) \cap Y' \neq
      \emptyset\}. \]

  \item \emph{Average:} 
    for all
    $0 \sqsubset \theta \sqsubseteq \delta_a$ and
    $Y' \subseteq \Ybot{Y}{a}$ by definition
    \begin{align*}
      (\mathrm{av}_D)_{a,\theta}^\#(Y') & = \gamma_{\mathrm{av}_D(a),\theta} \circ \mathrm{av}_D \circ \alpha_{a,\theta}(Y')\\
      & = \{p \in \Ybot{D}{\mathrm{av}_D(a)} \mid \theta \sqsubseteq \bigoplus_{y\in Y} p(y) \odot
      (a \oplus \theta_{Y'})(y) \ominus \bigoplus_{y\in Y} p(y) \odot
      a(y)\}
    \end{align*}
    We show that this set corresponds to
    $\{p \in \Ybot{D}{\mathrm{av}_D(a)} \mid \mathit{supp}(p) \subseteq Y'\}$.

    Consider
    $p \in \Ybot{D}{\mathrm{av}_D(a)}$ such that $\mathit{supp}(p) \subseteq Y'$.
    Note that clearly $\bigoplus_{y \in Y'} p(y) = 1$. 
    Now we have
    \begin{align*}
      & \bigoplus_{y\in Y} p(y) \odot (a \oplus \theta_{Y'})(y) \ominus \bigoplus_{y\in Y} p(y) \odot a(y)\\
      & = \bigoplus_{y\in Y'} p(y) \odot (a(y) \oplus \theta) \oplus \bigoplus_{y\in Y \backslash Y'} p(y) \odot a(y) \ominus \bigoplus_{y\in Y} p(y) \odot a(y)\\
      & = \bigoplus_{y\in Y'} (p(y) \odot a(y) \oplus p(y) \odot \theta) \oplus \bigoplus_{y\in Y \backslash Y'} p(y) \odot a(y) \ominus \bigoplus_{y\in Y} p(y) \odot a(y)\\
      & \qquad
      \mbox{[by weak distributivity, since for $y \in Y' \subseteq \Ybot{Y}{a}$, $a(y) \sqsubseteq \comp{\delta_a}$]}\\
      & = \bigoplus_{y\in Y'} p(y) \odot \theta \oplus \bigoplus_{y\in Y'} p(y) \odot a(y) \oplus \bigoplus_{y\in Y \backslash Y'} p(y) \odot a(y) \ominus \bigoplus_{y\in Y} p(y) \odot a(y)\\
      & = \bigoplus_{y\in Y'} p(y) \odot \theta \oplus \bigoplus_{y\in Y'} p(y) \odot a(y) \ominus \bigoplus_{y\in Y'} p(y) \odot a(y)\\
      & \qquad
        \mbox{[since, for $y \not\in Y' \supseteq \mathit{supp}(p)$,  $p(y)=0$  and thus $p(y) \odot a(y) = 0$]}\\
      & = (\bigoplus_{y\in Y'} p(y)) \odot \theta \oplus \bigoplus_{y\in Y'} p(y) \odot a(y) \ominus \bigoplus_{y\in Y'} p(y) \odot a(y)\\
      & \qquad
        \mbox{[by weak distributivity, since $p$ is a distribution]}\\
      & = 1 \odot \theta \oplus \bigoplus_{y\in Y'} p(y) \odot a(y) \ominus \bigoplus_{y\in Y'} p(y) \odot a(y)\\
      & \qquad
        \mbox{[since $p$ is a distribution]}\\
      & = \theta \oplus \bigoplus_{y\in Y'} p(y) \odot a(y) \ominus
      \bigoplus_{y\in Y'} p(y) \odot a(y) \\
      & = \theta
    \end{align*}
    In order to motivate the last passage, observe that for all
    $y \in Y' \subseteq \Ybot{Y}{a}$, we have
    $a(y) \sqsubseteq \comp{\delta_a}$, and thus
    $\bigoplus_{y \in Y'} p(y) \odot a(y) \sqsubseteq \bigoplus_{y \in
      Y'} p(y) \odot \comp{\delta_a} = (\bigoplus_{y \in Y'} p(y))
    \odot \comp{\delta_a} = 1 \odot \comp{\delta_a} =
    \comp{\delta_a}$, where the third last passage is motivated by
    weak distributivity. Since $\theta \sqsubseteq \delta_a$, by
    Lemma~\ref{le:mvprop}(\ref{le:mvprop:3}), we have
    $\comp{\delta_a} \sqsubseteq \comp{\theta}$ and thus
    $\bigoplus_{y \in Y'} p(y) \odot a(y) \sqsubseteq \comp{\theta}$.
    In turn, using this fact, 
    Lemma~\ref{le:mvprop}(\ref{le:mvprop:9}) motivates the last equality in the chain above, i.e.,
    $\theta \oplus \bigoplus_{y\in Y'} p(y) \odot a(y) \ominus
    \bigoplus_{y\in Y'} p(y) \odot a(y) = \theta$.

    \smallskip
    
    On the other hand, for all $p \in \Ybot{D}{\mathrm{av}_D(a)}$ such
    that $\mathit{supp}(p) \not\subseteq Y'$, there exists
    $y' \in Y \backslash Y'$ such that $p(y') \neq 0$. Then, we
    have
    \begin{align*}
      & \bigoplus_{y\in Y} p(y) \odot (a \oplus \theta_{Y'})(y) \ominus \bigoplus_{y\in Y} p(y) \odot a(y)\\
      & = \bigoplus_{y\in Y'} p(y) \odot (a(y) \oplus \theta) \oplus \bigoplus_{y\in Y \backslash Y'} p(y) \odot a(y) \ominus \bigoplus_{y\in Y} p(y) \odot a(y)\\
      & = \bigoplus_{y\in Y'} p(y) \odot \theta \oplus \bigoplus_{y\in Y'} p(y) \odot a(y) \oplus \bigoplus_{y\in Y \backslash Y'} p(y) \odot a(y) \ominus \bigoplus_{y\in Y} p(y) \odot a(y)\\
      & \qquad
        \mbox{[by weak distributivity, since for $y \in Y' \subseteq \Ybot{Y}{a}$, $a(y) \sqsubseteq \comp{\delta_a}$]}\\
      & = \bigoplus_{y\in Y'} p(y) \odot \theta \oplus \bigoplus_{y\in Y} p(y) \odot a(y) \ominus \bigoplus_{y\in Y} p(y) \odot a(y)\\
      & \sqsubseteq \bigoplus_{y\in Y'} p(y) \odot \theta\\
      & \qquad 
        \mbox{[by Lemma~\ref{le:mvprop}(\ref{le:mvprop:6})]}\\
      & = \theta \odot \bigoplus_{y\in Y'} p(y)\\
      & \mbox{[by weak distributivity, since $p$ is a distribution]}\\
      & \sqsubset \theta
    \end{align*}
    In order to motivate the last inequality, we proceed as
    follows. We have that $\mathit{supp}(p) \not\subseteq Y'$. Let
    $y_0 \in \mathit{supp}(p) \backslash Y'$. We know that
    $\comp{p(y_0)} \sqsubseteq \bigoplus_{y \in Y \backslash \{y\}} p(y)
    \sqsubseteq \bigoplus_{y \in Y'} p(y)$. Therefore
    $\comp{\bigoplus_{y \in Y'} p(y)} \sqsubseteq p(y_0)
    \neq 0$. Hence
    $\bigoplus_{y \in Y'} p(y) \sqsubset 1$.
  
    The strict inequality above now follows, if we further show that
    given an $x \in \monM$, $x \neq 1$ then
    $\theta \odot x \sqsubset \theta$. Note that $\comp{x} \neq
    0$. Therefore
    $\theta = \theta \odot 1 = \theta \odot (x \oplus \comp{x}) =
    \theta \odot x \oplus \theta \odot \comp{x}$, where the last
    equality follows by weak distributivity. Now
    $\theta \odot \comp{x} \sqsubseteq \comp{x} \sqsubseteq
    \comp{\theta \odot x}$, and thus, by
    Lemma~\ref{le:mvprop}(\ref{le:mvprop:9}), we obtain
    $\theta \odot x = \theta \odot x \oplus \theta \odot \comp{x}
    \ominus \theta \odot \comp{x} = \theta \ominus \theta \odot
    \comp{x} \sqsubset \theta$, as desired. The last passage follows by the fact that $\theta, \comp{x} \neq 0$ and thus $\theta \odot
    \comp{x} \neq 0$.

    Since these results hold for all $\theta\sqsubseteq \delta_a$, we
    have $\inc{a}{\mathrm{av}_D} = \delta^a$.
    
    And finally
    $(\mathrm{av}_D)_{a,\theta}^\#(Y') =
    (\mathrm{av}_D)^{a,\theta}_\#(Y') = \{p \in
    \Ybot{D}{\mathrm{av}_D(a)} \mid \mathit{supp}(p) \subseteq Y'\}$.
    \qedhere
  \end{itemize}
\end{proof}

When a non-expansive function arises as the composition of simpler
ones (see Lemma~\ref{le:comp-non-expansive}) we can obtain the
corresponding approximation by just composing the approximations of
the simpler functions.

\begin{proposition}[composing approximations]
  \label{pr:compose-approximation}
  Let $g : \monM^Y \to \monM^W$ and $h : \monM^W \to \monM^Z$ be
  non-expansive functions. For all $a \in \monM^Y$ we have that
  $(h \circ g)_a^\# = h_{g(a)}^\# \circ g_a^\#$. Analogously
  $(h \circ g)^a_\# = h^{g(a)}_\# \circ g^a_\#$ for the dual case.
\end{proposition}

\begin{proof}
  Here we only consider the primal case, the dual case for
  $(h\circ g)_\#^a$ is analogous.
  
  Let
  $0\sqsubset \theta \sqsubseteq \min
  \{\inc{a}{g},\inc{g(a)}{h}\}$. Then, by
  Theorem~\ref{th:approximation}(\ref{th:approximation:2}) we know
  that
  \[ g_a^\# = g_{a,\theta}^\# = \gamma_{g(a),\theta} \circ g \circ \alpha_{a,\theta} \]
  \[ h_{g(a)}^\# = h_{g(a),\theta}^\# = \gamma_{h(g(a)),\theta} \circ h \circ \alpha_{g(a),\theta} \]
  Now we will prove that
  \[ (h \circ g)_{a,\theta}^\# = h_{g(a),\theta}^\# \circ
    g_{a,\theta}^\# \] First observe that
  $g(\alpha_{a,\theta}(Y')) \in \interval{g(a)}{g(a \oplus \theta)}
  \subseteq \interval{g(a)}{g(a) \oplus \theta}$ for all
  $Y' \subseteq \Ybot{Y}{a}$ by
  Lemma~\ref{le:non-expansiveness}. Applying
  Theorem~\ref{th:approximation}(\ref{th:approximation:2}) on $h$ we
  obtain
  \begin{eqnarray*}
    && (h \circ g)_{a,\theta}^\# = \gamma_{h(g(a)),\theta} \circ h
    \circ g \circ \alpha_{a,\theta}(Y') = h_{g(a),\theta}^\# \circ
    \gamma_{g(a),\theta} \circ g \circ \alpha_{a,\theta}(Y') \\
    & = &
    h_{g(a),\theta}^\# \circ g_{a,\theta}^\#(Y') = h_{g(a)}^{\#}
    \circ g_a^{\#}(Y') 
  \end{eqnarray*}
  Hence all functions $(h \circ g)_{a,\theta}^\#$ are equal and
  independent of $\theta$ and so it must hold that
  $(h \circ g)_{a,\theta}^\# = (h \circ g)_a^\#$. Then from
  Theorem~\ref{th:approximation} we can conclude
  $\min \{\inc{a}{g},\inc{g(a)}{h}\} \sqsubseteq \inc{a}{h\circ g}$. \qedhere
\end{proof}

Furthermore functions can be combined via disjoint union, preserving
non-expansiveness, as follows.

\begin{proposition}[disjoint union of non-expansive functions]
  \label{pr:compose-disjoint}
  Let $f_i : \monM^{Y_i} \to \monM^{Z_i}$, for $i\in I$, be
  non-expansive and such that the sets $Z_i$ are pairwise
  disjoint. The function
  $\biguplus\limits_{i\in I} f_i : \monM^{\bigcup_{i\in I} Y_i} \to
  \monM^{\biguplus_{i\in I} Z_i}$ defined by
  \[ \biguplus_{i\in I} f_i(a)(z) = f_i(a|_{Y_i})(z) \qquad \mbox{if
      $z\in Z_i$} \] is non-expansive.
\end{proposition}

\begin{proof}
  For all $a,b \in \monM^{\bigcup_{i\in I} Y_i}$ we have
  \begin{align*}
    & \norm{\biguplus_{i\in I} f_i(b) \ominus \biguplus_{i\in I} f_i(a)} & \\
    & = \max_{z\in\biguplus_{i\in I} Z_i} (\biguplus_{i\in I} f_i(b)(z) \ominus \biguplus_{i\in I} f_i(a)(z)) & \\
    & = \max_{i \in I} \max_{z\in Z_i} (f_i(b|_{Y_i})(z) \ominus f_i(a|_{Y_i})(z)) & \mbox{[since all $Z_i$ are disjoint]}\\
    & = \max_{i \in I} \norm{f_i(b|_{Y_i}) \ominus f_i(a|_{Y_i})} & \mbox{[by def.\ of norm]}\\
    & \sqsubseteq \max_{i \in I} \norm{b|_{Y_i} \ominus a|_{Y_i}} & \mbox{[since all $f_i$ are non-expansive]}\\
    & = \max_{i \in I} \max_{y\in Y_i} (b(y) \ominus a(y)) & \\
    & = \max_{y\in\bigcup_{i\in I} Y_i} (b(y) \ominus a(y)) & \\
    & = \norm{b \ominus a} & \mbox{[by def.\ of norm]} \tag*{\qedhere}
  \end{align*}
\end{proof}

Also, the corresponding approximation of a disjoint union can be
conveniently assembled from the approximations of its components.

\begin{proposition}[disjoint union and approximations]
  \label{pr:approximation-union-fun}
  The approximations for $\biguplus\limits_{i\in I} f_i$, where
  $f_i : \monM^{Y_i} \to \monM^{Z_i}$ are non-expansive and $Z_i$ are
  pairwise disjoint, have the following form. For all
  $a\colon {\bigcup_{i\in I} Y_i} \to \monM$ and
  $Y' \subseteq \bigcup_{i\in I} Y_i$:
  \[ \big(\biguplus_{i\in I} f_i\big)_a^\#(Y') = \biguplus_{i\in I}
    (f_i)_{a|_{Y_i}}^\#(Y'\cap Y_i) \qquad \big(\biguplus_{i\in I}
    f_i\big)^a_\#(Y') = \biguplus_{i\in I} (f_i)^{a|_{Y_i}}_\#(Y'\cap
    Y_i) \]
\end{proposition}

\begin{proof}
  We just show the statement for the primal case, the dual case is
  analogous. We abbreviate $Y = \bigcup_{i\in I} Y_i$.

  Let $0\sqsubset \theta \sqsubseteq \delta_a$. According to the
  definition of $a$-approximation (Lemma~\ref{le:approximation-2}) we
  have for $Y'\subseteq \Ybot{Y}{a}$:
  \[ \big(\biguplus_{i\in I} f_i\big)_{a,\theta}^\#(Y') = \gamma_{\biguplus\limits_{i\in I}f_i(a),\theta} \circ \biguplus\limits_{i\in I}f_i \circ \alpha_{a,\theta} \]
  \[ (f_i)_{a|_{Y_i},\theta}^\# = \gamma_{f_i(a|_{Y_i}),\theta} \circ
    f_i \circ \alpha_{a|_{Y_i},\theta} \]
  for all $i \in I$. Our first step is prove that
  \[ \gamma_{\biguplus\limits_{i\in I}f_i(a),\theta} \circ \biguplus\limits_{i\in I}f_i \circ \alpha_{a,\theta}(Y') = \biguplus_{i\in I}\gamma_{f_i(a|_{Y_i}),\theta} \circ f_i \circ \alpha_{a|_{Y_i},\theta}(Y'\cap Y_i) \]
  By simply expanding the functions we obtain
  \[ \gamma_{\biguplus\limits_{i\in I}f_i(a),\theta} \circ \biguplus\limits_{i\in I}f_i \circ \alpha_{a,\theta}(Y') = \{z \in Z_i \mid i \in I\ \land\ \theta \sqsubseteq f_i((a \oplus \theta_{Y'})|_{Y_i})(z) \ominus f_i(a|_{Y_i})(z) \} \]
  \[ \biguplus_{i\in I}\gamma_{f_i(a|_{Y_i}),\theta} \circ f_i \circ \alpha_{a|_{Y_i},\theta}(Y'\cap Y_i) = \biguplus_{i \in I} \{z \in Z_i \mid \theta \sqsubseteq f_i(a|_{Y_i} \oplus \theta_{Y' \cap Y_i})(z) \ominus f_i(a|_{Y_i})(z) \} \]
  which are the same set, since for all $i \in I$ clearly $(a \oplus
  \theta_{Y'})|_{Y_i} = a|_{Y_i} \oplus \theta_{Y' \cap Y_i}$.

  This implies
  \[ \big(\biguplus_{i\in I} f_i\big)_{a,\theta}^\#(Y') =
    \biguplus_{i\in I} (f_i)_{a|_{Y_i},\theta}^\#(Y'\cap Y_i). \]
  Whenever $\theta\sqsubseteq \min\limits_{i \in I} \inc{a}{f_i}$,
  this can be rewritten to
  \[ \big(\biguplus_{i\in I} f_i\big)_{a,\theta}^\#(Y') =
    \biguplus_{i\in I} (f_i)_{a|_{Y_i}}^\#(Y'\cap Y_i). \] All functions
  $\big(\biguplus_{i\in I} f_i\big)_{a,\theta}^\#$ are equal and
  independent of $\theta$ and so it must hold that
  $\big(\biguplus_{i\in I} f_i\big)_{a,\theta}^\#$
  $= \big(\biguplus_{i\in I} f_i\big)_a^\#$. Then with
  Theorem~\ref{th:approximation} we can also conclude
  $\min\limits_{i \in I} \inc{a}{f_i}\sqsubseteq
  \inc{a}{\biguplus_{i\in I} f_i}$.
\end{proof}

\end{toappendix}

\begin{table}[t]
  \normalsize
  \caption{\normalsize Basic functions $f\colon \monM^Y\to \monM^Z$ (constant,
    reindexing, minimum, maximum, average), function composition,
    disjoint union and the corresponding approximations
    $f_a^\#\colon \Pow{\Ybot{Y}{a}} \to \Pow{\Ybot{Z}{f(a)}}$,
    $f^a_\#\colon \Pow{\Ytop{Y}{a}} \to
    \Pow{\Ytop{Z}{f(a)}}$. 
  }
  \label{tab:basic-functions-approximations}
  \emph{Notation:}
  $\mathcal{R}^{-1}(z) = \{y\in Y \mid y\mathcal{R}z\}$,
  $\mathit{supp}(p) = \{y\in Y\mid p(y) \sqsupset 0\}$ for
  $p\in \mathcal{D}(Y)$, \\
  $\arg\min_{y\in Y'} a(y)$, resp.\ $\arg\max_{y\in Y'} a(y)$,
  the set of elements where $a|_{Y'}$ \\
  reaches the minimum, resp.\ the maximum,
  for $Y' \subseteq Y$ and $a\in\monM^Y$

  \smallskip

  \begin{center}
    \begin{tabular}{|l|l|l|}
      \hline function $f$ & definition of $f$ & $f_a^\#(Y')$ (above),
      $f_\#^a(Y')$ (below) \\
      \hline\hline
      $c_k$ & $f(a) = k$ & $\emptyset$ \\
      ($k\in\monM^Z$) && $\emptyset$ \\ \hline
      $u^*$ & $f(a) = a \circ u$ & $u^{-1}(Y')$ \\
      ($u\colon Z\to Y$) & & $u^{-1}(Y')$ \\
      \hline $\mins_\mathcal{R}$ &
      $f(a)(z) = \min\limits_{y\mathcal{R}z} a(y)$ &
      $\{z \in \Ybot{Z}{f(a)} \mid
      \arg\!\!\!\!\!\min\limits_{y\in\mathcal{R}^{-1}(z)}\!\!a(y)\subseteq Y'\}$ \\
      ($\mathcal{R}\subseteq Y\times Z$) & &
      $\{z \in \Ytop{Z}{f(a)} \mid
      \arg\!\!\!\!\!\min\limits_{y\in\mathcal{R}^{-1}(z)}\!\!a(y)\cap Y'\!\neq
      \emptyset\}$\mystrutbl \\ \hline $\maxs_\mathcal{R}$ &
      $f(a)(z) = \max\limits_{y\mathcal{R}z} a(y)$ &
      $\{z \in \Ybot{Z}{f(a)} \mid
      \arg\!\!\!\!\!\max\limits_{y\in\mathcal{R}^{-1}(z)}\!\!a(y)\cap Y'\!\neq
      \emptyset\}$ \\
      ($\mathcal{R}\subseteq Y\times Z$) &&
      $\{z \in \Ytop{Z}{f(a)} \mid
      \arg\!\!\!\!\!\max\limits_{y\in\mathcal{R}^{-1}(z)}\!\!a(y)\subseteq Y'\}$
      \mystrutbl \\
      \hline \mystrutab$\mathrm{av}_D$ \quad &
      $f(a)(p) = \bigoplus\limits_{y\in Y} p(y) \odot a(y)$ &
      $\{p \in \Ybot{D}{f(a)} \mid \mathit{supp}(p) \subseteq
      Y'\}$ \\
      ($Z = D \subseteq \mathcal{D}(Y)$) & &
      $\{p \in \Ytop{D}{f(a)} \mid \mathit{supp}(p) \subseteq Y'\}$ \\
      \hline\hline
      $h\circ g$ & $f(a) = h(g(a))$ & $h_{g(a)}^\# \circ
      g_a^\#(Y')$ \\
      ($g\colon \monM^Y\to \monM^W$, && $h^{g(a)}_\# \circ g^a_\#(Y')$ \\
      $h\colon \monM^W\to \monM^Z$) && \\ \hline
      $\biguplus\limits_{i\in I} f_i$ \quad $I$ finite &
      $f(a)(z) = f_i(a|_{Y_i})(z)$ &
      $\biguplus_{i\in I} (f_i)_{a|_{Y_i}}^\#(Y'\cap Y_i)$ \\
      ($f_i\colon \monM^{Y_i}\to \monM^{Z_i}$, &
      ($z\in Z_i$) &
      $\biguplus_{i\in I} (f_i)^{a|_{Y_i}}_\#(Y'\cap Y_i)$ 
      \\
      $Y = \bigcup\limits_{i\in I} Y_i$, %
      $Z = \biguplus\limits_{i\in I} Z_i$)\mystrutbl & & \\
      \hline
    \end{tabular}
  \end{center}
\end{table}

We can then prove the desired results (non-expansiveness and approximation) for the basic building blocks and their composition (all schematically reported in Table~\ref{tab:basic-functions-approximations}).

\begin{theoremrep}
  \label{th:bigtable}
  All basic functions in Def.~\ref{de:basic-functions}
  are
  non-expansive. Furthermore non-expansive functions are closed under
  composition and disjoint union. The approximations are the ones
  listed in the third column of Table~\ref{tab:basic-functions-approximations}.
\end{theoremrep}

\begin{proof}
  Follows directly from Propositions~\ref{pr:fct-nonexpansive},
  \ref{pr:associated-approximations},
  \ref{pr:compose-approximation}, \ref{pr:compose-disjoint},
  \ref{pr:approximation-union-fun} and
  Lemma~\ref{le:comp-non-expansive}.
\end{proof}

\begin{toappendix}
We can also specify the maximal decrease respectively increase that
is propagated (here we are using the notation of
Lemma~\ref{le:approximation-2}).

\begin{corollary}
  \label{cor:iota-examples}
  Let $f\colon \monM^Y \to \monM^Z$, $a\in \monM^Y$ and $\inc{a}{f}$
  be defined as in Lemma~\ref{le:approximation-2}. In the dual view we
  have
  $\inc{f}{a} = \min \{ \inc{f}{a}(Y',z) \mid Y' \subseteq Y\ \land\ z
  \in Z\ \land\ \inc{f}{a}(Y',z) \neq 0\} \cup \{ \delta^a\}$, where
  the set
  $\{ \theta \sqsubseteq \delta^a \mid z \in f^{a,\theta}_\#(Y') \}$
  has a maximum for each $z\in \Ytop{Z}{f(a)}$ and
  $Y'\subseteq \Ytop{Y}{a}$, that we denote by $\inc{f}{a}(Y',z)$.

  We consider the basic functions from
  Def.~\ref{de:basic-functions}, function composition
  as in Lemma~\ref{le:comp-non-expansive} and disjoint union as in
  Proposition~\ref{pr:compose-disjoint} and give the corresponding
  values for $\inc{a}{f}$ and $\inc{f}{a}$.

  For greatest fixpoints (primal case) we obtain:
  \begin{itemize}
  \item
    $\inc{a}{c_k} = \inc{a}{u^*} = \inc{a}{\max_\mathcal{R}} =
    \inc{a}{\mathrm{av}_D} = \delta^a$
  \item $\inc{a}{\min_\mathcal{R}}= \min\limits_{z\in
    \Ybot{Z}{\min_\mathcal{R}(a)}} \{ a(y)\ominus a(\hat{y}) \mid$ \\
    \mbox{}\hspace{4cm}$y\mathcal{R}z, y\notin
    \arg\min_{y\in\mathcal{R}^{-1}(z)} a(y), \hat{y}\in
    \arg\min_{y\in\mathcal{R}^{-1}(z)} a(y) \} \cup \{\delta^a\}$
  \item $\inc{a}{g\circ f} \sqsupseteq
    \min\{\inc{a}{f},\inc{f(a)}{g}\}$
  \item
    $\inc{a}{\biguplus_{i\in I} f_i} = \min_{i\in I}
    \inc{a|_{Y_i}}{f_i}$
  \end{itemize}
  For least fixpoints (dual case) we obtain:
  \begin{itemize}
  \item
    $\inc{c_k}{a} = \inc{u^*}{a} = \inc{\min_\mathcal{R}}{a} =
    \inc{\mathrm{av}_D}{a} = \delta_a$
  \item $\inc{\max_\mathcal{R}}{a} = \min\limits_{z\in
    \Ytop{Z}{\min_\mathcal{R}(a)}} \{ a(\hat{y})\ominus a(y) \mid$ \\
    \mbox{}\hspace{4cm}$y\mathcal{R}z, \hat{y}\in
    \arg\max_{y\in\mathcal{R}^{-1}(z)} a(y), y\notin
    \arg\max_{y\in\mathcal{R}^{-1}(z)} a(y) \} \cup \{\delta_a\}$
  \item $\inc{g\circ f}{a} \sqsupseteq
    \min\{\inc{f}{a},\inc{g}{f(a)}\}$
  \item
    $\inc{\biguplus_{i\in I} f_i}{a} = \min_{i\in I}
    \inc{f_i}{a|_{Y_i}}$
  \end{itemize}
\end{corollary}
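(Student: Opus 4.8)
The statement bundles together several largely independent assertions: non-expansiveness of the five families of basic functions, closure of non-expansive functions under composition and under disjoint union, and the explicit form of the primal and dual approximations for each construction. The plan is to treat each in turn, reducing it to one of the auxiliary results recorded in this section, and then to read the table off.

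\emph{Non-expansiveness.} For $c_k$ this is immediate since $c_k(b)\ominus c_k(a)=0$; for $u^*$ it holds because the maximum defining $\norm{u^*(b)\ominus u^*(a)}$ ranges over a subset of the indices defining $\norm{b\ominus a}$. For $\min_{\mathcal{R}}$ and $\max_{\mathcal{R}}$ one picks, for each $z$, an index $y_z\in\mathcal{R}^{-1}(z)$ realising the relevant extremum (using finiteness of $Y$ and totality of $\sqsubseteq$) and bounds the difference by $b(y_z)\ominus a(y_z)\sqsubseteq\norm{b\ominus a}$, invoking \autoref{le:mvprop}(\ref{le:mvprop:6}) in the $\max$ case. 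For $\mathrm{av}_D$ one uses weak distributivity of $\odot$ over $\oplus$ — applicable since $b(y)\ominus a(y)\sqsubseteq\comp{a(y)}$ — together with $\bigoplus_y p(y)=1$ and again \autoref{le:mvprop}(\ref{le:mvprop:6}). This is \autoref{prop:fct-nonexpansive}. Closure under composition is the two-line computation of \autoref{le:comp-non-expansive}; closure under disjoint union is \autoref{prop:compose-disjoint}, using that $\norm{\biguplus_i f_i(b)\ominus\biguplus_i f_i(a)}=\max_i\norm{f_i(b|_{Y_i})\ominus f_i(a|_{Y_i})}$ because the codomains $Z_i$ are pairwise disjoint.

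\emph{Approximations.} For each basic function I would compute $f_{a,\theta}^\#=\gamma_{f(a),\theta}\circ f\circ\alpha_{a,\theta}$ for small $\theta$, identify the value $\inc{a}{f}$ at which it stabilises (\autoref{le:approximation-2}), and check it matches the corresponding table row; this is \autoref{prop:associated-approximations}. The composition formula $(h\circ g)_a^\#=h_{g(a)}^\#\circ g_a^\#$ follows by applying \autoref{th:approximation}(\ref{th:approximation:2}) on $h$ and on $g$ with a common $\theta\sqsubseteq\min\{\inc{a}{g},\inc{g(a)}{h}\}$, as in \autoref{prop:compose-approximation}; the disjoint-union formula follows from $(a\oplus\theta_{Y'})|_{Y_i}=a|_{Y_i}\oplus\theta_{Y'\cap Y_i}$, as in \autoref{prop:approximation-union-fun}. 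The dual columns are obtained either by the dualisation of Section~\ref{se:dual-view-app} or directly from the same propositions, which already state both cases.

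The assembly above is routine; the real work is inside \autoref{prop:associated-approximations}, and there the $\min_{\mathcal{R}}$ case (dually $\max_{\mathcal{R}}$) is the delicate one. For fixed $z$ and $Y'$ one must find the largest $\theta$ with $\theta\sqsubseteq\min_{y\mathcal{R}z}(a\oplus\theta_{Y'})(y)\ominus\min_{y\mathcal{R}z}a(y)$: this is $0$ unless every minimiser of $a$ on $\mathcal{R}^{-1}(z)$ lies in $Y'$, and otherwise equals the smallest gap $a(y)\ominus a(\hat{y})$ over $\hat{y}\in\mathit{Min}_{a|_{\mathcal{R}^{-1}(z)}}$, $y\notin Y'$, $y\mathcal{R}z$ (capped by $\delta_a$), which needs \autoref{le:mvprop}(\ref{le:mvprop:8}) and~(\ref{le:mvprop:9}). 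Taking the minimum of these quantities over all $z$ and $Y'$ (the case $Y'=\mathit{Min}_{a|_{\mathcal{R}^{-1}(z)}}$ being the worst) yields $\inc{a}{\min_{\mathcal{R}}}$, at which the $\theta$-dependence disappears and membership collapses to $\mathit{Min}_{a|_{\mathcal{R}^{-1}(z)}}\subseteq Y'$. The average case is the other subtle point: one must show, again via weak distributivity, that the propagated increase is exactly $\theta$ precisely when $\mathit{supp}(p)\subseteq Y'$ and strictly below $\theta$ otherwise.
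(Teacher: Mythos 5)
Your proposal is essentially a proof outline for \autoref{th:bigtable} (non-expansiveness of the basic functions, closure under composition and disjoint union, and the explicit forms of $f_a^\#$, $f_\#^a$ in the table), rather than of this corollary. The corollary is specifically about the scalar quantities $\inc{a}{f}$ and $\inc{f}{a}$. Most of its content can indeed be read off from the proofs of Propositions~\ref{prop:associated-approximations}, \ref{prop:compose-approximation} and \ref{prop:approximation-union-fun} — those proofs compute $\inc{a}{c_k}$, $\inc{a}{u^*}$, $\inc{a}{\mins_\mathcal{R}}$, $\inc{a}{\maxs_\mathcal{R}}$, $\inc{a}{\mathrm{av}_D}$ explicitly along the way, and Proposition~\ref{prop:compose-approximation} produces the one-sided bound $\inc{a}{h\circ g}\sqsupseteq\min\{\inc{a}{g},\inc{g(a)}{h}\}$ exactly as stated — so your references are apposite, if slightly mislabelled as to what they establish.

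The genuine gap is the disjoint-union case. The corollary asserts the \emph{equality} $\inc{a}{\biguplus_{i\in I} f_i} = \min_{i\in I}\inc{a|_{Y_i}}{f_i}$, whereas Proposition~\ref{prop:approximation-union-fun} (which you cite for this) only yields the inequality $\min_{i\in I}\inc{a|_{Y_i}}{f_i}\sqsubseteq\inc{a}{\biguplus_{i\in I} f_i}$ — the analogue of what one gets for composition, where the corollary deliberately stops at $\sqsupseteq$. Your sketch never observes that the union case claims more and requires a separate argument for the reverse bound $\inc{a}{\biguplus_{i\in I} f_i}\sqsubseteq\inc{a|_{Y_i}}{f_i}$ for every $i\in I$. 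The paper supplies this by contradiction: if $\iota:=\inc{a}{\biguplus_i f_i}\sqsupset\iota_i:=\inc{a|_{Y_i}}{f_i}$ for some $i$, then by \autoref{le:approx-max-theta} and the definition of $\iota_i$ there exist $z\in\Ybot{Z_i}{f_i(a)}$ and $Y'\subseteq\Ybot{Y}{a}$ with $z\in(f_i)^\#_{a|_{Y_i},\iota_i}(Y'\cap Y_i)=(f_i)^\#_{a|_{Y_i}}(Y'\cap Y_i)$ but $z\notin(f_i)^\#_{a|_{Y_i},\iota}(Y'\cap Y_i)$; yet the union decomposition at parameter $\iota$ forces $z\in(f_i)^\#_{a|_{Y_i},\iota}(Y'\cap Y_i)$ — a contradiction. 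Without this step (and its dual), the equality in the corollary is unproven, so your proposal as written does not close the statement.
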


\begin{proof}
  The values $\inc{a}{f}$ can be obtained by inspecting the proofs of
  Propositions~\ref{pr:associated-approximations},
  \ref{pr:compose-approximation} and
  \ref{pr:compose-disjoint}.

  It only remains to show that
  $\iota := \inc{a}{\biguplus_{i\in I} f_i} \sqsubseteq \min_{i\in I}
  \inc{a|_{Y_i}}{f_i}$ (cf. Proposition~\ref{pr:compose-disjoint}),
  which means showing
  $\iota \sqsubseteq \inc{a|_{Y_i}}{f_i}$
  for every $i\in I$. We abbreviate $\iota_i := \inc{a|_{Y_i}}{f_i}$.

  If $\iota\sqsupset \iota_i$ for some $i\in I$, we will
  find a $z\in \Ybot{Z_i}{f_i(a)}$ and $Y'\subseteq \Ybot{Y}{a}$, such
  that
  $z\in (f_i)^\#_{a|_{Y_i},\iota_i}(Y' \cap Y_i) =
  (f_i)^\#_{a|_{Y_i}}(Y' \cap Y_i)$ but
  $z\notin (f_i)_{a|_{Y_i},\iota}^\#(Y' \cap Y_i)$ by definition
  (cf. Lemma~\ref{le:approx-max-theta}). This is a contradiction since
  \begin{eqnarray*}
    z\in \biguplus_{i\in I} (f_i)^\#_{a|_{Y_i}}(Y' \cap Y_i) =
    \big(\biguplus_{i\in I} f_i\big)_a^\#(Y') = \big(\biguplus_{i\in
      I} f_i\big)_{a,\iota}^\#(Y') = \biguplus_{i\in I}
    (f_i)_{a|_{Y_i},\iota}^\#(Y'\cap Y_i)
  \end{eqnarray*}
  and since $z\in Z_i$,
  $z\not\in (f_i)_{a|_{Y_i},\iota}^\#(Y' \cap Y_i)$ and cannot be
  contained in the union.
  
  The arguments for the values $\inc{f}{a}$ in the dual case are
  analogous. \qedhere
\end{proof}
\end{toappendix}

\section{Applications}
\label{se:applications}

\subsection{Termination probability}
\label{sec:termination-probability}

We start by making the example from the introduction
(Section~\ref{se:introduction}) more formal. Consider a Markov chain
$(S, T, \suc)$, as defined in the introduction
(Fig.~\ref{fig:markov-chain}), where we restrict the codomain of
$\suc\colon S\backslash T \to \mathcal{D}(S)$ to
$D\subseteq \mathcal{D}(S)$, where $D$ is finite (to ensure that all
involved sets are finite). Furthermore let
$\mathcal{T}\colon [0,1]^S\to [0,1]^S$ be the function
(Fig.~\ref{fig:markov-chain}) whose least fixpoint $\mu\mathcal{T}$
assigns to each state its termination probability.

\begin{lemmarep}
  \label{le:char-TermProb}
  The function $\mathcal{T}$ can be written as
  \[ \mathcal{T} = (\suc^* \circ \mathrm{av}_D) \uplus c_k  \]
  where $k\colon T \to [0,1]$ is the constant function $1$ defined
  only on terminal states.
\end{lemmarep}

\begin{proof}
Let $t\colon S\to [0,1]$. For $s\in T$ we have
\begin{align*}
&((\suc^* \circ \mathrm{av}_D) \uplus c_k)(t)(s) \\
&= c_k(t)(s) &\mbox{[since $s\in T$]} \\
&= k(s) = 1 &\mbox{[by definition of $c_k$ and $k$]} \\
&= \mathcal{T}(t)(s) &\mbox{[since $s\in T$]}
\end{align*}
For $s\notin T$ we have
\begin{align*}
&((\suc^* \circ \mathrm{av}_D) \uplus c_k)(t)(s) \\
&= \suc^* \circ \mathrm{av}_D(t)(s) &\mbox{[since $s \notin T$]} \\
&= \mathrm{av}_D(t)(\suc(s)) &\mbox{[by definition of reindexing]} \\
&= \sum_{s'\in S} \suc(s)(s') \cdot t(s') &\mbox{[by definition of $\mathrm{av}_D$]} \\
&= \mathcal{T}(t)(s) &\mbox{[since $s \notin T$]} \tag*{\qedhere}
\end{align*}
\end{proof}

From this representation and Theorem~\ref{th:bigtable} it
is obvious that $\mathcal{T}$ is non-expansive.

\begin{lemmarep}
  \label{le:associated-function-TermProb}
  Given a function $t\colon S\to [0,1]$, the $t$-approximation for
  $\mathcal{T}$ in the dual sense is
  $\mathcal{T}_\#^t\colon \Pow{\Ytop{S}{t}} \to
  \Pow{\Ytop{S}{\mathcal{T}(t)}}$ with
  \[
  \mathcal{T}_\#^t(S') = \{s \in \Ytop{S}{\mathcal{T}(t)} \mid s \notin T \land \mathit{supp}(\suc(s)) \subseteq S'\}.
  \]
\end{lemmarep}
  
\begin{proof}
  In the following let $t \colon S \to [0,1]$ and $S' \subseteq \Ytop{S}{t}$.
  By Lemma~\ref{le:char-TermProb} we know that $\mathcal{T} = (\suc^* \circ \mathrm{av}_D) \uplus c_k$, then by Propositions~\ref{pr:approximation-union-fun}, \ref{pr:compose-approximation}, and~\ref{pr:associated-approximations} we have
  \begin{align*}
  \mathcal{T}_\#^t(S') &= ((\suc^* \circ \mathrm{av}_D) \uplus c_k)_\#^t(S') \\
  &= (\suc^* \circ \mathrm{av}_D)_\#^{t}(S') \cup (c_k)_\#^{t}(S') \\
  &= (\suc^*)_\#^{\mathrm{av}_D(t)} \circ (\mathrm{av}_D)_\#^{t}(S') \cup (c_k)_\#^{t}(S') \\
  &= \{s \in \Ytop{S \backslash T}{\suc^*(\mathrm{av}_D(t))} \mid \suc(s) \in \{q \in \Ytop{D}{\mathrm{av}_D(t)} \mid \mathit{supp}(q) \subseteq S'\} \} \cup \emptyset \\
  &= \{s \in \Ytop{S \backslash T}{\suc^*(\mathrm{av}_D(t))} \mid \suc(s) \in \Ytop{D}{\mathrm{av}_D(t)} \land \mathit{supp}(\suc(s)) \subseteq S'\}
  \end{align*}
  Observe that actually for all $s \in \Ytop{S \backslash T}{\suc^*(\mathrm{av}_D(t))}$ it always holds that $\suc(s) \in \Ytop{D}{\mathrm{av}_D(t)}$. In fact, since $s \in \Ytop{S \backslash T}{\suc^*(\mathrm{av}_D(t))}$ we must have that $\suc^*(\mathrm{av}_D(t))(s) = \mathrm{av}_D(t)(\suc(s)) \neq 0$, and thus $\suc(s) \in \{q \in D \mid \mathrm{av}_D(t)(q) \neq 0\} = \Ytop{D}{\mathrm{av}_D(t)}$. Therefore, we have that
  \begin{align*}
  &\{s \in \Ytop{S \backslash T}{\suc^*(\mathrm{av}_D(t))} \mid \suc(s) \in \Ytop{D}{\mathrm{av}_D(t)} \land \mathit{supp}(\suc(s)) \subseteq S'\} \\
  &= \{s \in \Ytop{S \backslash T}{\suc^*(\mathrm{av}_D(t))} \mid \mathit{supp}(\suc(s)) \subseteq S'\}
  \end{align*}
  Finally, the set above is the same as
  \[ \{s \in \Ytop{S}{\mathcal{T}(t)} \mid s \notin T \land \mathit{supp}(\suc(s)) \subseteq S'\} = \{s \in \Ytop{S \backslash T}{\mathcal{T}(t)} \mid \mathit{supp}(\suc(s)) \subseteq S'\} \]
  because, for all $s \in S \backslash T$, hence $s \notin T$, we have that $\mathcal{T}(t)(s) = \sum_{s'\in S} \suc(s)(s')\cdot t(s') = \suc^*(\mathrm{av}_D(t))(s)$, and so $\Ytop{S \backslash T}{\mathcal{T}(t)} = \Ytop{S \backslash T}{\suc^*(\mathrm{av}_D(t))}$.
\end{proof}

At this point we have all the ingredients needed to formalise the
application presented in the introduction. We refrain from repeating
the same example, but rather present a new example that allows us to
illustrate the question of the largest decrease for a fixpoint that
still guarantees a pre-fixpoint (the dual problem is treated in
Proposition~\ref{de:greatest-increase-set}).

\begin{exa}
  \label{ex:markov-no-greatest}
  Consider the following Markov chain where $S = \{x_1,x_2,x_3\}$ are
  non-terminal states. The least fixpoint of the underlying fixpoint
  function $\mathcal{T}$ is clearly the constant $0$, since no state
  can reach a terminal state.
 \begin{center}
    \scalebox{0.85}{
      \begin{tikzpicture}
        \node (x1) at (0,0) [circle,draw] {\begin{tabular}{c}
            $x_1$
          \end{tabular} };
                  \node (x2) at (2,0) [circle,draw] {\begin{tabular}{c}
            $x_2$
          \end{tabular} };
                            \node (x3) at (4,0) [circle,draw] {\begin{tabular}{c}
            $x_3$
          \end{tabular} };
          
        \draw[->,thick,   loop left] (x1) to node [left]  {1} (x1);
        \draw[->,thick,   loop right] (x3) to node [right]  {1} (x3);
        \draw[->,thick] (x2) to node [below]  {$\frac{1}{2}$} (x1);
        \draw[->,thick] (x2) to node [below]  {$\frac{1}{2}$} (x3);

      \end{tikzpicture}
    }
  \end{center}
  Now consider the function $t\colon S\to[0,1]$
  defined by $t(x_1) = 0.1$, $t(x_2) = 0.5$ and $t(x_3) = 0.9$. This
  is also a fixpoint of $\mathcal{T}$.

  Observe that $\mathcal{T}^t_\#(S) = S$ and thus, clearly,
  $\nu \mathcal{T}^t_\# = S$.  According to (the dual of)
  Def.~\ref{de:greatest-increase-set} we have $\delta^t(S) = 0.1$ and
  thus, by (the dual of) Proposition~\ref{pr:increases}, the function
  $t' = t \ominus (0.1)_S$, with $t'(x_1)=0$, $t'(x_2)=0.4$, and
  $t'(x_3) = 0.8$, is a pre-fixpoint. Indeed,
  $\mathcal{T}(t')(x_1) =0$, $\mathcal{T}(t')(x_2) =0.4$ and
  $\mathcal{T}(t')(x_3) =0.8$.

  This is not the largest decrease producing a pre-fixpoint. In fact,
  we can choose $\theta = 0.9$, greater than $\delta^{t}(S)$
  and we have that
  $t \ominus \theta_{S}$ is the constant $0$, i.e., the least fixpoint
  of $\mathcal{T}$.  However, if we take
  $\theta' = 0.5 \sqsubset \theta$, then $t \ominus \theta'_{S}$ is
  not a pre-fixpoint. In fact $(t \ominus \theta'_{S})(x_2) = 0$,
  while $\mathcal{T}(t\ominus \theta'_S)(x_2) =0.2$. This means that
  the set of decreases (beyond $\delta^{t}(S)$) producing a
  pre-fixpoint is not downward-closed and hence the largest decrease
  cannot be found by binary search, while, as already mentioned, a
  binary search will work for decreases below $\delta^{t}(S)$.
\end{exa}

It is well-known that the function $\mathcal{T}$ can be tweaked in
such a way that it has a unique fixpoint, coinciding with
$\mu\mathcal{T}$, by determining all states which cannot reach a
terminal state and setting their value to zero~\cite{bk:principles-mc}. Hence
fixpoint iteration from above does not really bring us any added value
here. It does however make sense to use the proof rule in order to
guarantee lower bounds via post-fixpoints.

Furthermore, termination probability is a special case of the
considerably more complex stochastic games that will be studied in
Section~\ref{se:ssgs}, where the trick of modifying the function is
not applicable.

\subsection{Behavioural metrics for probabilistic automata  and metric transition systems}
\label{sec:prob-automata}

We now consider behavioural metrics for probabilistic automata, which
involve both non-deterministic branching as well as probabilistic
branching. In addition, state labels can be taken from a metric space
in order to capture the fact that there can be a lower bound for the
distance related to some intrinsic features of the states.
As we will discuss, the model is sufficiently general to capture, as
instances, various kinds of probabilistic automata
in the literature~(e.g.,~\cite{bblmtv:prob-bisim-distance-automata}) as well as metric transition
systems~\cite{afs:linear-branching-metrics}.

We first consider the Kantorovich and Hausdorff
liftings and the corresponding approximations, which will play a major
role in the treatment of probabilistic automata.

\paragraph*{Kantorovich lifting.} The Kantorovich (also known as
Wasserstein) lifting converts a metric on $X$ to a metric on
probability distributions over $X$. Actually, as it commonly happens,
we will define the lifting for general distance functions on $[0,1]$,
not restricting to metrics.

In order to ensure finiteness of all the sets involved, we restrict to
$D\subseteq \mathcal{D}(X)$, some finite set of probability
distributions over $X$. A \emph{coupling} of $p,q\in D$ is a
probability distribution $c\in \mathcal{D}(X\times X)$ whose left and
right marginals are $p,q$, i.e.,
$p(x_1) = m_c^L(x_1) := \sum_{x_2\in X} c(x_1,x_2)$ and
$q(x_2) = m_c^R(x_2) := \sum_{x_1\in X} c(x_1,x_2)$. The set of all
couplings of $p,q$, denoted by $\Omega(p,q)$, forms a polytope with
finitely many vertices~\cite{pc:computational-ot}. The set of all
polytope vertices that are obtained by coupling any $p,q\in D$ is also
finite and is denoted by
$\mathit{VP}_D\subseteq \mathcal{D}(X\times X)$.

The Kantorovich lifting is defined as
$\mathcal{K} : [0,1]^{X\times X} \to [0,1]^{D\times D}$ where
\[ \mathcal{K}(d)(p,q) = \min_{c\in\Omega(p,q)} \sum_{(x_1,x_2)\in
    X\times X} c(x_1,x_2) \cdot d(x_1,x_2). \] The coupling $c$ can be
interpreted as the optimal transport plan to move goods from suppliers
to customers~\cite{v:optimal-transport}.
Below we provide an alternative characterisation, which shows
non-expansiveness of $\mathcal{K}$ and allows one to derive its
approximations.

\begin{lemmarep}
  Let $u : \mathit{VP}_D\to D\times D$, $u(c) = (m_c^L,m_c^R)$. 
  Then
  \[ \mathcal{K} = \mins_u \circ \mathrm{av}_{\mathit{VP}_D}
  \]
  where
  $\mathrm{av}_{\mathit{VP}_D}\colon [0,1]^{X\times
    X}\to[0,1]^{\mathit{VP}_D}$,
  $\min_u\colon [0,1]^{\mathit{VP}_D} \to [0,1]^{D\times D}$.
\end{lemmarep}

\begin{proof}
  It holds that $u^{-1}(p,q) = \Omega(p,q)\cap \textit{VP}_D$ for
  $p,q\in D$. Furthermore note it is sufficient to consider as
  couplings the vertices, i.e., the elements of $\mathit{VP}_D$, since
  the minimum is always attained there~\cite{pc:computational-ot}.
  
  Hence we obtain for $d\colon X\times X\to [0,1]$, $p,q\in D$:
  \begin{align*}
    \mins_u(\mathrm{av}_{\mathit{VP}_D}(d))(p,q) & =
    \min_{c\in\Omega(p,q)\cap \mathit{VP}_D}
    \mathrm{av}_{\mathit{VP}_D}(d)(c) \\
    & = \min_{c\in\Omega(p,q)\cap \mathit{VP}_D} \sum_{x_1,x_2\in
      X\times X} c(x_1,x_2) \cdot d(x_1,x_2) \\
    & = \min_{c\in\Omega(p,q)} \sum_{x_1,x_2\in X\times X}
    c(x_1,x_2) \cdot d(x_1,x_2) \\
    & = \mathcal{K}(d)(p,q) \tag*{\qedhere}
  \end{align*}
\end{proof}

We next present the approximation of the Kantorovich lifting in the
dual sense. Intuitively, given a distance function $d$ and a relation
$M$ on $X$, it characterises those pairs $(p,q)$ of distributions whose
distance in the Kantorovich metric decreases by a constant when we
decrease the distance $d$ for all pairs in $M$ by the same constant.

\begin{lemmarep}
  \label{le:associated-function-kantorovich}
  Let $d\colon X\times X\to [0,1]$. The approximation for the
  Kantorovich lifting $\mathcal{K}$ in the dual sense is
  $\mathcal{K}_\#^d\colon \Pow{\Ytop{X\times X}{d}} \to \Pow{\Ytop{D\times
    D}{\mathcal{K}(d)}}$ with
  \begin{eqnarray*}
    \mathcal{K}_\#^d(M) & = & \{(p,q)\in \Ytop{D\times D}{\mathcal{K}(d)}
    \mid \exists c\in
    \Omega(p,q), \mathit{supp}(c)\subseteq M, \\
    && \qquad \sum_{u,v\in S} d(u,v)\cdot c(u,v) =
    \mathcal{K}(d)(p,q)\}.
  \end{eqnarray*}
\end{lemmarep}

\begin{proof}
  Let $d\colon X\times X\to [0,1]$ and $M\subseteq \Ytop{X\times X}{d}$. Then we have:
  \begin{eqnarray*}
    \mathcal{K}_\#^d(M) & = &
    (\mins_u)_\#^{\mathrm{av}_{\mathit{VP}_D}(d)}
    ((\mathrm{av}_{\mathit{VP}_D})_\#^d(M)) 
  \end{eqnarray*}
  where
  \begin{eqnarray*}
    (\mathrm{av}_{\mathit{VP}_D})_\#^d\colon \Pow{\Ytop{X\times X}{d}} \to
    \Pow{\Ytop{\mathit{VP}_D}{\mathrm{av}_{\mathit{VP}_D}(d)}}
    \\
    (\mins_u)_\#^{\mathrm{av}_{\mathit{VP}_D}(d)}\colon
    \Pow{\Ytop{\mathit{VP}_D}{\mathrm{av}_{\mathit{VP}_D}(d)}}
    \to \Pow{\Ytop{D\times D}{\mathcal{K}(d)}}
  \end{eqnarray*}
  We are using the approximations associated to non-expansive
  functions, given in
  Proposition~\ref{pr:associated-approximations}, and obtain:
  \begin{align*}
    \mathcal{K}_\#^d(M) & = \{(p,q)\in \Ytop{D\times D}{\mathcal{K}(d)}
    \mid \arg\min_{c\in u^{-1}(p,q)}\mathrm{av}_{\mathit{VP}_D}(d)(c) \cap
    (\mathrm{av}_{\mathit{VP}_D})_\#^d(M) \neq \emptyset\} \\
    & = \{(p,q)\in \Ytop{D\times D}{\mathcal{K}(d)}\mid \exists
    c\in\Omega(p,q), c\in
    (\mathrm{av}_{\mathit{VP}_D})_\#^d(M), \\
    & \qquad \mathrm{av}_{\mathit{VP}_D}(d)(c) =
    \min_{c'\in\Omega(p,q)} \mathrm{av}_{\mathit{VP}_D}(d)(c')\} \\
    & = \{(p,q)\in \Ytop{D\times D}{\mathcal{K}(d)}\mid \exists
    c\in\Omega(p,q), c\in
    (\mathrm{av}_{\mathit{VP}_D})_\#^d(M), \\
    & \qquad \mathrm{av}_{\mathit{VP}_D}(d)(c) =
    \mathcal{K}(d)(p,q) \} \\
    & = \{(p,q)\in \Ytop{D\times D}{\mathcal{K}(d)}\mid \exists
    c\in\Omega(p,q),
    \mathit{supp}(c)\subseteq M, \\
    & \qquad \sum_{u,v\in S} d(u,v)\cdot c(u,v) = \mathcal{K}(d)(p,q)
    \} \tag*{\qedhere}
  \end{align*}
\end{proof}

\paragraph*{Hausdorff lifting.} Given a metric $d$ on a finite set
$X$, the Hausdorff lifting of $d$ provides a metric on the powerset
$\Pow{X}$. As for the Kantorovich lifting, we lift distance functions
that are not necessarily metrics.  The Hausdorff lifting is given by a
function
$\mathcal{H} : \monM^{X\times X} \to \monM^{\Pow{X}\times \Pow{X}}$
where
\[ \mathcal{H}(d)(X_1,X_2) = \max\{\max_{x_1\in X_1} \min_{x_2\in X_2}
  d(x_1,x_2),\max_{x_2\in X_2} \min_{x_1\in X_1} d(x_1,x_2)\}. \]
An alternative characterisation of the Hausdorff lifting due to
M\'emoli~\cite{m:wasserstein}, also observed
in~\cite{bbkk:coalgebraic-behavioral-metrics}, is more convenient for
our purposes.  Let $u : \Pow{X\times X}\to \Pow{X}\times \Pow{X}$ be
defined by $u(C) = (\pi_1[C],\pi_2[C])$, where $\pi_1,\pi_2$ are the
projections $\pi_i : X\times X\to X$ and
$\pi_i[C] = \{\pi_i(c)\mid c\in C\}$. Then
$\mathcal{H}(d)(X_1,X_2) = \min \{ \max_{(x_1,x_2) \in C} d(x_1,x_2)
\mid C \subseteq X \times X \ \land\ u(C) = (X_1,X_2)\}$, which can be seen to correspond to the Wasserstein distance with $C$ playing the role of couplings.
Relying on this characterisation, we can obtain the result below,
from which we deduce that $\mathcal{H}$ is non-expansive and
construct its approximation as the composition of the corresponding
functions from Table~\ref{tab:basic-functions-approximations}.

\begin{lemmarep}
  It holds that $\mathcal{H} = \mins_u\circ \maxs_\in$ where
  $\max_\in\colon \monM^{X\times X}\to\monM^{\Pow{X\times X}}$, with
  $\mathrel{\in}\ \subseteq (X\times X)\times\Pow{X\times X}$ the
  ``is-element-of''-relation on $X\times X$, and
  $\min_u\colon \monM^{\Pow{X\times X}} \to \monM^{\Pow{X}\times
    \Pow{X}}$.
\end{lemmarep}

\begin{proof}
  Let for $d : X\times X\to\monM$, $X_1,X_2\subseteq X$. Then we have
  \begin{eqnarray*}
    && \mins_u(\maxs_\in(d))(X_1,X_2) \\
    & = & \min_{u(C)=(X_1,X_2)} (\maxs_\in(d))(C) =
    \min_{u(C)=(X_1,X_2)} \max_{(x_1,x_2)\in C} a(x_1,x_2)
  \end{eqnarray*}
  which is exactly the definition of the Hausdorff lifting
  $\mathcal{H}(d)(X_1,X_2)$ via couplings, due to
  M\'emoli~\cite{m:wasserstein}.
\end{proof}

We next determine the approximation of the Hausdorff lifting in the
dual sense. Intuitively, given a distance function $d$ and a relation
$R$ on $X$, such function characterises those pairs $(X_1,X_2)$,
$X_1,X_2\subseteq X$, whose distance in the Hausdorff metric
decreases by a constant when we decrease the distance $d$ for all
pairs in $R$ by the same constant.

\begin{lemmarep}
  \label{le:associated-function-hausdorff}
  The approximation for the Hausdorff lifting $\mathcal{H}$ in the
  dual sense is as follows. Let $d\colon X\times X\to \monM$, then
  $\mathcal{H}_\#^d\colon \Pow{\Ytop{X\times X}{d}} \to
  \Pow{\Ytop{\Pow{X}\times \Pow{X}}{\mathcal{H}(d)}}$ with 
  \begin{eqnarray*}
    \mathcal{H}_\#^d(R) & = & \{(X_1,X_2)\in \Ytop{\Pow{X}\times
      \Pow{X}}{\mathcal{H}(d)} \mid \\
    && \qquad \forall x_1\in X_1 \big(\min_{x_2'\in X_2} d(x_1,x_2')
    = \mathcal{H}(d)(X_1,X_2)\,\Rightarrow\, \exists x_2\in
    X_2\colon \\
    && \qquad\qquad\qquad\qquad (x_1,x_2)\in R \land d(x_1,x_2) =
    \mathcal{H}(d)(X_1,X_2)\big)
    \mathop{\land} \\
    && \qquad \forall x_2\in X_2 \big( \min_{x_1'\in X_1}
    d(x_1',x_2) = \mathcal{H}(d)(X_1,X_2)\,\Rightarrow\, \exists
    x_1\in
    X_1\colon \\
    && \qquad\qquad\qquad\qquad (x_1,x_2)\in R \land d(x_1,x_2) =
    \mathcal{H}(d)(X_1,X_2)\big)\}
  \end{eqnarray*}
\end{lemmarep}

\begin{proof}
  Let $d\colon X\times X\to \monM$ and
  $R\subseteq \Ytop{X\times X}{d}$. Then we have:
  \begin{eqnarray*}
    \mathcal{H}_\#^d(R) & = & (\mins_u)_\#^{\max_\in(d)}
    ((\maxs_\in)_\#^d(R))
  \end{eqnarray*}
  where
  \begin{eqnarray*}
    (\maxs_\in)_\#^d\colon \Pow{\Ytop{X\times X}{d}} \to
    \Pow{\Ytop{\Pow{X\times X}}{\max_\in(d)}}
    \\
    (\mins_u)_\#^{\max_\in(d)}\colon
    \Pow{\Ytop{\Pow{X\times X}}{\max_\in(d)}} \to
    \Pow{\Ytop{\Pow{X}\times \Pow{X}}{\mathcal{H}(d)}}
  \end{eqnarray*}
  We are using the approximations associated to non-expansive
  functions, given in
  Proposition~\ref{pr:associated-approximations}, and obtain:
  \begin{eqnarray*}
    \mathcal{H}_\#^d(R)
    & = & \{ (X_1,X_2)\in\Ytop{\Pow{X}\times\Pow{X}}{\mathcal{H}(d)}
    \mid \arg\min\limits_{C\in u^{-1}(X_1,X_2)}\max\nolimits_\in(d)(C)
    \cap (\maxs_\in)_\#^d(R) \neq \emptyset\} \\
    & = & \{ (X_1,X_2)\in\Ytop{\Pow{X}\times\Pow{X}}{\mathcal{H}(d)}
    \mid \exists C\subseteq X\times X, u(C) = (X_1,X_2), \\
    && \qquad C\in (\maxs_\in)_\#^d(R), \mathrm{max}_\in(d)(C) =
    \min_{u(C')=(X_1,X_2)} \mathrm{max}_\in(d)(C')\} \\
    & = & \{ (X_1,X_2)\in\Ytop{\Pow{X}\times\Pow{X}}{\mathcal{H}(d)}
    \mid \exists C\subseteq X\times X, u(C) = (X_1,X_2), \\
    && \qquad C\in (\maxs_\in)_\#^d(R), \max d[C] =
    \min_{u(C')=(X_1,X_2)} \max d[C']\} \\
    & = & \{
    (X_1,X_2)\in\Ytop{\Pow{X}\times\Pow{X}}{\mathcal{H}(d)}\mid
    \exists C\subseteq X\times X, u(C) = (X_1,X_2),  \\
    && \qquad \arg\max\limits_{(y_1,y_2)\in C}d(y_1,y_2) \subseteq R, \max d[C] =
    \mathcal{H}(d)(X_1,X_2) \}
  \end{eqnarray*}
  We show that this is equivalent to the characterisation in the
  statement of the lemma.
  \begin{itemize}
  \item Assume that for all $x_1\in X_1$ such that
    $\min_{x_2'\in X_2} d(x_1,x_2') = \mathcal{H}(d)(X_1,X_2)$, there
    exists $x_2\in X_2$ such that $(x_1,x_2)\in R$ and
    $d(x_1,x_2) = \mathcal{H}(d)(X_1,X_2)$ (and vice versa).

    We define a set $C_m$ that contains all such pairs $(x_1,x_2)$,
    obtained from this guarantee. Now let $x_1\not\in
    \pi_1[C_m]$. Then necessarily
    $\min_{x_2'\in X_2} d(x_1,x_2') < \mathcal{H}(d)(X_1,X_2)$
    (because the minimal distance to an element of $X_2$ cannot
    exceed the Hausdorff distance of the two sets). Construct another
    set $C'$ that contains all such $(x_1,x_2)$ where $x_2$ is an
    argument where the minimum is obtained. Also add elements
    $x_2\not\in \pi_2[C_m]$ and their corresponding partners to $C'$.

    The $C = C_m\cup C'$ is a coupling for $X_1,X_2$, i.e.,
    $u(C) = (X_1,X_2)$. Furthermore
    $\arg\max_{(y_1,y_2)\in C}d(y_1,y_2) = C_m \subseteq R$ and
    $\max d[C] = \max d[C_m] = \mathcal{H}(d)(X_1,X_2)$.
  \item Assume that there exists $C\subseteq X\times X$,
    $u(C) = (X_1,X_2)$, $\arg\max_{(y_1,y_2)\in C}d(y_1,y_2) \subseteq R$,
    $\max d[C] = \mathcal{H}(d)(X_1,X_2)$.

    Now let $x_1\in X_1$ such that
    $\min_{x_2'\in X_2} d(x_1,x_2') = \mathcal{H}(d)(X_1,X_2)$. Since
    $C$ is a coupling of $X_1,X_2$, there exists $x_2\in X_2$ such
    that $(x_1,x_2)\in C\subseteq R$. It is left to show that
    $d(x_1,x_2) = \mathcal{H}(d)(X_1,X_2)$, which can be done as
    follows:
    \[
      \mathcal{H}(d)(X_1,X_2) = \min_{x_2'\in X_2} d(x_1,x_2') \le
      d(x_1,x_2) \le \max d[C] = \mathcal{H}(d)(X_1,X_2).
    \]
    For an $x_2\in X_2$ such that
    $\min_{x_1'\in X_1} d(x_1',x_2) = \mathcal{H}(d)(X_1,X_2)$ the
    proof is analogous.
    \qedhere
  \end{itemize}
\end{proof}

\paragraph*{Probabilistic automata}
We have now all the tools needed to discuss probabilistic automata. Let $(L, d_L)$ be a fixed metric space, which will be used for labelling states.

A \emph{probabilistic automaton} is a tuple
$\mathcal{A} = (S,\suc,\ell)$, where $S$ is a non-empty finite set
of states,
$\suc\colon S\to \Pow{\mathcal{D}(S)}$ assigns finite sets of
probability distributions to states and $\ell\colon S\to L$ is a
labelling function. (In the following we again replace
$\mathcal{D}(S)$ by a finite subset $D$.)

The \emph{probabilistic bisimilarity pseudo-metrics} is the least
fixpoint of the function $\mathcal{M}\colon$
$[0,1]^{S\times S} \to [0,1]^{S\times S}$ where for
$d\colon S\times S\to[0,1]$, $s,t\in S$:
\[
  \mathcal{M}(d)(s,t) = \max \{ d_L(\ell(s),\ell(t)), \ \mathcal{H}(\mathcal{K}(d))(\suc(s),\suc(t))\}
\]
where $\mathcal{H}$ is the Hausdorff lifting (for
$\monM=[0,1]$) and $\mathcal{K}$ is the Kantorovich lifting defined earlier.

The fixpoint function $\mathcal{M}$ can be expressed as the composition of more basic non-expansive functions and thus, by Theorem~\ref{th:bigtable}, it is
non-expansive itself.

\begin{lemmarep}[decomposing $\mathcal{M}$]
  \label{le:M-decomposition}
  The fixpoint function $\mathcal{M}$ for probabilistic bisimilarity
  pseudo-metrics can be written as:
  \[ \mathcal{M} = \maxs_\rho \circ (((\suc\times\suc)^*\circ
    \mathcal{H}\circ \mathcal{K}) \uplus ((\ell\times\ell)^* \circ c_{d_L})) \]
  where
  $\rho\colon (S\times S)\uplus (S\times S) \to (S\times S)$ with
  $\rho((s,t),i) = (s,t)$.
\end{lemmarep}

\begin{proof}
  In fact, given $d\colon S\times S\to [0,1]$ and $s,t \in S$, we have
  \begin{align*}
    & \maxs_\rho((((\suc\times\suc)^*\circ \mathcal{H}\circ \mathcal{K})
      \uplus ((\ell\times\ell)^* \circ c_{d_L}))(d))(s,t) \\
     = & \max\{(\suc\times\suc)^*\circ \mathcal{H}\circ
       \mathcal{K})(d)(s,t), ((\ell\times\ell)^* \circ d_L)(s,t) \} \\
     = & \max\{ \mathcal{H}(\mathcal{K}(d)(\suc(s),\suc(t)), d_L(\ell(s),\ell(t)) \} \\
     = & \mathcal{M}(d)(s,t) \tag*{\qedhere}
  \end{align*}
\end{proof}

As discussed below, whenever $d_L$ is discrete, this specializes to the
probabilistic automata of~\cite{bblmtv:prob-bisim-distance-automata}
and whenever the probability distributions are Dirac distributions we
obtain metric transition systems~\cite{afs:linear-branching-metrics}.

The above decomposition also helps in determining the approximation of
$\mathcal{M}$.

\begin{lemmarep}[approximating $\mathcal{M}$]
  \label{le:M-approximation}
  Let $d\colon S\times S \to [0,1]$. The approximation for
  $\mathcal{M}$ in the dual sense is
  $\mathcal{M}_\#^d\colon \Pow{\Ytop{S\times S}{d}} \to
  \Pow{\Ytop{S\times S}{\mathcal{M}(d)}}$ with
  \[
    \begin{array}{ll}
      \mathcal{M}_\#^d(X) = \{ (s,t) \in
      \Ytop{S\times S}{\mathcal{M}(d)} \mid
      & d_L(\ell(s),\ell(t)) < \mathcal{H}(\mathcal{K}(d))(\eta(s),\eta(t)) \\
      & \land\ (\eta(s),\eta(t))
      \in \mathcal{H}^{\mathcal{K}(d)}_\# \circ \mathcal{K}^d_\#(X) \}
    \end{array}
  \]
\end{lemmarep}

\begin{proof}
  Let $d\colon S\times S \to [0,1]$ and
  $X \subseteq \Ytop{S\times S}{d}$.
  We abbreviate
  $g = (\suc\times\suc)^* \circ \mathcal{H} \circ \mathcal{K}
  \colon [0,1]^{S\times S} \to [0,1]^{S\times S}$
  and $j = (\ell\times\ell)^* \circ c_{d_L} \colon
  [0,1]^{S\times S} \to [0,1]^{S\times S}$, so that
  $\mathcal{M} = \max_\rho\circ (g \uplus j)$.
  Thus we obtain
  \[ \mathcal{M}_\#^d (X) = (\max\nolimits_\rho)_\#^{(g \uplus j)(d)}
  \circ (g \uplus j)_\#^d(X) \]
  Since $c_{d_L} \colon [0,1]^{S\times S} \to [0,1]^{L\times L}$
  is a constant function and
  $((\ell \times \ell)^*)_\#^{c_{d_L}(d)} = (\ell\times\ell)^{-1}$,
  we deduce that
  \[ j_\#^d(X) = ((\ell \times \ell)^*)_\#^{c_{d_L}(d)} \circ (c_{d_L})_\#^d(X) = (\ell\times\ell)^{-1}(\emptyset) = \emptyset \]
  On the other hand
  \[ g_\#^d = ((\eta \times \eta)^*)_\#^{\mathcal{H}(\mathcal{K}(d))} \circ
    \mathcal{H}_\#^{\mathcal{K}(d)} \circ \mathcal{K}_\#^d\]
  where
  \begin{align*}
    \mathcal{H}_\#^{\mathcal{K}(d)} \circ \mathcal{K}_\#^d
    &\colon \Pow{\Ytop{S\times S}{d}}\to
    \Pow{\Ytop{\Pow{S}\times \Pow{S}}{\mathcal{H}(\mathcal{K}(d))}} \\
    ((\eta \times \eta)^*)_\#^{\mathcal{H}(\mathcal{K}(d))} &\colon
    \Pow{\Ytop{\Pow{S}\times \Pow{S}}{\mathcal{H}(\mathcal{K}(d))}}\to
    \Pow{\Ytop{S\times S}{g(d)}}
  \end{align*}
  We recall that
  $((\eta \times \eta)^*)_\#^{\mathcal{H}(\mathcal{K}(d))}= (\eta\times \eta)^{-1}$,
  and hence
  \[ (s,t) \in g_\#^d(X) \Leftrightarrow (\eta(s),\eta(t))
    \in \mathcal{H}_\#^{\mathcal{K}(d)} \circ \mathcal{K}_\#^d(X)\]
  Lastly, we obtain
  \begin{align*}
    \mathcal{M}_\#^d (X)& = (\max\nolimits_\rho)_\#^{(g \uplus j)(d)}
    \circ (g \uplus j)^d_\#(X) \\
    & = (\max\nolimits_\rho)_\#^{(g \uplus j)(d)}
    (g^d_\#(X) \uplus j^d_\#(X)) \\
    & = \{ (s,t) \in \Ytop{S\times S}{\mathcal{M}(d)} \mid
    \arg\max_{y\in \rho^{-1}(s,t)}
    (g \uplus j)(d)(y) \subseteq g_\#^d(X) \times \{0\} \}
  \end{align*}
  Recalling that
  $\rho^{-1}(s,t) = \{((s,t),0), ((s,t),1)\}$,
  the inclusion
  \[ \arg\max_{y\in \rho^{-1}(s,t)}(g \uplus j)(d)(y)
    \subseteq g_\#^d(X) \times \{0\} \]
  can only hold if
  $g(d)(s,t) > j(d)(s,t)$ (and hence the maximum is
  achieved by $g(d)$ instead of $j(d)$) and additionally
  $((s,t),0) \in g_\#^d(X) \times \{0\}$. Thus
  \begin{align*}
    \mathcal{M}_\#^d (X) &= \{ (s,t)\in \Ytop{S\times S}{\mathcal{M}(d)}
    \mid j(d)(s,t) < g(d)(s,t)\ \land\ (s,t) \in g^d_\#(X) \} \\
    & = \{ (s,t)\in \Ytop{S\times S}{\mathcal{M}(d)} \mid
    d_L(\ell(s),\ell(t)) < \mathcal{H}(\mathcal{K}(d))(\eta(s),\eta(t)) \\
    & \qquad\qquad\qquad\qquad\qquad\quad \land\ (\eta(s),\eta(t)) \in
    \mathcal{H}_\#^{\mathcal{K}(d)} \circ \mathcal{K}_\#^d(X) \} \tag*{\qedhere}
  \end{align*}
\end{proof}

\paragraph{Comparison with~\cite{bblmtv:prob-bisim-distance-automata}}
The paper~\cite{bblmtv:prob-bisim-distance-automata} describes the
first method for computing behavioural distances over probabilistic automata.
Although the behavioural distance arises as a least
fixpoint, it is in fact better, even the only known method, to iterate
from above, in order to reach this least fixpoint. This is done by
guessing and improving couplings, similarly to what happens for
strategy iteration discussed later in Section~\ref{se:ssgs}. A major
complication, faced in~\cite{bblmtv:prob-bisim-distance-automata}, is
that the procedure can get stuck at a fixpoint which is not the least
and one has to determine that this is the case and decrease the
current candidate.
This is done by relying on an adaptation of the notion of self-closed
relation from~\cite{f:game-metrics-markov-decision}, and next we argue
that this is closely related to the theory developed in
Section~\ref{se:proof-rules}.
In fact this was our inspiration to
generalise this technique to a more general setting.

We next establish a formal correspondence with our results. First note
that the probabilistic automata considered
in~\cite{bblmtv:prob-bisim-distance-automata} are a special case of
those defined above, where the metric on the set of state labels is
required to be discrete. Hence states with different labels are
necessarily at distance~$1$.

Let 
$\mathcal{A} = (S,\suc,\ell)$ be a fixed probabilistic automaton and
let us assume that, as in~~\cite{bblmtv:prob-bisim-distance-automata},
the metric space of labels $(L, d_L)$ is discrete.

Assume that $d$ is a fixpoint of $\mathcal{M}$, i.e.,
$d = \mathcal{M}(d)$. In order to check whether
$d = \mu f$,~\cite{bblmtv:prob-bisim-distance-automata} adapts the
notion of a self-closed relation
from~\cite{f:game-metrics-markov-decision}.

\begin{defiC}[{\cite{bblmtv:prob-bisim-distance-automata}}]
  \label{def:self-closed-relation}
  A relation $M\subseteq S\times S$ is \emph{self-closed} with respect to $d =
  \mathcal{M}(d)$ if, whenever $s\,M\,t$, then
  \begin{itemize}
  \item $\ell(s) = \ell(t)$ and $d(s,t) > 0$,
  \item if $p\in \suc(s)$ and $d(s,t) = \min_{q'\in \suc(t)}
    \mathcal{K}(d)(p,q')$, then there exists $q\in\suc(t)$ and $c\in
    \Omega(p,q)$ such that $d(s,t) = \sum_{u,v\in S} d(u,v)\cdot
    c(u,v)$ and $\mathit{supp}(c) \subseteq M$,
  \item if $q\in\suc(t)$ and $d(s,t) = \min_{p'\in \suc(s)}
    \mathcal{K}(d)(p',q)$, then there exists $p\in\suc(s)$ and $c\in
    \Omega(p,q)$ such that $d(s,t) = \sum_{u,v\in S} d(u,v)\cdot
    c(u,v)$ and $\mathit{supp}(c) \subseteq M$.
  \end{itemize}
\end{defiC}

The largest self-closed relation, denoted by $\approx_d$, can be shown
to be empty if and only if
$d = \mu f$~\cite{bblmtv:prob-bisim-distance-automata}.
This has an immediate correspondence with our results since we can prove an intimate connection between self-closed relations and post-fixpoints of the approximation of $\mathcal{M}$.

\begin{propositionrep}
  Let $d\colon S\times S\to [0,1]$ where $d = \mathcal{M}(d)$. Then
  $\mathcal{M}_\#^d\colon \Pow{\Ytop{S\times S}{d}}\to \Pow{\Ytop{S\times S}{d}}$, where
  $\Ytop{S\times S}{d} = \{(s,t)\in S\times S \mid d(s,t) > 0\}$.

  Then $M$ is a self-closed relation with respect to $d$ if and only if
  $M\subseteq \Ytop{S\times S}{d}$ and $M$ is a post-fixpoint of $\mathcal{M}_\#^d$.
\end{propositionrep}

\begin{proof}
  First note that whenever $M$ is self-closed, it holds that
  $d(s,t) > 0$ for all $(s,t) \in M$ and hence
  $M \subseteq \Ytop{S\times S}{d}$.

  Observe that whenever
  $\ell(s) \neq \ell(t)$, we would have
  $d_L(\ell(s),\ell(t)) = 1 \ge \mathcal{H}(\mathcal{K}(d))(\eta(s),\eta(t))$.
  On the other hand, when $\ell(s) = \ell(t)$, instead, we have
  $d_L(\ell(s),\ell(t)) = 0 < \mathcal{H}(\mathcal{K}(d))(\eta(s),\eta(t))$,
  since $\mathcal{H}(\mathcal{K}(d))(\eta(s),\eta(t)) > 0$
  for all $(s,t) \in \Ytop{S\times S}{d}$.
  So, by Lemma~\ref{le:M-approximation}, we obtain that
  \begin{align*}
    \mathcal{M}_\#^d(M) & = \{ (s,t) \in \Ytop{S\times S}{d} \mid
      d_L(\ell(s),\ell(t)) < \mathcal{H}(\mathcal{K}(d))(\eta(s),\eta(t)) \\
    & \qquad\qquad\qquad\qquad\quad\ \land\ (\eta(s),\eta(t))
      \in \mathcal{H}^{\mathcal{K}(d)}_\# \circ \mathcal{K}^d_\#(M) \} \\
    & = \{ (s,t) \in \Ytop{S\times S}{d} \mid
    \ell(s) = \ell(t)\ \land\ (\eta(s),\eta(t)) \in
      \mathcal{H}^{\mathcal{K}(d)}_\# \circ \mathcal{K}^d_\#(M) \} \\
    & = \{ (s,t) \in S\times S \mid
    d(s,t) > 0\ \land\ \ell(s) = \ell(t)\ \land\ (\eta(s),\eta(t))
      \in \mathcal{H}^{\mathcal{K}(d)}_\# \circ \mathcal{K}^d_\#(M) \}
  \end{align*}

  Using the characterisation of the associated approximation of
  the Hausdorff lifting in
  Lemma~\ref{le:associated-function-hausdorff}, we obtain that this
  is equivalent to
  \begin{center}
    for all $p\in\suc(s)$, whenever $\min_{q'\in\suc(t)}
    \mathcal{K}(d)(p,q') =
    \mathcal{H}(\mathcal{K}(d))(\suc(s),\suc(t))$, then there exists
    $q\in \suc(t)$ such that $(p,q)\in\mathcal{K}_\#^d(M)$ and
    $\mathcal{K}(d)(p,q) =
    \mathcal{H}(\mathcal{K}(d))(\suc(s),\suc(t))$ (and vice versa),
  \end{center}
  assuming that $\ell(s)=\ell(t)$ (this is a
  requirement in the definition of $\mathcal{M}_\#^d(M)$), since then we have 
  $\mathcal{H}(\mathcal{K}(d))(\suc(s),\suc(t)) = d(s,t) > 0$ and
  hence $(\suc(s),\suc(t)) \in \Ytop{\Pow{D}\times
  \Pow{D}}{\mathcal{H}(\mathcal{K}(d))}$.

  Since also $d=\mathcal{M}(d)$, the condition above can be rewritten to
  \begin{center}
    for all $p\in\suc(s)$, whenever
    $\min_{q'\in\suc(t)} \mathcal{K}(d)(p,q') = d(s,t)$, then there
    exists $q\in \suc(t)$ such that $(p,q)\in\mathcal{K}_\#^d(M)$ and
    $\mathcal{K}(d)(p,q) = d(s,t)$ (and vice versa).
  \end{center}
  From Lemma~\ref{le:associated-function-kantorovich} we know that
  $(p,q)\in \mathcal{K}_\#^d(M)$ iff $\mathcal{K}(d)(p,q) > 0$ and
  there exists $c\in\Omega(p,q)$ such that
  $\mathit{supp}(c)\subseteq M$ and
  $\sum_{u,v\in S} c(u,v)\cdot d(u,v) = \mathcal{K}(d)(p,q)$. We
  instantiate the condition above accordingly and obtain
  \begin{center}
    for all $p\in\suc(s)$, whenever
    $d(s,t) = \min_{q'\in\suc(t)} \mathcal{K}(d)(p,q')$, then there
    exists $q\in\suc(t)$ such that there exists $c\in\Omega(p,q)$ with
    $\mathit{supp}(c)\subseteq M$,
    $\mathcal{K}(d)(p,q) = \sum_{u,v\in S} c(u,v)\cdot d(u,v)$ and
    $\mathcal{K}(d)(p,q) = d(s,t)$ (and vice versa).
  \end{center}
  The two last equalities can be simplified to
  $d(s,t) = \sum_{u,v\in S} c(u,v)\cdot d(u,v)$, since
  \[ \mathcal{K}(d)(p,q) \le \sum_{u,v\in S} c(u,v)\cdot d(u,v) =
    d(s,t) = \min_{q'\in\suc(t)} \mathcal{K}(d)(p,q') \le
    \mathcal{K}(d)(p,q) \]
  and hence $\mathcal{K}(d)(p,q) = d(s,t)$ can be inferred from the
  remaining conditions.

  We finally obtain the following equivalent characterisation:
  \begin{center}
    for all $p\in\suc(s)$, whenever
    $d(s,t) = \min_{q'\in\suc(t)} \mathcal{K}(d)(p,q')$, then there
    exists $q\in\suc(t)$ such that there exists $c\in\Omega(p,q)$ with
    $\mathit{supp}(c)\subseteq M$,
    $d(s,t) = \sum_{u,v\in S} c(u,v)\cdot d(u,v)$ (and vice versa).
  \end{center}
  Hence we obtain that
  $(\eta(s),\eta(t))\in \mathcal{H}^{\mathcal{K}(d)}_\# \circ \mathcal{K}^d_\#(M)$ 
  is equivalent to the the
  second and third item of Def.~\ref{def:self-closed-relation}
  (under the assumption that $\ell(s)=\ell(t)$), while the first item
  is covered by the other conditions ($d(s,t) > 0$ and
  $\ell(s)=\ell(t)$) in the characterisation of $\mathcal{M}_\#^d(M)$.
\end{proof}

We observe that other kinds of probabilistic automata, e.g., those
originally introduced by Rabin~\cite{Rab:PA}, where transitions rather
than states are labelled and some states are marked as final, i.e., the transition relation is of the kind $\eta : S\to
\{0,1\} \times \mathcal{D}(S)^L$ or their non-deterministic variant, can be easily cast in our framework.

\paragraph{Branching Distances for Metric Transition Systems}
\label{sec:mts}

We observe that also metric transition systems (MTS) and their
(symmetrical) branching distances, as studied in
\cite{afs:linear-branching-metrics,fl:quantitative-spectrum-journal}, live in our framework.
In fact, a \emph{metric transition system} over some metric space
$(L, d_L)$ is essentially a probabilistic automaton as defined above,
where the probabilistic component is dropped, i.e.,
$\mathcal{A} = (S, \eta, \ell)$ where $S$ is the set of states,
$\eta : S \to \Pow{S}$ is the transition function and $\ell : S \to L$
a labelling function.

Clearly, a metric transition system $\mathcal{A} = (S, \eta, \ell)$
can be formally seen as a special probabilistic automaton.
Given a state $s \in S$, let $\beta_s$ denote the
Dirac distribution, assigning probability $1$ to $s$ and $0$ to
all other states. Then we can ``transform'' the  transition relation
$\suc \colon S\to \Pow{S}$ into $\suc' \colon S\to \Pow{\mathcal{D}(S)}$,
defining $\suc'(s) = \{ \beta_t \mid t \in \suc(s)\}$. Using this
observation, Lemma~\ref{le:M-approximation} and the fact that for a distance $d : S \times S \to \interval{0}{1}$ and a pair of states $s, t \in S$, it holds $\mathcal{K}(d)(\beta_s, \beta_t) = d(s,t)$,
we obtain the approximation:
\[
  \begin{array}{ll}
    \mathcal{M}_\#^d(X) = \{ (s,t)\in \Ytop{S\times
    S}{\mathcal{M}(d)} \mid
    & d_L(\ell(s),\ell(t)) < \mathcal{H}(d)(\eta(s),\eta(t))\\
    & \land\ (\eta(s),\eta(t))
      \in \mathcal{H}^{d}_\#(X) \}
  \end{array}
\]

\begin{exa}
  We consider the MTS depicted below, where the metric space of labels
  is the real interval $[0,1]$ with the Euclidean distance $d_L(r,s)= |r-s|$.

  \begin{center}
    \scalebox{0.85}{
      \begin{tikzpicture}[inner sep=0pt]
        \node (x) at (0,0) [circle,draw] {\begin{tabular}{c}
            $x:0.1$
          \end{tabular} };
        \node (y) at (2,-2) [circle,draw] {\begin{tabular}{c}
            $y:0.6$
          \end{tabular} };
        \node (z) at (4,0) [circle,draw] {\begin{tabular}{c}
            $z:0.3$
          \end{tabular} };

        \draw[->,thick, loop left] (x) to node [below]  {} (x);
        \draw[->,thick, loop below] (y) to node [below]  {} (y);
        \draw[->,thick] (y) to node [below]  {} (x);
        \draw[->,thick] (y) to node [below]  {} (z);
        \draw[->,thick, bend left] (x) to node [below]  {} (z);
        \draw[->,thick, bend left] (z) to node [below]  {} (x);
      \end{tikzpicture}
    }
  \end{center}
  Here, $\eta(x) = \{x,z\}$, $\eta(y) = \{x,y,z\}$ and
  $\eta(z) = \{x\}$. Additionally we have $\ell(x) = 0.1$, $\ell(y) = 0.6$
  and $\ell(z) = 0.3$ resulting in $d_L(\ell(x),\ell(y)) = 0.5$,
  $d_L(\ell(x),\ell(z)) = 0.2$ and $d_L(\ell(y),\ell(z)) = 0.3$. The least fixpoint of
  $\mathcal{M}$ is a pseudo-metric $\mu \mathcal{M}$ given by
  $\mu \mathcal{M}(x,y) = \mu \mathcal{M}(y,z)= 0.5$ and
  $\mu \mathcal{M}(x,z)= 0.3$. (Since $\mu \mathcal{M}$ is a
  pseudo-metric, the remaining entries are fixed:
  $\mu \mathcal{M}(u,u) = 0$ and
  $\mu \mathcal{M}(u,v) = \mu \mathcal{M}(v,u)$ for all
  $u,v\in \{x,y,z\}$.)

  Now consider the pseudo-metric $d$ with
  $d(x,y) = d(x,z) = d(y,z) = 0.5$. This is also a fixpoint of
  $\mathcal{M}$. Note that
  $\mathcal{H}(d)(\eta(x),\eta(y)) = \mathcal{H}(d)(\eta(x),\eta(z)) =
  \mathcal{H}(d)(\eta(y),\eta(z)) = 0.5$. Let us use our technique in
  order to verify that $d$ is not the least fixpoint of $\mathcal{M}$,
  by showing that $\nu \mathcal{M}_\#^{d} \neq \emptyset$.

  We start fixpoint iteration with the approximation
  $\mathcal{M}_\#^{d}$ from the top element $\Ytop{S \times S}{d}$,
  which is given by the symmetric closure\footnote{We denote the
  symmetric closure of a relation $R$ by $\symmclose{R}$.} of
  $\{(x,y),(x,z),(y,z)\}$ (since reflexive pairs do not contain
  slack).

  We first observe that the pairs
  $(x,y), (y,x) \notin \mathcal{M}_\#^d
  (\symmclose{\{(x,y),(x,z),(y,z)\}})$ since
  $d_L(\ell(x),\ell(y)) = 0.5 \not< \mathcal{H}(d)(\eta(x),\eta(y)) =
  0.5$.  Next,
  $(y,z), (z,y)\notin \mathcal{M}_\#^d (\symmclose{\{(x,z),(y,z)\}})$
  since  it holds
  $(\eta(y),\eta(z))\not\in
  \mathcal{H}_\#^d(\symmclose{\{(x,z),(y,z)\}})$.  In order to see
  this, consider the approximation of the Hausdorff lifting in
  Lemma~\ref{le:associated-function-hausdorff} and note that for
  $y\in \eta(y)$ we have
  $\min_{u \in \eta(z)} d(y,u) = 0.5 =
  \mathcal{H}(d)(\eta(y),\eta(z))$, but
  $(y,x) \notin \symmclose{\{ (x,z),(y,z)\}}$ (where $x$ is the only
  element in $\eta(z)$).

  The pairs $(x,z), (z,x)$ on the other hand satisfy all conditions
  and hence
  \[ \nu \mathcal{M}_\#^{d} = \symmclose{\{(x,z)\}} =
    \mathcal{M}_\#^{d}(\symmclose{\{(x,z)\}}) \neq \emptyset \]

  Thus we conclude that $d$ is not the least fixpoint, but, according
  to Proposition~\ref{pr:increases}, we can decrease the value of $d$ in
  the positions $(x,z), (z,x)$ and obtain a pre-fixpoint from which we
  can continue fixpoint iteration.
\end{exa}

\subsection{Bisimilarity}
\label{sec:bisimilarity}

In order to define standard bisimilarity we use a variant
$\mathcal{G}$ of the Hausdorff lifting $\mathcal{H}$ defined before,
where $\max$ and $\min$ are swapped. More precisely,
$\mathcal{G} : \monM^{X\times X} \to \monM^{\Pow{X}\times \Pow{X}}$ is
defined, for $d \in \monM^{X\times X}$, by
\begin{center}
$\mathcal{G}(d)(X_1,X_2) = \max \{ \min_{(x_1,x_2) \in C} d(x_1,x_2)
\mid C \subseteq X \times X \ \land\ u(C) = (X_1,X_2)\}$.
\end{center}

\begin{toappendix}
  \begin{lemma}
    \label{le:exchanged-hausdorff}
    The approximation for the adapted Hausdorff lifting $\mathcal{G}$
    in the primal sense is as follows. Let
    $a\colon X\times X\to \{0,1\}$, then
    $\mathcal{G}^\#_a\colon\Pow{\Ybot{X\times X}{a}}\to
    \Pow{\Ybot{\Pow{X}\times \Pow{X}}{a}} $ with 
    \begin{eqnarray*}
      \mathcal{G}_a^\#(R) & = & \{(X_1,X_2)\in \Ybot{\Pow{X}\times
        \Pow{X}}{\mathcal{H}(a)} \mid \\
      && \qquad\qquad\quad \forall x_1\in X_1\exists x_2\in
      X_2\colon \big((x_1,x_2)\not\in \Ybot{X\times X}{a} \lor (x_1,x_2)\in R\big)\\
      && \qquad\qquad \mathop{\land} \forall x_2\in X_2\exists x_1\in
      X_1\colon \big((x_1,x_2)\not\in \Ybot{X\times X}{a} \lor
      (x_1,x_2)\in R\big) \}
    \end{eqnarray*}
  \end{lemma}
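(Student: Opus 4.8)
The plan is to mimic the proof of Lemma~\ref{lem:associated-function-hausdorff}: exhibit $\mathcal{G}$ as a composition of two basic functions from Definition~\ref{de:basic-functions}, assemble its approximation from the entries of Table~\ref{tab:basic-functions-approximations} via Proposition~\ref{prop:compose-approximation}, and then unfold. First I would observe that, dually to $\mathcal{H} = \mins_u\circ\maxs_\in$, we have $\mathcal{G} = \maxs_u\circ\mins_\in$, where $\mins_\in\colon\monM^{X\times X}\to\monM^{\Pow{X\times X}}$ uses the ``is-element-of'' relation and $\maxs_u\colon\monM^{\Pow{X\times X}}\to\monM^{\Pow{X}\times\Pow{X}}$ uses the (converse graph of the) function $u(C)=(\pi_1[C],\pi_2[C])$. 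This is an immediate unfolding: $\maxs_u(\mins_\in(d))(X_1,X_2)=\max_{u(C)=(X_1,X_2)}\min_{(x_1,x_2)\in C}d(x_1,x_2)$, which is the definition of $\mathcal{G}(d)(X_1,X_2)$. Non-expansiveness of $\mathcal{G}$ then follows from Theorem~\ref{th:bigtable}, and Proposition~\ref{prop:compose-approximation} gives $\mathcal{G}_a^\# = (\maxs_u)_{\mins_\in(a)}^\#\circ(\mins_\in)_a^\#$.

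Next I would specialise the two table entries to the Boolean MV-chain $\monM=\{0,1\}$. For $R\subseteq\Ybot{X\times X}{a}$, membership of $C$ in $\Ybot{\Pow{X\times X}}{\mins_\in(a)}$ means $\mins_\in(a)(C)\neq 1$, i.e.\ $C$ contains a $0$-valued pair; since the minimal $a$-value on such a $C$ is $0$, we get $\mathit{Min}_{a|_C}=\{p\in C\mid a(p)=0\}=C\cap\Ybot{X\times X}{a}$, whence $(\mins_\in)_a^\#(R)=\{C\mid \emptyset\neq C\cap\Ybot{X\times X}{a}\subseteq R\}$. Likewise, for $e=\mins_\in(a)$ and $W\subseteq\Ybot{\Pow{X\times X}}{e}$, the entry gives $(\maxs_u)_e^\#(W)=\{(X_1,X_2)\in\Ybot{\Pow{X}\times\Pow{X}}{\mathcal{G}(a)}\mid\mathit{Max}_{e|_{u^{-1}(X_1,X_2)}}\cap W\neq\emptyset\}$, and here $(X_1,X_2)\in\Ybot{}{\mathcal{G}(a)}$ forces $\mathcal{G}(a)(X_1,X_2)=0$, so every coupling $C$ of $X_1,X_2$ satisfies $e(C)=0$ and therefore $\mathit{Max}_{e|_{u^{-1}(X_1,X_2)}}$ is the whole set of couplings of $(X_1,X_2)$. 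Composing and noting that for $(X_1,X_2)\in\Ybot{}{\mathcal{G}(a)}$ every coupling automatically contains a $0$-valued pair (otherwise $\mathcal{G}(a)(X_1,X_2)=1$), the non-emptiness side condition becomes redundant, and I obtain: $(X_1,X_2)\in\mathcal{G}_a^\#(R)$ iff $(X_1,X_2)\in\Ybot{\Pow{X}\times\Pow{X}}{\mathcal{G}(a)}$ and there is a coupling $C$ of $(X_1,X_2)$ with $C\cap\Ybot{X\times X}{a}\subseteq R$.

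It then remains to recognise the last clause as the two-sided quantifier condition in the statement. Writing $S=\{(x_1,x_2)\mid a(x_1,x_2)=1\}\cup R$, the condition $C\cap\Ybot{X\times X}{a}\subseteq R$ on a coupling $C$ is equivalent to $C\subseteq S$ (a $1$-valued pair lies in $\{a=1\}\subseteq S$ automatically, a $0$-valued one must lie in $R$), so ``some coupling $C$ of $(X_1,X_2)$ satisfies $C\cap\Ybot{X\times X}{a}\subseteq R$'' amounts to ``$S\cap(X_1\times X_2)$ contains a coupling of $(X_1,X_2)$'', which is the standard bipartite fact that a relation contains a coupling iff it is total in both directions, i.e.\ exactly $\forall x_1\in X_1\,\exists x_2\in X_2\,((x_1,x_2)\notin\Ybot{X\times X}{a}\lor(x_1,x_2)\in R)$ and symmetrically. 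For the forward direction one reads a witness $x_2$ off any $(x_1,x_2)\in C$; for the converse one builds $C$ by picking, for each $x_1\in X_1$ and each $x_2\in X_2$, a witnessing partner and adding the corresponding pair. I expect the main delicate point to be precisely this Boolean bookkeeping around $\mathit{Min}_{a|_C}$ and $\mathit{Max}_{e|_{u^{-1}(X_1,X_2)}}$ — in particular justifying that membership in $\Ybot{}{\mathcal{G}(a)}$ permits discarding the non-emptiness constraint, and checking the degenerate cases where $X_1$ or $X_2$ is empty; I would also remark that the $\mathcal{H}(a)$ occurring in the statement should be read as $\mathcal{G}(a)$, the codomain of $\mathcal{G}_a^\#$.
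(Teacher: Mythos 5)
Your proof is correct, but it follows a genuinely different route from the paper's. The paper proves this lemma by reusing the characterisation of $\mathcal{H}_\#^d$ already established in Lemma~\ref{lem:associated-function-hausdorff}: it specialises that dual-case formula to the Boolean MV-chain $\{0,1\}$, simplifies the resulting conjunction (noting that $(X_1,X_2)\in\Ytop{}{\mathcal{H}(a)}$ forces $\mathcal{H}(a)(X_1,X_2)=1$, turns $\min$-equals-$1$ into a universal, drops redundant clauses), and then invokes an order-reversal observation --- ``this characterisation is independent of the order'' --- to transfer the result from the dual functional $\mathcal{H}$ to the primal functional $\mathcal{G}$, replacing $\Ytop{}{}$ by $\Ybot{}{}$. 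You instead build $\mathcal{G}_a^\#$ from scratch: you exhibit the decomposition $\mathcal{G}=\maxs_u\circ\mins_\in$ (the exact analogue of $\mathcal{H}=\mins_u\circ\maxs_\in$, and the one the paper itself uses for the bisimilarity function $\mathcal{B}$), read the component approximations off Table~\ref{tab:basic-functions-approximations}, compose via Proposition~\ref{prop:compose-approximation}, specialise to $\{0,1\}$, and translate the resulting ``$\exists$-coupling $C$ with $C\cap\Ybot{X\times X}{a}\subseteq R$'' characterisation into the two-sided quantifier condition by a standard bipartite-totality argument. Both are sound; the paper's is shorter because it piggy-backs on the already-proved Hausdorff approximation, while yours is more self-contained, avoids the somewhat tersely stated duality step, and is more explicit about the edge cases (why the non-emptiness side condition from $(\mins_\in)_a^\#$ may be dropped once $(X_1,X_2)\in\Ybot{}{\mathcal{G}(a)}$, and what happens when $X_1$ or $X_2$ is empty). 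Your final remark that the $\mathcal{H}(a)$ appearing in the statement should read $\mathcal{G}(a)$ is also correct --- it is a typo in the lemma as printed.
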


\begin{proof}
  We rely on the characterisation of $\mathcal{H}_\#^a$ (dual case) of
  Lemma~\ref{le:associated-function-hausdorff} and we examine the
  case where $\monM = \{0,1\}$. In this case, whenever we have
  $(X_1,X_2)\in \Ytop{\Pow{X}\times\Pow{X}}{\mathcal{H}(a)}$ it must
  necessarily hold that $\mathcal{H}(a)(X_1,X_2) = 1$. Hence, the
  first part of the conjunction simplifies to:
  \[ \forall x_1\in X_1 \big(\min_{x_2'\in X_2} a(x_1,x_2') =
    1\,\Rightarrow\, \exists x_2\in X_2\colon (x_1,x_2)\in R \land
    a(x_1,x_2) = 1\big), \] from which we can omit $a(x_1,x_2) = 1$
  from the conclusion, since this holds automatically. Furthermore
  $\min_{x'_2\in X_2} a(x_1,x'_2) = 1$ can be rewritten to
  $\forall x_2\in X_2\colon a(x_1,x_2) = 1$. This gives us:
  \begin{eqnarray*}
    && \forall x_1\in X_1 \big(\lnot \forall x_2\in X_2\colon
    a(x_1,x_2) = 1 \mathop{\lor} \exists x_2\in X_2\colon (x_1,x_2)\in
    R \big) \\
    & \equiv & \forall x_1\in X_1 \big(\exists x_2\in X_2\colon
    a(x_1,x_2) = 0 \mathop{\lor} \exists x_2\in X_2\colon (x_1,x_2)\in
    R\big) \\
    & \equiv & \forall x_1\in X_1 \exists x_2\in X_2 \big( (x_1,x_2)
    \not\in \Ytop{X\times X}{a} \mathop{\lor} (x_1,x_2)\in R\big).
  \end{eqnarray*}
  Since this characterisation is independent of the order, we can
  replace $\Ytop{X\times X}{a}$ by $\Ybot{X\times X}{a}$ and obtain a
  characterizing condition for $\mathcal{G}_a^\#$ (primal
  case).
\end{proof}
\end{toappendix}
Now we can define the fixpoint function for bisimilarity and its
corresponding approximation. For simplicity we consider unlabelled
transition systems, but it would be straightforward to handle labelled
transitions.

Let $X$ be a finite set of states and $\suc : X\to\Pow{X}$ a function
that assigns a set of successors $\suc(x)$ to a state $x\in X$. The
fixpoint function for bisimilarity
$\mathcal{B} : \{0,1\}^{X\times X} \to \{0,1\}^{X\times X}$ can be expressed by using the
Hausdorff lifting $\mathcal{G}$ with $\monM = \{0,1\}$.

\begin{lemmarep}
  Bisimilarity on $\suc$ is the greatest fixpoint of
  $\mathcal{B} = (\suc\times\suc)^* \circ \mathcal{G}$.
\end{lemmarep}

\begin{proof}
  Let for $a : X\times X\to \{ 0,1\}$, $x,y\in X$. Then we have
  \begin{eqnarray*}
  (\suc\times\suc)^* \circ \mathcal{G}(a)(x,y) 
  & = & \mathcal{G} (a)(\suc(x),\suc(y)) \\
  & = & \maxs_u(\mins_\in(a))(\suc(x),\suc(y)) \\
  & = & \max_{u(C) = (\suc(x),\suc(y))} (\mins_\in^{X \times X}(a))(C) \\
  & = & \max_{u(C) = (\suc(x),\suc(y))} \min_{(x',y') \in C} a(x',y')
  \end{eqnarray*}
  Now we prove that this, indeed, corresponds with the standard bisimulation function, i.e.\ $\max_{u(C) = (\suc(x),\suc(y))} \min_{(x',y') \in C} a(x',y') = 1$ if and only if for all $x' \in \suc(x)$ there exists $y' \in \suc(y)$ such that $a(x',y') = 1$ and vice versa.
  For the first implication, assume that $\max_{u(C) = (\suc(x),\suc(y))} \min_{(x',y') \in C} a(x',y') = 1$. This means that there exists $C \subseteq X \times X$ such that $u(C) = (\pi_1(C), \pi_2(C)) = (\suc(x),\suc(y))$ and $\min_{(x',y') \in C} a(x',y') = 1$. Then we have two cases. Either $C = \emptyset$, which means that $\suc(x) = \suc(y) = \emptyset$, that is, $x$ and $y$ have no successors, and so the bisimulation property vacuously holds. Otherwise, $C \neq \emptyset$, and we must have $a(x',y') = 1$ for all $(x',y') \in C$. Then, since $(\pi_1(C), \pi_2(C)) = (\suc(x),\suc(y))$, for all $x' \in \suc(x)$ there must exists $y' \in \suc(y)$ such that $(x',y') \in C$, and thus $a(x',y') = 1$. Vice versa, for all $y' \in \suc(y)$ there must exists $x' \in \suc(x)$ such that $(x',y') \in C$, and thus $a(x',y') = 1$. So the bisimulation property holds.

  For the other implication, assume that for all $x' \in \suc(x)$ there exists $y' \in \suc(y)$ such that $a(x',y') = 1$ and call $c_1(x')$ such a $y'$. Vice versa, assume also that for all $y' \in \suc(y)$ there exists $x' \in \suc(x)$ such that $a(x',y') = 1$ and call $c_2(y')$ such a $x'$. This means that for all $x' \in \suc(x)$ and $y' \in \suc(y)$, we have $a(x',c_1(x')) = a(c_2(y'),y') = 1$. Now let $C' = \{(x',y') \in \suc(x) \times \suc(y) \mid c_1(x') = y' \lor x' = c_2(y')\}$. Since we assumed that for all $x' \in \suc(x)$ there exists $y' \in \suc(y)$ such that $c_1(x') = y'$, we must have that $\pi_1(C') = \suc(x)$. The same holds for all $y' \in \suc(y)$, thus $\pi_2(C') = \suc(y)$. Therefore, we know that $u(C') = (\suc(x),\suc(y))$, and we can conclude by showing that $a(x',y') = 1$ for all $(x',y') \in C'$, in which case also $\max_{u(C) = (\suc(x),\suc(y))} \min_{(x',y') \in C} a(x',y') = 1$. By definition of $C'$ either $c_1(x') = y'$ or $x' = c_2(y')$, or both, must hold. Assume the first one holds, the other case is similar. Then, we can immediately conclude since by hypothesis we know that $a(x',c_1(x')) = 1$.

  Since we proved that the function $\mathcal{B}$ is the same of the standard bisimulation function, then its greatest fixpoint $\nu \mathcal{B}$ is the bisimilarity on $\suc$.
\end{proof}

Since we are interested in the greatest fixpoint, we are working in
the primal sense. Bisimulation relations are represented by their
characteristic functions $a\colon X\times X\to \{0,1\}$, in fact the
corresponding relation can be obtained by taking the complement of
$\Ybot{X\times X}{a} = \{(x_1,x_2)\in X_1\times X_2 \mid a(x_1,x_2) =
0\}$.

\begin{lemmarep}
  \label{le:associated-function-bisimilarity}
  Let $a\colon X\times X\to \{0,1\}$. The approximation for the
  bisimilarity function $\mathcal{B}$ in the primal sense is
  $\mathcal{B}_a^\#\colon \Pow{\Ybot{X\times X}{a}} \to
  \Pow{\Ybot{X\times X}{\mathcal{B}(a)}}$ with
  \begin{eqnarray*}
    \mathcal{B}_a^\#(R) & = & \{(x_1,x_2)\in \Ybot{X\times X}{\mathcal{B}(a)} \mid \\
    && \quad \forall y_1\in \suc (x_1) \exists y_2\in
    \suc(x_2)\big( (y_1,y_2)\not\in \Ybot{X\times X}{a} \lor (y_1,y_2)\in R)
    \big)  \\
    &&  \  \land \forall y_2\in \suc (x_2) \exists y_1\in
    \suc(x_1)\big( (y_1,y_2)\not\in \Ybot{X\times X}{a} \lor (y_1,y_2)\in R)
    \big \}
  \end{eqnarray*}
\end{lemmarep}

\begin{proof}
  From Lemma~\ref{le:associated-function-hausdorff} we know that
  \begin{eqnarray*}
    \mathcal{G}_a^\#\colon \Ybot{X\times X}{a} & \to &
    \Ybot{\Pow{X}\times\Pow{X}}{\mathcal{G}(a)} \\
    \mathcal{G}_a^\#(R) & = & \{(X_1,X_2)\in \Ybot{\Pow{X}\times
    \Pow{X}}{\mathcal{G}(a)} \mid \\
    && \quad \forall x_1\in X_1\exists x_2\in
    X_2\colon \big((x_1,x_2)\not\in \Ybot{X\times X}{a} \lor (x_1,x_2)\in R\big)\\
    && \mathop{\land} \forall
    x_2\in X_2\exists x_1\in X_1\colon \big((x_1,x_2)\not\in \Ybot{X\times X}{a} \lor (x_1,x_2)\in R\big) \}.
  \end{eqnarray*}
  Furthermore 
  \begin{eqnarray*}
     ((\suc\times\suc)^*)_{\mathcal{G}(a)}^\# \colon
     \Ybot{\Pow{X}\times\Pow{X}}{\mathcal{G}(a)} & \to & \Ybot{X\times X}{\mathcal{B}(a)} \\
     ((\suc\times\suc)^*)_{\mathcal{G}(a)}^\#(Q) & = &
     (\suc\times\suc)^{-1}(Q) 
  \end{eqnarray*}
  Composing these functions we obtain:
  \begin{align*}
    \mathcal{B}_a^\#\colon \Ybot{X\times X}{a} & \to
    \Ybot{X\times X}{\mathcal{B}(a)} \\
    \mathcal{B}_a^\#(R) & = (\suc\times\suc)^{-1}(\{(Y_1,Y_2)\in
    \Ybot{\Pow{X}\times \Pow{X}}{\mathcal{G}(a)} \mid \\
    & \qquad \forall y_1\in Y_1\exists y_2\in
    Y_2\colon \big((y_1,y_2)\not\in \Ybot{X\times X}{a} \lor (y_1,y_2)\in R\big)\\
    & \quad\, \mathop{\land} \forall y_2\in Y_2\exists y_1\in Y_1\colon
    \big((y_1,y_2)\not\in \Ybot{X\times X}{a} \lor (y_1,y_2)\in R\big) \}) \\
    & = \{(x_1,x_2)\in
    \Ybot{X\times X}{\mathcal{B}(a)} \mid \\
    & \quad \forall y_1\in \suc(x_1)\exists y_2\in
    \suc(x_2)\colon \big((y_1,y_2)\not\in \Ybot{X\times X}{a} \lor
    (y_1,y_2)\in R\big)\\
    & \mathop{\land} \forall y_2\in \suc(x_2)\exists y_1\in
    \suc(x_1)\colon \big((y_1,y_2)\not\in \Ybot{X\times X}{a} \lor
    (y_1,y_2)\in R\big) \}. \tag*{\qedhere}
  \end{align*}
\end{proof}

We conclude this section by discussing how this view on bisimilarity
can be useful: first, it again opens up the possibility to compute
bisimilarity -- a greatest fixpoint -- by iterating from below,
through smaller fixpoints. This could potentially be useful if it is
easy to compute the least fixpoint of $\mathcal{B}$ inductively and
continue from there.

Furthermore, we obtain a technique for witnessing non-bisimilarity of
states. While this can also be done by exhibiting a distinguishing
modal formula~\cite{hm:hm-logic,c:automatically-explaining-bisim} or
by a winning strategy for the spoiler in the bisimulation game~\cite{s:bisim-mc-other-games}, to our
knowledge there is no known method that does this directly, based on
the definition of bisimilarity.

With our technique we can witness non-bisimilarity of two
states $x_1,x_2\in X$ by presenting a pre-fixpoint $a$ (i.e.,
$\mathcal{B}(a) \le a$) such that $a(x_1,x_2) = 0$ (equivalent to
$(x_1,x_2)\in \Ybot{X\times X}{a}$) and $\nu \mathcal{B}_a^\# = \emptyset$,
since this implies $\nu \mathcal{B}(x_1,x_2) \le a(x_1,x_2) = 0$ by
our proof rule.

There are two issues to discuss: first, how can we characterise a
pre-fixpoint of $\mathcal{B}$ (which is quite unusual, since
bisimulations are post-fixpoints)? In fact, the condition
$\mathcal{B}(a) \le a$ can be rewritten to: for all
$(x_1,x_2)\in \Ybot{X\times X}{a}$ 
there exists $y_1\in \suc(x_1)$ such that for all $y_2\in \suc(x_2)$
we have $(y_1,y_2)\in \Ybot{X\times X}{a}$ (\emph{or} vice versa).
Second, at first sight it does not seem as if we gained anything since
we still have to do a fixpoint computation on relations. However, the
carrier set is $\Ybot{X\times X}{a}$, i.e., a set of non-bisimilarity
witnesses and this set can be small even though $X$ might
be large, since $a$ might have value $0$ only on a small subset of
$X\times X$.

\begin{exa}
  \label{ex:bisim-1}
  We consider the transition system depicted below.
  \begin{center}
    \begin{tikzpicture}[->]
      \node (x) [circle,draw]{$x$};
      \node [right=of x] (y) [circle,draw] {$y$};
      \node [right=of y] (u) [circle,draw] {$u$};
      \draw [->,thick] (x) to (y);
      \draw [->,thick,loop] (x) to (x);
      \draw [->,thick,loop] (u) to (u);
    \end{tikzpicture}
  \end{center}
  Our aim is to construct a witness
  showing that $x,u$ are not bisimilar. This witness is a function
  $a\colon X\times X\to \{0,1\}$ with $a(x,u) = 0 = a(y,u)$ and for
  all other pairs the value is $1$.
  Hence $\Ybot{X\times X}{a=\mathcal{B}(a)} = \Ybot{X\times X}{a} =
  \{(x,u),(y,u)\}$ and it is easy to check that $a$ is a pre-fixpoint
  of $\mathcal{B}$ and that $\nu \mathcal{B}_a^* = \emptyset$: we
  iterate over $\{(x,u),(y,u)\}$ and first remove $(y,u)$ (since $y$
  has no successors) and then $(x,u)$. This implies that
  $\nu \mathcal{B} \le a$ and hence $\nu \mathcal{B}(x,u) = 0$, which
  means that $x,u$ are not bisimilar.
\end{exa}

\begin{exa}
  \label{ex:bisim-2}
  We modify Example~\ref{ex:bisim-1} and consider a function $a$ where
  $a(x,u) = 0$ and all other values are $1$. Again $a$ is a
  pre-fixpoint of $\mathcal{B}$ and $\nu \mathcal{B} \le a$ (since
  only reflexive pairs are in the bisimilarity). However
  $\nu \mathcal{B}_a^* \neq \emptyset$, since $\{(x,u)\}$ is a
  post-fixpoint. This is a counterexample to completeness discussed
  after Theorem~\ref{th:soundness}.

  Intuitively speaking, the states $y,u$ over-approximate and claim that
  they are bisimilar, although they are not. (This is permissible for
  a pre-fixpoint.) This tricks $x,u$ into thinking that there is some
  wiggle room and that one can increase the value of $(x,u)$. This is
  true, but only because of the limited, local view, since the
  ``true'' value of $(y,u)$ is $0$.
\end{exa}

\section{Simple stochastic games}
\label{se:ssgs}

In this section we show how our techniques can be applied to simple
stochastic games~\cite{condon92,c:algorithms-ssg}. In particular, we
present two novel algorithms based on strategy iteration and discuss
some runtime results.

\subsection{Introduction to simple stochastic games.}

A simple stochastic game is a state-based two-player game where the
two players, Min and Max, each own a subset of states they control,
for which they can choose the successor. The system also contains sink
states with an assigned payoff and averaging states which randomly
choose their successor based on a given probability distribution. The
goal of Min is to minimise and the goal of Max to maximise the payoff.

Simple stochastic games are an important type of games that subsume
parity games and the computation of behavioural distances for
probabilistic automata (cf. Section~\ref{sec:prob-automata}, \cite{bblmtv:prob-bisim-distance-automata}). The associated decision
problem (if both players use their best strategies, is the
  expected payoff of Max greater than $\frac{1}{2}$?) is known to lie
in $\mathsf{NP}\cap \mathsf{coNP}$, but it is an open question
whether it is contained in $\mathsf{P}$. There are known randomised
subexponential algorithms~\cite{bv:randomized-algorithms-games}.

It has been shown that it is sufficient to consider positional
strategies, i.e., strategies where the choice of the player is only
dependent on the current state. The expected payoffs for each state
form a so-called value vector and can be obtained as the least
solution of a fixpoint equation (see below).

\smallskip
 
A \emph{simple stochastic game} is given by a finite set $V$ of nodes,
partitioned into $\mathit{MIN}$, $\mathit{MAX}$, $\mathit{AV}$
(average) and $\mathit{SINK}$, and the following data:
$\suc_{\min} : \mathit{MIN}\to \Pow{V}$,
$\suc_{\max} : \mathit{MAX}\to \Pow{V}$ (successor functions for Min
and Max nodes), $\suc_{\mathrm{av}} : \mathit{AV}\to D$ (probability
distributions, where $D\subseteq \mathcal{D}(V)$ finite) and
$w : \mathit{SINK}\to [0,1]$ (weights of sink nodes).

The fixpoint function $\mathcal{V}\colon [0,1]^V\to [0,1]^V$ is
defined below for $a\colon V\to [0,1]$ and $v\in V$:
\begin{align*}
  \mathcal{V}(a)(v) = \begin{cases}
    \min_{v' \in \suc_{\min}(v)} a(v') &v\in \mathit{MIN}\\
    \max_{v' \in \suc_{\max}(v)} a(v') &v\in \mathit{MAX}\\
    \sum_{v' \in V} \suc_\mathrm{av}(v)(v')\cdot a(v') &v\in \mathit{AV} \\
    \ell(v) &v\in \mathit{SINK}
  \end{cases}
\end{align*}

The \emph{least} fixpoint of $\mathcal{V}$ specifies the average
payoff for all nodes when Min and Max play optimally. In an infinite
game the payoff is $0$. In order to avoid infinite games and
guarantee uniqueness of the fixpoint, many authors~\cite{hk:nonterminating-stochastic-games,c:algorithms-ssg,tvk:strategy-improvement-ssg}
restrict to stopping games, which are guaranteed to terminate
for every pair of Min/Max-strategies. Here we deal with general
games where more than one fixpoint may exist. Such a scenario has
been studied in~\cite{kkkw:value-iteration-ssg}, which considers
value iteration to under- and over-approximate the value vector. The
over-approximation faces challenges with cyclic dependencies, similar
to the vicious cycles described earlier. Here we focus on
strategy iteration, which is usually less efficient than value
iteration, but yields a precise result instead of approximating it.

\begin{exa}
  \label{ex:ssg}
  We consider the game depicted below. Here $\min$ is a Min node with
  $\suc_{\min}(\min) = \{\textbf{1},\mathrm{av}\}$, $\max$ is a
  Max node with
  $\suc_{\max}(\max) = \{\bm{\varepsilon},\mathrm{av}\}$,
  $\textbf{1}$ is a sink node with payoff 1, $\bm{\varepsilon}$ is a
  sink node with some small payoff $\varepsilon\in (0,1)$ and
  $\mathrm{av}$ is an average node which transitions to both $\min$
  and $\max$ with probability~$\frac{1}{2}$. This game is not stopping.
  
  Min should choose $\mathrm{av}$ as successor since a payoff of $1$
  is bad for Min. Given this choice of Min, Max should not declare
  $\mathrm{av}$ as successor since this would create an infinite play
  and hence the payoff is $0$. Therefore Max has to choose
  $\bm{\varepsilon}$ and be content with a payoff of $\varepsilon$,
  which is achieved from all nodes different from $\bm{1}$.

  \begin{center}
    \scalebox{0.85}{
      \begin{tikzpicture}
        \node (S1) at (1,2) [circle,draw] {\begin{tabular}{c}
            $\textbf{1}$
          \end{tabular} };
        \node (MIN) at (3,2) [circle,draw] {\begin{tabular}{c}
            $\min$
          \end{tabular} };
        \node (AV) at (5,2) [circle,draw] {\begin{tabular}{c}
            $\mathrm{av}$
          \end{tabular} };

        \node (S0) at (9,2) [circle,draw] {\begin{tabular}{c}
            $\bm{\varepsilon}$
          \end{tabular} };
        \node (MAX) at (7,2) [circle,draw] {\begin{tabular}{c}
            $\max$
          \end{tabular} };

        \draw[->,thick] (MIN) to node [below]  {} (S1);
        \draw[->,thick] (MAX) to node [below]  {} (S0);

        \draw [thick]
        (MIN.337) edge[auto=right,->] node {} (AV.208)  ; 

        \draw [thick]
        (MAX.203) edge[auto=right,->] node {} (AV.332)  ; 

        \draw [thick]
        (AV.30) edge[auto=right,->] node [above] {$\frac{1}{2}$} (MAX.157)  ; 

        \draw [thick]
        (AV.150) edge[auto=right,->] node [above] {$\frac{1}{2}$} (MIN.25)  ; 

      \end{tikzpicture}
    }
  \end{center}
\end{exa}

In order to be able to determine the approximation of $\mathcal{V}$
and to apply our techniques, we consider the following equivalent
definition.

\begin{lemmarep}
  $\mathcal{V} = (\suc_{\min}^*\circ\mins_\in) \uplus
  (\suc_{\max}^*\circ\maxs_\in) \uplus
  (\suc_\mathrm{av}^*\circ\mathrm{av}_D) \uplus c_w$, where
  $\mathrel{\in}\ \subseteq V\times\Pow{V}$ is the
  ``is-element-of''-relation on $V$.
\end{lemmarep}

\begin{proof}
Let $a\colon V \to [0,1]$. For $v\in \mathit{MAX}$ we have
\[ \mathcal{V}(a)(v) = (\suc^*_{\max} \circ \maxs_\in)(a)(v) = \maxs_\in(a)(\suc_{\max}(v)) = \max_{v'\in \suc_{\max} (v)}a(v'). \]
For $v\in \mathit{MIN}$ we have
\[ \mathcal{V}(a)(v) = (\suc^*_{\min} \circ \mins_\in)(a)(v) = \mins_\in(a)(\suc_{\min}(v)) = \min_{v'\in \suc_{\min} (v)}a(v'). \]
For $v\in \mathit{AV}$ we have
\[  \mathcal{V}(a)(v) = (\suc^*_{\mathrm{av}}\circ \mathrm{av}_D)(a)(v) = \mathrm{av}_D(a)(\suc_\mathrm{av}(v)) = \sum_{v'\in V} \suc_\mathrm{av} (v)(v') \cdot a(v').\]
For $v\in \mathit{SINK}$ we have
 $\mathcal{V}(a)(v) = c_w(a)(v) =  w(v)$.
\end{proof}

As a composition of non-expansive functions, $\mathcal{V}$ is
non-expansive as well. Since we are interested in the least fixpoint
we work in the dual sense and obtain the following
approximation, which intuitively says: we can decrease a value at node
$v$ by a constant only if, in the case of a Min node, we decrease the
value of one successor where the minimum is reached, in the case of a
Max node, we decrease the values of all successors where the maximum
is reached, and in the case of an average node, we decrease the values
of all successors. %

\begin{lemmarep}
  \label{le:approximation-ssg}
  Let $a\colon V\to [0,1]$. The approximation for the value iteration
  function $\mathcal{V}$ in the dual sense is
  $\mathcal{V}_\#^a\colon \Pow{\Ytop{V}{a}} \to \Pow{\Ytop{V}{\mathcal{V}(a)}}$
  with
  \begin{eqnarray*}
    \mathcal{V}_\#^a(V') & = & \{v\in \Ytop{V}{\mathcal{V}(a)} \mid \big( v
    \in \mathit{MIN} \land \arg\min_{v'\in\suc_{\min}(v)} a(v')
    \cap V'\neq \emptyset \big) \mathop{\lor} \\
    && \quad \big( v \in \mathit{MAX} \land
    \arg\max_{v'\in\suc_{\max}(v)} a(v') \subseteq V'\big) \lor
    \big( v \in \mathit{AV} \land \mathit{supp}(\suc_{\mathrm{av}}(v))
    \subseteq V' \big) \}
  \end{eqnarray*}
\end{lemmarep}

\begin{proof}
  Let $a\colon V\to [0,1]$ and $V'\subseteq \Ytop{V}{a}$. By Proposition
  \ref{pr:approximation-union-fun} we have:
  \begin{align*}
    \mathcal{V}_\#^a(V') = &\big( \mathit{MIN} \cap (\suc^*_{\min}
    \circ \mins_{\in})^a_\# (V') \big) \cup
    \big( \mathit{MAX} \cap (\suc^*_{\max} \circ \maxs_{\in})^a_\#(V') \big) \cup \\
    &\big( \mathit{AV} \cap (\suc^*_{\mathrm{av}} \circ
    \mathrm{av}_D)^a_\# (V') \big) \cup \big( \mathit{SINK} \cap
    (c_w)_{\#}^a(V') \big)
  \end{align*}
  It holds that $({\suc^*_{\min}})_\#^{\mins_{\in}(v)}= \suc^{-1}_{\min}$,
  $({\suc^*_{\max}})_\#^{\maxs_{\in}(v)}= \suc^{-1}_{\max}$ and
  $({\suc^*_{\mathrm{av}}})_\#^{\mathrm{av}_D(v)}= \suc^{-1}_{\mathrm{av}}$.
Using previous results (Proposition~\ref{pr:associated-approximations}) we deduce
  \begin{align*}
    &v\in (\suc^*_{\min} \circ \mins_{\in})^a_\# (V') \Leftrightarrow \suc_{\min}(v) \in (\mins_{\in})^a_\# (V') \Leftrightarrow \arg\min_{v'\in\suc_{\min}(v)} a(v') \cap V'\neq \emptyset \\
    &v\in (\suc^*_{\max} \circ \maxs_{\in})^a_\# (V') \Leftrightarrow \suc_{\max}(v) \in (\maxs_{\in})^a_\# (V') \Leftrightarrow \arg\max_{v'\in\suc_{\max}(v)} a(v') \subseteq V'\\
    &v\in (\suc^*_{\mathrm{av}} \circ {\mathrm{av}_D})_\#^a (V')
    \Leftrightarrow \suc_{\mathrm{av}}(v) \in (\mathrm{av}_D)^a_\#
    (V') \Leftrightarrow \mathit{supp}(\suc_{\mathrm{av}}(v))
    \subseteq V'
  \end{align*}
  Lastly $(c_w)^a_\#(V') = \emptyset$ for any $V'\subseteq V$ since
  $c_w$ is a constant function which
  concludes the proof.
\end{proof}

\subsection{Strategy iteration from above and below.}

We describe two algorithms based on the idea of strategy iteration,
first introduced by Hoffman and Karp
in~\cite{hk:nonterminating-stochastic-games}, that are novel, as far
as we know. The first iterates to the least fixpoint from above and
uses the techniques described in Section~\ref{se:proof-rules} in
  order not to get stuck at a larger fixpoint. The second iterates
from below: the role of our results is not directly visible in the
code of the algorithm, but its non-trivial correctness proof is based
on the proof rule introduced earlier.

We first recap the underlying notions.  A Min-strategy is a mapping
$\tau\colon \mathit{MIN}\to V$ such that
$\tau(v)\in \suc_\mathrm{min}(v)$ for every $v\in
\mathit{MIN}$. Following such a strategy, Min decides to always leave
a node $v$ via $\tau(v)$. Analogously, a Max-strategy is a function $\sigma\colon \mathit{MAX}\to V$.
Fixing a strategy for either player induces a modified value function. If $\tau$ is a Min-strategy, we obtain 
$\mathcal{V}_\tau$ which is defined exactly as $\mathcal{V}$ but for
$v\in \mathit{MIN}$ where we set $\mathcal{V}_\tau(a)(v) =
a(\tau(v))$. Analogously, for $\sigma$ a Max-strategy, $\mathcal{V}_\sigma$ is obtained by setting
$\mathcal{V}_\sigma(a)(v) = a(\sigma(v))$ when
$v\in\mathit{MAX}$. If both players fix their strategies, the game
reduces to a Markov chain.
It is easy to see that fixing a strategy for one player produces an under- or over-approximation, depending on the player, of the function $\mathcal{V}$.

\begin{lemmarep}
  \label{le:Vmin>V>Vmax}
  For every Max-strategy $\sigma$ and every Min-strategy $\tau$ it holds that
  $\mathcal{V}_{\sigma} \leq \mathcal{V} \leq \mathcal{V}_{\tau}$.
\end{lemmarep}

\begin{proof}
  Given any $a\colon V\to [0,1]$ and $v\in V$, we have
  \begin{align*}
    \mathcal{V}_{\sigma}(a)(v) &= \begin{cases}
      \min_{v' \in \suc_{\min}(v)} a(v') &v\in \mathit{MIN} \\
      a(\sigma(v')) &v\in \mathit{MAX} \\
      \sum_{v' \in V} a(v')\cdot \suc_\mathrm{av}(v)(v') &v\in \mathit{AV} \\
      c_w(v) &v\in \mathit{SINK}
    \end{cases} \\
    &\leq \begin{cases}
      \min_{v' \in \suc_{\min}(v)} a(v') &v\in \mathit{MIN} \\
      \max_{v' \in \suc_{\max}(v)} a(v') &v\in \mathit{MAX} \\
      \sum_{v' \in V} a(v')\cdot \suc_\mathrm{av}(v)(v') &v\in \mathit{AV} \\
      c_w(v) &v\in \mathit{SINK}
    \end{cases} \\
    & = \mathcal{V}(a)(v)
  \end{align*}
  The same proof idea can be applied to show
  $\mathcal{V}\leq \mathcal{V}_{\tau}$.
\end{proof}

In order to describe our algorithms we also need the notion of a
\emph{switch}. Assume that $\tau$ is a Min-strategy and let $a$ be a
(pre-)fixpoint of $\mathcal{V}_\tau$. Min can now potentially improve her
strategy for nodes $v\in\mathit{MIN}$ where
$\min_{v'\in \suc_{\min}(v)} a(v') < a(\tau(v))$, called
\emph{switch nodes}. This results in a Min-strategy
$\tau' = \mathit{sw}_{\min}(\tau,a)$, where\footnote{If the
  minimum is achieved in several nodes, Min simply chooses one of
  them. However, she will only switch if this strictly improves the
  value.}
$\tau'(v) = \text{arg}\min_{v'\in \suc_{\min}(v)} a^{(i)}(v')$ for a
switch node $v$ and $\tau'$, $\tau$ agree otherwise. Also,
$\mathit{sw}_{\max}(\sigma,a)$ is defined analogously for Max
strategies.

\begin{figure}[t]
  \captionsetup[subfigure]{justification=centering}
  \begin{subfigure}[t]{0.55\textwidth}
    \centering
    \begin{tcolorbox}[width=0.95\textwidth, colframe=white]
      \small
      \textbf{Determine $\mu \mathcal{V}$ (from above)}\\[-2mm]
      \hrule \ \\[-5mm]
      \begin{enumerate}[1.]
      \item Guess a Min-strategy $\tau^{(0)}$, $i:=0$
      \item \label{it:sia-2} $a^{(i)}:=\mu \mathcal{V}_{\tau^{(i)}}$
      \item \label{it:sia-3}
        $\tau^{(i+1)}:=\mathit{sw}_{\min}(\tau^{(i)},a^{(i)})$
      \item If $\tau^{(i+1)} \neq \tau^{(i)}$ then $i:=i+1$ and goto~\ref{it:sia-2}
      \item Compute $V' = \nu \mathcal{V}_\#^a$,
        where $a = a^{(i)}$. 
      \item If $V' = \emptyset$ then stop and return $a^{(i)}$.\\
        Otherwise 
        $a^{(i+1)} := a - (\inc{\mathcal{V}}{a}(V'))_{V'}$,
        $\tau^{(i+2)}:=\mathit{sw}_{\min}(\tau^{(i)},a^{(i+1)})$,
        $i:=i+2$, goto~\ref{it:sia-2}
      \end{enumerate}
    \end{tcolorbox}
    \caption{Strategy iteration from above}
    \label{fig:sia}
  \end{subfigure}%
  \begin{subfigure}[t]{0.45\textwidth}
    \centering
    \begin{tcolorbox}[width=0.95\textwidth, colframe=white]
      \small
      \textbf{Determine $\mu \mathcal{V}$ (from below)}\\[-2mm]
      \hrule \ \\[-5mm]
      \begin{enumerate}[1.]
      \item Guess a Max-strategy $\sigma^{(0)}$, $i:=0$
      \item \label{it:sib-2} $a^{(i)}:=\mu \mathcal{V}_{\sigma^{(i)}}$
      \item $\sigma^{(i+1)}:=\mathit{sw}_{\max}(\sigma^{(i)},a^{(i)})$
      \item If $\sigma^{(i+1)} \neq \sigma^{(i)}$ then\\
        \mbox{} \quad $i:=i+1$ and
        goto~\ref{it:sib-2}\\
        Otherwise stop and return $a^{(i)}$.
      \end{enumerate}
    \end{tcolorbox}
    \caption{Strategy iteration from below}
    \label{fig:sib}
  \end{subfigure}
  \caption{Strategy iteration from above and below}
\end{figure}

Now strategy iteration from above works as described in
Fig.~\ref{fig:sia}.
The computation of
$\mu\mathcal{V}_{\tau^{(i)}}$ in the second step intuitively means
that Max chooses his best answering strategy and we compute the least
fixpoint based on this answering strategy. At some point no further
switches are possible and we have reached a fixpoint $a$, which need
not yet be the least fixpoint. Hence we use the techniques from
Section~\ref{se:proof-rules} to decrease $a$ and obtain a
new pre-fixpoint $a^{(i+1)}$, from which we can continue. The correctness of this
procedure partially follows from Theorem~\ref{th:fixpoint-sound-compl}
and Proposition~\ref{pr:increases}, however we also need to show the
following: first, we can compute
$a^{(i)} = \mu \mathcal{V}_{\tau^{(i)}}$ efficiently by solving a
linear program (cf. Lemma~\ref{le:linear-program}) by
adapting~\cite{condon92}. Second, the chain of the $a^{(i)}$
decreases, which means that the algorithm will eventually terminate
(cf. Theorem~\ref{th:ssg-terminate-correct}).

Strategy iteration from below is given in Fig.~\ref{fig:sib}. At
first sight, the algorithm looks simpler than strategy iteration from
above, since we do not have to check whether we have already reached
$\nu \mathcal{V}$, reduce and continue from there. However, in this
case the computation of $\mu \mathcal{V}_{\sigma^{(i)}}$ via a linear
program is more involved (cf. Lemma~\ref{le:linear-program}), since we
have to pre-compute (via greatest fixpoint iteration over $\Pow{V}$)
the nodes where Min can force a cycle based on the current strategy of
Max, thus obtaining payoff $0$.

This algorithm does not directly use our technique but we can use our
proof rules to prove the correctness of the algorithm
(Theorem~\ref{th:ssg-terminate-correct}). In particular, the proof that
the sequence $a^{(i)}$ increases is quite involved: we have to show
that
$a^{(i)} = \mu \mathcal{V}_{\sigma^{(i)}} \le \mu
\mathcal{V}_{\sigma^{(i+1)}} = a^{(i+1)}$. This could be done by
showing that $\mu \mathcal{V}_{\sigma^{(i+1)}}$ is a pre-fixpoint of
$\mathcal{V}_{\sigma^{(i)}}$, but there is no straightforward way to
do this. Instead, we prove this fact using our proof rules, by showing
that $a^{(i)}$ is below the least fixpoint of
$\mathcal{V}_{\sigma^{(i+1)}}$.

The algorithm generalises strategy iteration by Hoffman and
Karp~\cite{hk:nonterminating-stochastic-games}. Note that we cannot
simply adapt their proof, since we do not assume that the game is
stopping, which is a crucial ingredient. In the case where the game is
stopping, the two algorithms coincide, meaning that we also provide an
alternative correctness proof in this situation, while other
correctness proofs~\cite{condon92} are based on linear algebra and
inverse matrices.

\begin{lemmarep}
  \label{le:linear-program}
  The least fixpoints of $\mathcal{V}_\tau$ and $\mathcal{V}_\sigma$
  can be determined by solving linear programs.
\end{lemmarep}

\begin{proof}
  We adapt the linear programs found in the literature on simple
  stochastic games (see e.g.~\cite{condon92}).

  The least fixpoint $a = \mu \mathcal{V}_\tau$ can be determined by
  solving the following linear program:
  \begin{align*}
    \min &\sum_{v\in V}a(v)\\
    &a(v) = a(\tau(v)) &v\in \mathit{MIN} \\
    &a(v) \geq a(v') &\forall v'\in \suc_{\max}(v),v\in \mathit{MAX} \\
    &a(v) = \sum_{v' \in V} a(v')\cdot \suc_\mathrm{av}(v)(v')  &v\in \mathit{AV} \\
    &a(v) = w(v) &v\in \mathit{SINK}
  \end{align*}
  By having $a(v) \geq a(v')$ for all $v'\in \suc_{\max}(v)$ and
  $v\in\mathit{MAX}$ we guarantee
  $a(v) = \max_{v'\in\suc_{\max}(v)}$ $a^{(i)}(v')$ since we
  minimise. The minimisation also guarantees computation of the least
  fixpoint (in particular, nodes that lie on a cycle will get a value
  of $0$). Hence, the linear program correctly characterises
  $\mu \mathcal{V}_\tau$.

  \bigskip
  
  Given a strategy $\sigma$ for Max, we can determine
  $a = \mu \mathcal{V}_\sigma$ by solving the following linear program:
  \begin{align*}
    \max &\sum_{v\in V}a(v)\\
    &a(v) = 0 &v \in C_{\sigma} \\
    &a(v) \leq a(v')  &\forall v'\in \suc_{\min}(v), v\in
    \mathit{MIN}, v \not\in C_{\sigma} \\
    &a(v) = a(\sigma (v)) &v\in \mathit{MAX}, v
    \not\in C_{\sigma} \\
    &a(v) = \sum_{v' \in V} a(v')\cdot
    \suc_\mathrm{av}(v)(v') &v\in \mathit{AV}, v \not\in C_{\sigma} \\
    &a(v) = w(v) &v\in \mathit{SINK}
  \end{align*}
  The set $C_\sigma$ contains those nodes which will
  guarantee a non-terminating play if Min plays optimally, given the
  fixed Max-strategy $\sigma$. 

  The set $C_\sigma$ can again be computed via fixpoint-iteration by
  computing the greatest fixpoint of $c_\sigma$ via Kleene iteration
  on $\Pow{V}$ from above:
  \begin{eqnarray*}
    c_\sigma \colon \Pow{V} & \to & \Pow{V} \\
    c_\sigma(V') & = & \{ v\in V\mid (v\in \mathit{MIN} \land
    \suc_{\min}(v)\cap V' \neq \emptyset) \lor (v\in \mathit{MAX}
    \land \sigma(v)\in V') \\
    && \qquad \mathop{\lor} (v\in \mathit{AV} \land
    \mathit{supp}(\suc_{\mathrm{av}}(v))\subseteq V') \}
  \end{eqnarray*}
  It is easy to see that $C_\sigma = \nu c_\sigma$ contains all those
  nodes from which Min can force a non-terminating play and hence
  achieve payoff $0$. (Note that there are further nodes that
  guarantee payoff $0$ -- namely sinks with that payoff and nodes
  which can reach such sinks -- but those will obtain value $0$ in any
  case.)
  
  We now show that this linear program computes
  $\mu \mathcal{V}_\sigma$: first, by requiring $a(v) \leq a(v')$ for
  all $v\in\mathit{MIN}$, $v'\in \suc_{\min}(v)$, we guarantee
  $a(v) = \min_{v'\in \suc_{\min}} a(v')$ since we maximise. Hence we
  obtain the greatest fixpoint of the following function
  $\mathcal{V}'_\sigma\colon [0,1]^V \to [0,1]^V$:
  \begin{align*}
    \mathcal{V}'_\sigma(a)(v)& = \begin{cases}
      0 &v\in C_\sigma \\
      \sum_{v' \in V} a(v')\cdot \suc_\mathrm{av}(v)(v') &v\in \mathit{AV},
      v\not\in C_\sigma\\
      a(\sigma(v)) &v\in \mathit{MAX},
      v\not\in C_\sigma \\
      \min_{v' \in \suc_{\min}(v)} a(v') &v\in \mathit{MIN},
      v\not\in C_\sigma \\
      w(v) &v\in\mathit{SINK} \\
    \end{cases}
  \end{align*}

  It is easy to show that the least fixpoints of $\mathcal{V}'_\sigma$
  and $\mathcal{V}_\sigma$ agree, i.e., $\mu \mathcal{V}'_\sigma$ and
  $\mu \mathcal{V}_\sigma$:
  \begin{itemize}
  \item $\mu \mathcal{V}'_\sigma \le \mu \mathcal{V}_\sigma$ can be
    shown by observing that $\mathcal{V}'_\sigma \le
    \mathcal{V}_\sigma$.
  \item $\mu \mathcal{V}_\sigma \le \mu \mathcal{V}'_\sigma$ can be
    shown by proving that $\mu \mathcal{V}'_\sigma$ is a pre-fixpoint
    of $\mathcal{V}_\sigma$, which can be done via a straightforward
    case analysis.

    We have to show
    $\mathcal{V}_\sigma(\mu \mathcal{V}'_\sigma)(v) \le \mu
    \mathcal{V}'_\sigma(v)$ for all $v\in V$. We only spell out the
    case where $v\in \mathit{AV}$, the other cases are similar. In
    this case either $v\not\in C_\sigma$, which means that
    \[ \mathcal{V}_\sigma(\mu \mathcal{V}'_\sigma)(v) =
      \mathcal{V}'_\sigma(\mu \mathcal{V}'_\sigma)(v) = \mu
      \mathcal{V}'_\sigma(v). \] If instead $v\in C_\sigma$, we have
    that $\mathit{supp}(\suc_{\mathrm{av}}(v)) \subseteq C_\sigma$ and
    so $\mu \mathcal{V}'_\sigma(v') = 0$ for all
    $v'\in \mathit{supp}(\suc_{\mathrm{av}}(v))$. Hence
    \[ \mathcal{V}_\sigma(\mu \mathcal{V}'_\sigma)(v) =
      \sum_{v'\in V} \suc_{\mathrm{av}}(v)(v')\cdot
      \mu \mathcal{V}'_\sigma(v') = 0 = \mu \mathcal{V}'_\sigma(v) \]
  \end{itemize}

  If we can now show that $\mathcal{V}'_\sigma$ has a unique fixpoint, we
  are done. The argument for this goes as follows: assume that this
  function has another fixpoint $a'$ different from
  $\mu\mathcal{V}'_\sigma$. Clearly
  $\Ytop{V}{a'} \cap C_\sigma = \emptyset$, where
  $\Ytop{V}{a'} = \{v\in V\mid a'(v) \neq 0\} $. Hence, if we compare
  $(\mathcal{V}'_\sigma)_\#^a\colon
  \Pow{\Ytop{V}{a}}\to\Pow{\Ytop{V}{\mathcal{V}'_\sigma(a)}}$ (defined
  analogously to Lemma~\ref{le:approximation-ssg}) and $c_\sigma$
  above, we observe that
  $(\mathcal{V}'_\sigma)_\#^{a'} \subseteq
  c_\sigma|_{\Pow{\Ytop{V}{a'}}}$. (Both functions coincide, apart from
  their treatment of nodes $v\in\mathit{MIN}$, where $c_\sigma(V')$
  contains $v$ whenever one of its successors is contained in $V'$,
  whereas $(\mathcal{V}'_\sigma)_\#^{a'}(V')$ additionally requires
  that the value of this successor is minimal.) Since $a'$ is not the
  least fixpoint we have by Theorem~\ref{th:fixpoint-sound-compl} that
  \[ \emptyset \neq \nu (\mathcal{V}'_\sigma)_\#^{a'} \subseteq \nu
    (c_\sigma|_{\Pow{\Ytop{V}{a'}}}) \subseteq \nu c_\sigma = C_\sigma. \]
  This is a contradiction, since
  $\Ytop{V}{a'} \cap C_\sigma = \emptyset$ as observed above.

  This shows that $\mathcal{V}'_\sigma$ has a unique fixpoint and
  completes the proof.
  Note that if we do not explicitly require that the values of all
  nodes in $C_\sigma$ are $0$, $\mathcal{V}'_\sigma$ will potentially
  have several fixpoints and the linear program would not characterise
  the least fixpoint.
\end{proof}

\begin{theorem}
  \label{th:ssg-terminate-correct}
  Strategy iteration from above and below both terminate and compute
  the least fixpoint of $\mathcal{V}$.
\end{theorem}

\begin{proof}
  \noindent\emph{Strategy iteration from above:}
  
  We start by showing the following: Given any $a^{(i)}$ and a new
  switched Min-strategy $\tau^{(i+1)}$, i.e.,
  $\tau^{(i+1)} = \mathit{sw}_{\min}(\tau^{(i)},a^{(i)})$, then
  $a^{(i)}$ is a pre-fixpoint of $\mathcal{V}_{\tau^{(i+1)}}$. By
  choice of $\tau^{(i+1)}$ we have
  \begin{align*}
    \mathcal{V}_{\tau^{(i+1)}}(a^{(i)})(v)& =
    \begin{cases}
      a^{(i)}(\tau^{(i+1)}(v)) &v\in \mathit{MIN} \\
      \max_{v' \in \suc_{\max}(v)} a^{(i)}(v') &v\in \mathit{MAX} \\
      \sum_{v' \in V} a^{(i)}(v')\cdot \suc_\mathrm{av}(v)(v') &v\in \mathit{AV} \\
      w(v) &v\in \mathit{SINK}
    \end{cases} \\
    &= \begin{cases}
      \min_{v' \in \suc_{\min}(v)} a^{(i)}(v') &v\in \mathit{MIN} \\
      \max_{v' \in \suc_{\max}(v)} a^{(i)}(v') &v\in \mathit{MAX} \\
      \sum_{v' \in V} a^{(i)}(v')\cdot \suc_\mathrm{av}(v)(v') &v\in \mathit{AV} \\
      w(v) &v\in \mathit{SINK}
    \end{cases} \\
    &= \mathcal{V}(a^{(i)})(v) 
  \end{align*}
  By Lemma~\ref{le:Vmin>V>Vmax}
  we know that $\mathcal{V}\leq \mathcal{V}_{\tau^{(i)} }$, and since
  $a^{(i)}$ is a fixpoint of $\mathcal{V}_{\tau^{(i)}}$ we conclude
  \begin{align*}
    \mathcal{V}_{\tau^{(i+1)}}(a^{(i)})(v) = \mathcal{V}(a^{(i)})(v)
    \leq \mathcal{V}_{\tau^{(i)}}(a^{(i)})(v)=a^{(i)}(v)
  \end{align*}
  Thus we have $a^{(i+1)} \leq a^{(i)}$ (by Knaster-Tarski, since
  $a^{(i)}$ is a pre-fixpoint of $\mathcal{V}_{\tau^{(i+1)}}$ and
  $a^{(i+1)}$ is its least fixpoint). Furthermore we know that
  $a^{(i)}$ is not a fixpoint of $\mathcal{V}_{\tau^{(i+1)}}$
  (otherwise we could not have performed a switch) and hence
  $a^{(i+1)}$ is strictly smaller than $a^{(i)}$ for at least one
  input. Since there are only finitely many strategies we will
  eventually stop switching and reach a fixpoint $a = a^{(j)}$ for an
  index $j$.

  Then, if $V' = \nu \mathcal{V}_\#^a = \emptyset$ then $a$ is the
  least fixpoint and we conclude.

  Otherwise, we determine
  $a^{(j+1)} = a - (\inc{\mathcal{V}}{a}(V'))_{V'}$. By
  Proposition~\ref{pr:increases} (dual version), $a^{(j+1)}$ is a
  pre-fixpoint of $\mathcal{V}$. Now Min will choose her best strategy
  $\tau = \tau^{(j+2)} = \mathit{sw}_{\min}(\tau^{(i)},a^{(i+1)})$ and
  we continue computing $a^{(j+2)} = \mu \mathcal{V}_{\tau^{(j+2)}}$.
  First,
  observe that since $a^{(j+1)}$ is a pre-fixpoint of $\mathcal{V}$,
  it is also a pre-fixpoint of $\mathcal{V}_{\tau^{(j+2)}}$. In fact,
  $\mathcal{V}$ and $\mathcal{V}_{\tau^{(j+1)}}$ coincide on all nodes
  $v\not\in \mathit{MIN}$. If $v\in\mathit{MIN}$, we have
  \begin{eqnarray*}
    \mathcal{V}_{\tau^{(j+2)}}(a^{(j+1)})(v) & = &
    a^{(j+1)}(\tau^{(j+2)}(v)) \\
    & = & \min_{v'\in\suc_{\min}(v)}
      a^{(j+1)}(v') = \mathcal{V}(a^{(j+1)})(v) \le a^{(j+1)}(v). 
  \end{eqnarray*}
  Hence it follows by Knaster-Tarski that
  $a^{(j+2)} = \mu \mathcal{V}_{\tau^{(j+2)}} \le a^{(j+1)}$. In turn,
  $a^{(j+1)} < a^{(j)}$ since $V'$ is non-empty and hence also
  $a^{(j+2)} < a^{(j)}$ (where $<$ on tuples means means $\leq$ in all
  components and $<$ in at least one component.)

  This means that the chain $a^{(i)}$ is strictly
  descending. Hence, at each iteration we obtain a new strategy and, since the number of strategies is finite, the iteration will eventually stop.

  Hence the algorithm terminates and stops at the least fixpoint of
  $\mathcal{V}$.

  \bigskip
  
  \noindent\emph{Strategy iteration from below:}

  We start as follows: Assume $a$ is the least fixpoint of
  $\mathcal{V}_{\sigma}$, i.e.\ $a = \mu\mathcal{V}_\sigma$ and
  $\sigma'$ the new best strategy for Max obtained by switching with
  respect to $a$, i.e., $\sigma'=\mathit{sw}_{\max}(\sigma,a)$. We have
  to show that $a'= \mu \mathcal{V}_{\sigma '}$ lies above $a$
  ($a' \ge a$). Here we use our proof rules (see
  Theorem~\ref{th:soundness}) and show the following:
  \begin{itemize}
  \item First, observe that $a$ is a post-fixpoint of
    $\mathcal{V}_{\sigma '}$. For any $v\in V$ we have
    \begin{align*}
      a (v) = \mathcal{V}_{\sigma}(a)
      (v) &=
      \begin{cases}
        \min_{v'\in \suc_{\min}(v)} a(v') &v\in \mathit{MIN} \\
        a(\sigma(v)) &v\in \mathit{MAX} \\
        \sum_{v' \in V} a (v')\cdot \suc_\mathrm{av}(v)(v')
        &v\in \mathit{AV} \\
        w(v) &v\in \mathit{SINK}
      \end{cases} \\
      &\leq
      \begin{cases}
        \min_{v'\in \suc_{\min}(v)} a(v') &v\in \mathit{MIN} \\
        \max_{v'\in \suc_{\max}(v)} a(v')
        &v\in \mathit{MAX} \\
        \sum_{v' \in V} a (v')\cdot \suc_\mathrm{av}(v)(v')
        &v\in \mathit{AV} \\
        w(v) &v\in \mathit{SINK} \end{cases} \\
      &=
      \begin{cases}
        \min_{v'\in \suc_{\min}(v)} a(v') &v\in \mathit{MIN} \\
        a(\sigma'(v)) &v\in \mathit{MAX} \\
        \sum_{v' \in V} a (v')\cdot \suc_\mathrm{av}(v)(v')
        &v\in \mathit{AV} \\
        w(v) &v\in \mathit{SINK} \end{cases} \\
      &= \mathcal{V}_{\sigma'_{\max}}(a) (v)
    \end{align*}

  \item Next we show that
    $\nu(\mathcal{V}_{\sigma'})_*^{a} = \emptyset$, thus proving that
    $a \le \mu \mathcal{V}_{\sigma'} = a'$ by
    Theorem~\ref{th:soundness}. Note that
    $(\mathcal{V}_{\sigma'})_*^{a}\colon
    \Ytop{V}{a=\mathcal{V}_{\sigma'}(a)} \to
    \Ytop{V}{a=\mathcal{V}_{\sigma'}(a)}$, i.e., it restricts to those
    elements of $a$ where $a$ and $\mathcal{V}_{\sigma'}(a)$ coincide.

    Whenever $v\in\mathit{MAX}$ is a node where the strategy has been
    ``switched'' with respect to $a$, we have
    \[ \mathcal{V}_{\sigma'}(a)(v) = a(\sigma'(v)) > a(\sigma(v)) =
      a(v). \] The first equality above is true by the definition of
    $\mathcal{V}_{\sigma'}$ and the last equality holds since $a$ is a
    fixpoint of $\mathcal{V}_\sigma$.
    So if $v$ is a switch node, it holds that
    $v\not\in \Ytop{V}{a=\mathcal{V}_\sigma(a)}$. By contraposition if
    $v\in \Ytop{V}{a=\mathcal{V}_\sigma(a)}$, $v$ cannot be a switch node.

    We next show that $(\mathcal{V}_\sigma)_*^{a}$,
    $(\mathcal{V}_{\sigma'})_*^{a}$ agree on
    $\Ytop{V}{a=\mathcal{V}_{\sigma'}(a)} \subseteq \Ytop{V}{a} =
    \Ytop{V}{a=\mathcal{V}_\sigma(a)}$ (remember that $a$ is a
    fixpoint of $\mathcal{V}_\sigma$). It holds that
    \begin{eqnarray*}
      (\mathcal{V}_\sigma)_*^a(V') & = &
        \gamma^{\mathcal{V}_\sigma(a),\iota}(\mathcal{V}_\sigma(
        \alpha^{a,\iota}(V'))) \\
      (\mathcal{V}_{\sigma'})_*^a(V') & = &
        \gamma^{\mathcal{V}_{\sigma'}(a),\iota}(\mathcal{V}_{\sigma'}(
        \alpha^{a,\iota}(V'))) \cap
        \Ytop{V}{a=\mathcal{V}_{\sigma'}(a)}
    \end{eqnarray*}
    for a suitable constant $\iota$ and if we choose $\iota$ small
    enough we can use the same constant in both cases. Now let
    $v\in \Ytop{V}{a=\mathcal{V}_{\sigma'}(a)}$: by definition it
    holds that
    $v\in (\mathcal{V}_\sigma)_*^a(V') =
    \gamma^{\mathcal{V}_\sigma(a),\iota}(\mathcal{V}_\sigma(
    \alpha^{a,\iota}(V')))$ if and only if
    $\mathcal{V}_\sigma(\alpha^{a,\iota}(V'))(v) \ominus
    \mathcal{V}_\sigma(a)(v) \ge \iota$. Since, by the considerations
    above, $v$ is not a switch node,
    $\mathcal{V}_\sigma(b)(v) = \mathcal{V}_{\sigma'}(b)(v)$ for all
    $b$ and we can replace $\mathcal{V}_\sigma$ by
    $\mathcal{V}_{\sigma'}$, resulting in the equivalent statement
    $v\in
    \gamma^{\mathcal{V}_{\sigma'}(a),\iota}(\mathcal{V}_{\sigma'}(
    \alpha^{a,\iota}(V')))$, also equivalent to
    $v\in (\mathcal{V}_{\sigma'})_*^a(V')$.

    Thus
    $\nu (\mathcal{V}_{\sigma'})_*^{a} \subseteq \nu
    (\mathcal{V}_\sigma)_*^{a} = \emptyset$.

  \end{itemize}

  Hence we obtain an ascending sequence $a^{(i)}$. Furthermore,
  whenever we perform a switch, we know that $a^{(i)}$ is not a
  fixpoint of $\mathcal{V}_{\sigma^{(i+1)}}$ (otherwise we could not
  have performed a switch) and hence $a^{(i+1)}$ is strictly larger
  than $a^{(i)}$ for at least one input. Since there are only finitely
  many strategies we will eventually stop switching and reach the least
  fixpoint. 
\end{proof}

\begin{exa}
  The previous Example~\ref{ex:ssg} is well suited to explain our two algorithms.

  Starting with strategy iteration from above, we may guess
  $\tau^{(0)}(\min) = \textbf{1}$. In this case, Max would choose
  $\mathrm{av}$ as successor and we would reach a fixpoint, where
  each node except for $\bm{\varepsilon}$ is associated with a payoff
  of $1$. Next, our algorithm would detect the vicious cycle formed by
  $\min$, $\mathrm{av}$ and $\max$. We can reduce the values in this
  vicious cycle and reach the correct payoff values for each node.

  For strategy iteration from below assume that
  $\sigma^{(0)}(\max) = \mathrm{av}$. Given this strategy of Max,
  Min can force the play to stay in a cycle formed by $\min$,
  $\mathrm{av}$ and $\max$. Thus, the payoff achieved by the Max
  strategy $\sigma^{(0)}$ and an optimal play by Min would be $0$ for
  each of these nodes. In the next iteration Max switches and
  chooses $\bm{\varepsilon}$ as successor, i.e.
  $\sigma^{(1)}(\max) =\bm{\varepsilon}$, which results in the
  correct values.
\end{exa}
  
\subsection{Runtime results}

We implemented strategy iteration from above and from below -- in the
following abbreviated by SIA and SIB -- and classical Kleene iteration
(KI) in MATLAB. In Kleene iteration we terminate with a tolerance of
$10^{-14}$, i.e., we stop if the change from one iteration to the next
is below this value.

In order to test the algorithms we created random stochastic games
with $n$ nodes, where each Max, Min respectively average node has a
maximal number of $m$ successors. For each node we choose randomly one
of the four types of nodes. Sink nodes are given a random weight
uniformly in $[0,1]$. Max and Min nodes are randomly assigned to
successors and for an average nodes we assign a random number to each
of its successors, followed by normalisation to obtain a probability
distribution.

We performed 1000 runs with different randomly created systems for
each value of $n$ and $m = \frac{n}{2}$. Table \ref{table:runtime_results_1} shows the
runtimes in seconds and the number of iterations. Also, for
SIB, we display the number of nodes with a payoff of $0$ (for an
optimal play of Min) and the number of times SIA got stuck at any
other fixpoint which is not $\mu\mathcal{V}$ (all numbers --
runtime, iterations, etc. -- are summed
up over all
$1000$ runs).

\smallskip
\begin{table}
\begin{center}
  \begin{tabular}{|r||r|r|r||r|r|r||r|r|}
    \hline & \multicolumn{3}{c||}{runtime (seconds)} &
      \multicolumn{3}{c||}{number of iterations} &
      \multicolumn{1}{c|}{number nodes} & \multicolumn{1}{c|}{number of} \\
      \cline{1-7} $~~~n~~~$ & $~~$KI$~~$ & $~~$SIA$~~$ & $~~$SIB$~~$ &
      $~~$KI$~~$&
      $~~$SIA$~~$ & $~~$SIB$~~$ & $~~$payoff 0$~~$ & $~~$other fp$~~$ \\
      \hline \hline
      10 & 0.59 & 20.28 & 18.49 & 47302 & 2259 & 2152 & 2439 & 508 \\
      \hline
      20 & 1.05 & 31.71 & 25.96 & 30275 & 3620 & 3018 & 4714 & 743 \\
      \hline
      30 & 2.03 & 35.98 & 29.77 & 27361 & 3881 & 3275 & 7268 & 771 \\
      \hline
      40 & 3.77 & 38.84 & 32.67 & 26999 & 3850 & 3296 & 9806 & 756 \\
      \hline
      50 & 5.31 & 38.09 & 31.85 & 26604 & 3799 & 3215 & 12573 & 734 \\
      \hline
      60 & 7.63 & 40.33 &34.37 & 26467 & 3737 & 3218 & 15151 & 727 \\
      \hline
      70 & 10.77 & 45.00 & 37.50 & 26569 & 3751 & 3154 & 17473 & 751 \\
      \hline
      80 & 15.38 & 54.89 & 46.72 &26179 & 3713 & 3105 & 20031 & 752 \\
      \hline
      90 & 16.07 & 52.21 & 43.52 & 26401 & 3695 & 3083 & 22390 & 777 \\
      \hline
      100 & 19.46 & 60.29 & 50.88 & 26464 & 3654 & 3062 & 25163 & 751 \\
    \hline
  \end{tabular}
\end{center}

\caption{Experimental results for KI, SIA, SIB on randomly generated SSGs with weight of sink nodes in [0,1].} \label{table:runtime_results_1}

\end{table}

\smallskip

Note that SIB always performs slightly better than SIA. Moreover KI
neatly beats both of them.  Here we need to remember that KI only
converges to the solution and it is known that the rate of convergence
can be exponentially slow~\cite{c:algorithms-ssg}.

Note that the linear optimisation problems are quite costly to solve,
especially for large systems. Thus additional iterations are
substantially more costly compared to KI. Observe also that SIA has to
perform more iterations than SIB, which explains the slightly higher
runtime.

The number of nodes with a payoff of 0 seems to grow linearly with the
number of nodes in the system. The number of times SIA gets stuck at a
fixpoint different from $\mu\mathcal{V}$ however seems to be
independent of the system size and comparatively small.

We performed a second comparison (see Table \ref{table:runtime_results_2}), where we assigned to sink nodes a value
in {0, 1}, which is often done for simple stochastic games.

\smallskip

\begin{table}
\begin{center}
  \begin{tabular}{|r||r|r|r||r|r|r||r|r|r|r|r|}
    \hline & \multicolumn{3}{c||}{runtime (seconds)} &
      \multicolumn{3}{c||}{number of iterations} &
      \multicolumn{1}{c|}{number nodes} & \multicolumn{1}{c|}{number of} \\
      \cline{1-7} $~~~n~~~$ & $~~$KI$~~$ & $~~$SIA$~~$ & $~~$SIB$~~$ &
      $~~$KI$~~$ &
      $~~$SIB$~~$ & $~~$SIA$~~$ & $~~$payoff 0$~~$ & $~~$other fp$~~$ \\
      \hline \hline
      10 & 0.36 & 14.51 & 14.58 & 42547 & 1703 & 1702 & 5484 & 219 \\
      \hline
      20 & 1.00 & 19.85 & 19.98 & 29515 & 2385 & 2478 & 8168 &137 \\
      \hline
      30 & 1.97 & 20.45 & 20.77 & 27643 & 2367 & 2469 & 11502 & 33 \\
      \hline
      40 & 3.30 & 20.13 & 20.94 & 26761 & 2306 & 2383 & 14989 & 12 \\
      \hline
      50 & 4.96 & 20.24 & 20.94 & 26562 & 2253 & 2306 & 18821& 2 \\
      \hline
      60 & 6.87 & 20.57 & 21.19 & 26560 & 2176 & 2227 & 22573 & 0 \\
      \hline
      70 &9.14 & 21.95 & 22.35 & 26146 & 2142 & 2186 & 26260& 0 \\
      \hline
      80 & 11.73 & 24.69 & 24.94 &26235 & 2084 & 2131 & 30136 & 0 \\
      \hline
      90 & 14.73 & 28.90 & 28.71 & 26330 & 2066 & 2091 & 33930& 0 \\
      \hline
      100 & 18.22 & 34.75 & 34.84 & 26227 & 2051 & 2068 & 37496 & 0 \\
      \hline
  \end{tabular}
\end{center}

\caption{Experimental results for KI, SIA, SIB on randomly generated SSGs with weight of sink nodes in \{0,1\}.} \label{table:runtime_results_2}

\end{table}

\smallskip

Here, SIA performs very similar to SIB. The SIA approach seems to
suffer, since Max can easily find himself in a situation where he can
never reach a 1-sink, since only half of the sink nodes are of this
kind.  Additionally for these systems a significantly larger number of
nodes have a payoff of 0 and SIA is less likely to get stuck at a
fixpoint different from $\mu\mathcal{V}$. These factors seem to be
correlated since it is now ``harder'' for Min to choose a bad
successor (with a value greater than 0).

\section{Conclusion}
\label{ref:conclusion}

It is well-known that several computations in the context of system
verification can be performed by various forms of fixpoint iteration
and it is worthwhile to study such methods at a high level of
abstraction, typically in the setting of complete lattices and
monotone functions. Going beyond the classical results by
Tarski~\cite{t:lattice-fixed-point}, combination of fixpoint iteration
with approximations~\cite{CC:TLA,bkp:abstraction-up-to-games-fixpoint}
and with up-to techniques~\cite{p:complete-lattices-up-to} has proven
to be successful. Here we treated a more specific setting, where the
carrier set consists of functions from a finite set into an MV-chain
and the fixpoint functions are non-expansive (and hence monotone), and
introduced a novel technique to obtain upper bounds for greatest and
lower bounds for least fixpoints, also providing associated
algorithms. Such techniques are applicable to a wide range of examples
and so far they have been studied only in quite specific scenarios,
such as
in~\cite{bblmtv:prob-bisim-distance-automata,f:game-metrics-markov-decision,kkkw:value-iteration-ssg}.

In the future we plan to lift some of the restrictions of our
approach.
First, an extension to an infinite domain $Y$ would of course be
desirable, but since several of our results currently depend on
finiteness, such a generalisation does not seem to be easy. The
restriction to total orders, instead, seems easier to lift: in
particular, if the partially ordered MV-algebra $\bar{\monM}$ is of
the form $\monM^I$ where $I$ is a finite index set and $\monM$ an
MV-chain. (E.g., finite Boolean algebras are of this type.)  In this
case, our function space is
$\bar{\monM}^Y = \big(\monM^I\big)^{\raisebox{-2pt}{\scriptsize
    $Y$}} \cong \monM^{Y \times I}$ and we have reduced to the setting
presented in this paper. This will allow us to handle featured
transition systems~\cite{ccpshl:simulation-product-line-mc} for
compactly specifying software product lines in a single transition
system. There, transitions are equipped with boolean formulas that
specify for which products (or features) a transition can be taken.

There are several other application examples that did not fit into
this paper, but that can also be handled by our approach, for instance
coalgebraic behavioural
metrics~\cite{bbkk:coalgebraic-behavioral-metrics}.  While here we
introduced strategy iteration techniques for simple stochastic games,
we also want to check whether we can provide an improvement to value
iteration techniques, combining our approach
with~\cite{kkkw:value-iteration-ssg}. In this
context it is also interesting to consider more generic approaches to
strategy iteration, which is done for simple stochastic games in
\cite{ABS:GSIM} and in a lattice-theoretical settting in
\cite{bekp:lattice-strategy-iteration-arxiv}. The latter also
considers energy games~\cite{bcdgr:algorithms-mean-payoff-games} as a
new instance of our framework.

We also plan to study whether some examples can be handled with other
types of Galois connections: here we used an additive variant, but
looking at multiplicative variants (multiplication by a constant
factor) might also be fruitful.

\smallskip

\emph{Acknowledgements:} We are grateful to Ichiro Hasuo for making us
aware of stochastic games as an application domain. Furthermore we would
like to thank Timo Matt and Matthias Kuntz for their help with
experiments and implementation.

\bibliographystyle{alphaurl}
\bibliography{references}

\appendix

\end{document}